\documentclass[11pt]{article}

\usepackage{preamble}

\usepackage{cite}
\usepackage{orcidlink}
\usepackage{bbm}
\usepackage{makecell}
\usepackage{mathtools}

\usepackage{pgfplots}
\pgfplotsset{compat=1.18}
\usetikzlibrary{calc}
\usetikzlibrary{spy}
\usepackage{pythonhighlight}

\usepackage{pifont}

\usepackage[dvipsnames]{xcolor}
\usepackage{tikz}
\usetikzlibrary{fit, calc, backgrounds,perspective,3d, arrows.meta}
\usepackage{wrapfig}
\usepackage{enumitem}
\usepackage{booktabs}

\usepackage{mathtools}

\usepackage{hyperref}
\hypersetup{
    colorlinks=true,
    linkcolor=Blue,
    citecolor=Mahogany,
    urlcolor=Mahogany,
}

\usepackage{listings}

\lstdefinestyle{abstractcode}{
  basicstyle=\ttfamily\small,
  keywordstyle=\bfseries,
  commentstyle=\itshape\color{gray},
  columns=fullflexible,
  keepspaces=true,
  frame=single,
  mathescape=true,
  xleftmargin=1em,
  framexleftmargin=1em
}

\definecolor{pykeyword}{RGB}{0,70,140}
\definecolor{pycomment}{RGB}{120,120,120}
\definecolor{pyfunc}{RGB}{90,0,130}   

\definecolor{pycond}{RGB}{180,120,0} 

\lstdefinestyle{pythonabstract}{
    language=Python,
    basicstyle=\ttfamily\small,
    keywordstyle=\bfseries\color{pykeyword},
    commentstyle=\itshape\color{pycomment},
    showstringspaces=false,
    columns=fullflexible,
    keepspaces=true,
    frame=single,
    xleftmargin=1em,
    framexleftmargin=1em,
    tabsize=4,
    emph={[1]activate,active,f1,f2,f3,range},
    emphstyle={[1]\color{pyfunc}},
    emph={[2]condition1,condition2,condition3},
    emphstyle={[2]\color{pycond}}
}

\definecolor{darkred}{rgb}{0.6,0.0,0.0}
\definecolor{darkgreen}{rgb}{0,0.50,0}
\definecolor{lightblue}{rgb}{0.0,0.42,0.91}
\definecolor{orange}{rgb}{0.99,0.48,0.13}
\definecolor{grass}{rgb}{0.18,0.80,0.18}
\definecolor{pink}{rgb}{0.97,0.15,0.45}

\lstdefinestyle{colored}{ %
  basicstyle=\ttfamily,
  backgroundcolor=\color{white},
  commentstyle=\color{green}\itshape,
  keywordstyle=\color{blue}\bfseries,
  stringstyle=\color{red},
}
\lstdefinelanguage{PythonPlus}[]{Python}{
  morekeywords=[1]{,as,assert,nonlocal,with,yield,self,True,False,None,} 
  morekeywords=[2]{,__init__,__add__,__mul__,__div__,__sub__,__call__,__getitem__,__setitem__,__eq__,__ne__,__nonzero__,__rmul__,__radd__,__repr__,__str__,__get__,__truediv__,__pow__,__name__,__future__,__all__,}, 
  morekeywords=[3]{,object,type,isinstance,copy,deepcopy,zip,enumerate,reversed,list,set,len,dict,tuple,range,xrange,append,execfile,real,imag,reduce,str,repr,}, 
  morekeywords=[4]{,Exception,NameError,IndexError,SyntaxError,TypeError,ValueError,OverflowError,ZeroDivisionError,}, 
  morekeywords=[5]{,ode,fsolve,sqrt,exp,sin,cos,arctan,arctan2,arccos,pi, array,norm,solve,dot,arange,isscalar,max,sum,flatten,shape,reshape,find,any,all,abs,plot,linspace,legend,quad,polyval,polyfit,hstack,concatenate,vstack,column_stack,empty,zeros,ones,rand,vander,grid,pcolor,eig,eigs,eigvals,svd,qr,tan,det,logspace,roll,min,mean,cumsum,cumprod,diff,vectorize,lstsq,cla,eye,xlabel,ylabel,squeeze,}, 
}
\lstdefinelanguage{PyBrIM}[]{PythonPlus}{
  emph={d,E,a,Fc28,Fy,Fu,D,des,supplier,Material,Rectangle,PyElmt},
}
\lstdefinestyle{colorEX}{
  basicstyle=\ttfamily,
  backgroundcolor=\color{white},
  commentstyle=\color{darkgreen}\slshape,
  keywordstyle=\color{blue}\bfseries,
  keywordstyle=[2]\color{blue}\bfseries,
  keywordstyle=[3]\color{grass},
  keywordstyle=[4]\color{red},
  keywordstyle=[5]\color{orange},
  stringstyle=\color{darkred},
  emphstyle=\color{pink}\underbar,
}
\colorlet{literatecolour}{magenta!90!black}

\lstdefinestyle{P}{emph={[2]c1,c2,c3},
    emphstyle={[2]\color{blue}},
    emph={[3]True},
     emph={[1]activate,active,f1,f2,f3,range},
    emphstyle={[1]\color{pyfunc}},
    emphstyle={[3]\color{blue!50!green}},
    literate=*%
{\%}{{\literatecolour:}}{1}%
{:}{{\literatecolour:}}{1}%
{=}{{\literatecolour=}}{1}%
{-}{{\literatecolour-}}{1}%
{+}{{\literatecolour+}}{1}%
{*}{{\literatecolour*}}{1}%
{**}{{\literatecolour{**}}}2%
{/}{{\literatecolour/}}{1}%
{//}{{\literatecolour{//}}}2%
{!}{{\literatecolour!}}{1}%
{[}{{\literatecolour[}}{1}%
{]}{{\literatecolour]}}{1}%
{<}{{\literatecolour<}}{1}%
{>}{{\literatecolour>}}{1}%
{>>>}{\pythonprompt}{3}%
,%
}

\newcommand*{\pythin}{\lstinline[style=P,keepspaces=true]}

\lstnewenvironment{pythonPrime}[1][]{\lstset{style=pythonhighlight-style,backgroundcolor=\color{yellow!5!white}, emph={[2]c1,c2,c3},
    emphstyle={[2]\color{blue}},
    emph={[3]True, switch, case},
     emph={[1]activate,active,f1,f2,f3,range},
    emphstyle={[1]\color{pyfunc}},
    emphstyle={[3]\color{blue!50!green}}
    }}{}

\definecolor{rgreen}{RGB}{10,180,20}

\usepackage{framed}

\crefname{equation}{Eq.}{Eqs.}

\title{Near-Optimal Encodings of Cardinality Constraints} 

\author{Andrew Krapivin\,\orcidlink{0009-0003-0227-7660}}
\author{Benjamin Przybocki\,\orcidlink{0009-0007-5489-1733}}
\author{Bernardo Subercaseaux\,\orcidlink{0000-0003-2295-1299}}

\affil{Carnegie Mellon University \\ {\upshape\ttfamily[akrapivi,bprzyboc,bsuberca]@andrew.cmu.edu}}
\date{}

\newcommand{\eo}{\textsf{ExactlyOne}}

\newcommand{\lrp}[1]{\left( #1 \right)}

\newcommand{\ov}{\mathsf{ov}}

\begin{document}

\maketitle

\begin{abstract}
We present several novel encodings for cardinality constraints, which use fewer clauses than previous encodings 
and, more importantly, introduce new generally applicable techniques for constructing compact encodings.
First, we present a CNF encoding for the $\amo(x_1,\dots,x_n)$ constraint using $2n + 2 \sqrt{2n} + O(\sqrt[3]{n})$ clauses, thus refuting the conjectured optimality of Chen's product encoding. Our construction also yields a smaller monotone circuit for the threshold-2 function, improving on a 50-year-old construction of Adleman and incidentally solving a long-standing open problem in circuit complexity. 
On the other hand, we show that any encoding for this constraint requires at least $2n + \sqrt{n+1} - 2$ clauses, which is the first nontrivial unconditional lower bound for this constraint and answers a question of Ku{\v c}era, Savick{\'{y}}, and Vorel.
We then turn our attention to encodings of $\amk(x_1,\dots,x_n)$, where we introduce
\emph{grid compression}, a technique inspired by hash tables, to give encodings using $2n + o(n)$ clauses as long as $k = o(\sqrt[3]{n})$ and $4n + o(n)$ clauses as long as $k = o(n)$. Previously, the smallest known encodings were of size $(k+1)n + o(n)$ for $k \le 5$ and $7n - o(n)$ for $k \ge 6$.
\end{abstract}

\section{Introduction} \label{sec-intro}

Cardinality constraints are a fundamental building block used to encode problems into conjunctive normal form (CNF). Due to their ubiquitous nature and importance to SAT solving, cardinality constraints have been extensively studied by the SAT community from both theoretical and experimental perspectives (see, e.g., \cite{totalizer,sinz,robust,Chen2010ANS,product-k,cardinality-networks,hash,lower-bound,empirical-study,cnf-symmetric,reeves}).

The most basic cardinality constraint is $\amo(x_1,\dots,x_n)$, which asserts that at most one boolean variable $x_i$ is true. While this constraint can be encoded using $\binom{n}{2}$ clauses:
\[
    \amo(x_1,\dots,x_n) \Longleftrightarrow \bigwedge_{1 \le i < j \le n} (\overline{x_i} \lor \overline{x_j}),
\]
this quadratic blowup in the number of clauses is problematic when $n$ is large. Fortunately, by introducing auxiliary variables, there are several encodings for $\amo(x_1,\dots,x_n)$ using only $O(n)$ clauses, such as the sequential counter encoding from \cite{sinz}. Empirically, SAT solvers perform much better when using linear encodings rather than the quadratic one~\cite{empirical-study}. Which cardinality constraint encoding is the most performant in practice depends on a number of factors, including the specific solver used and the encoding of the remaining constraints, but we focus on three factors that are amenable to theoretical analysis: the number of clauses in the encoding, the number of auxiliary variables, and whether the encoding is propagation complete (also known as arc consistent~\cite{arc-consistent}), a notion defined in~\Cref{sec-prelim}.

Prior to the present work, the smallest known encoding for $\amo(x_1,\dots,x_n)$ was Chen's product encoding~\cite{Chen2010ANS}, which uses $2n + 4 \sqrt{n} + O(\sqrt[4]{n})$ clauses and $2 \sqrt{n} + O(\sqrt[4]{n})$ auxiliary variables while also being propagation complete. Chen conjectured his encoding to be optimal with respect to the number of clauses. Our first contribution is to refute this conjecture:
\begin{restatable}[Multipartite Encoding]{theorem}{thmamo} \label{thm:am1}
    There exists a propagation-complete encoding for the\break $\amo(x_1, \ldots, x_n)$ constraint using $2n + 2 \sqrt{2n} + O(\sqrt[3]{n})$ clauses and $\sqrt{2n} + O(\sqrt[3]{n})$ auxiliary variables.
\end{restatable}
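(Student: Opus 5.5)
The plan is a \emph{pair-labelling} variant of Chen's product encoding. Chen places each $x_i$ at a cell $(r,c)$ of a $\sqrt n\times\sqrt n$ grid and pays for two separate $\amo$ constraints, one on the rows and one on the columns. Instead, I label each $x_i$ by an unordered pair $\{a_i,b_i\}$ of distinct indices in $[m]$, with $i\mapsto\{a_i,b_i\}$ injective. Since $\binom{m}{2}\ge n$ suffices, I can take $m=\sqrt{2n}+O(1)$ auxiliary variables $y_1,\dots,y_m$. The encoding has three parts:
\begin{itemize}
\item the $2n$ binary clauses $(\overline{x_i}\lor y_{a_i})$ and $(\overline{x_i}\lor y_{b_i})$;
\item an encoding $E$ of the constraint ``at most two of $y_1,\dots,y_m$ are true'';
\item nothing else.
\end{itemize}
Soundness holds because two distinct labels span at least three indices. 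So if $x_i$ and $x_j$ are both true with $i\ne j$, at least three $y$'s are forced true, and $E$ is violated. Completeness holds because if at most one $x_i$ is true, we set exactly $y_{a_i},y_{b_i}$ true (or no $y$ at all), which satisfies $E$. It then remains to show that $E$ can be realized with $2m+O(m^{2/3})$ clauses and $O(m^{2/3})$ further auxiliary variables. This gives the totals $2n+2\sqrt{2n}+O(\sqrt[3]{n})$ clauses and $\sqrt{2n}+O(\sqrt[3]{n})$ variables.

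For propagation completeness, I will argue as follows. Suppose $E$ is itself propagation complete for the at-most-two constraint on the $y$'s. If some $x_i$ is assigned true, unit propagation sets $y_{a_i},y_{b_i}$ true. Propagation completeness of $E$ then falsifies every other $y_c$. Each other $x_j$ has a label containing some $c\notin\{a_i,b_i\}$, since labels are distinct pairs, so the clause $(\overline{x_j}\lor y_c)$ falsifies $x_j$. The case of a conflicting partial assignment is handled the same way. A routine check covers partial assignments that fix some $y$'s directly.

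The main obstacle is the compact propagation-complete encoding $E$ of at-most-two on $m$ variables with $2m+o(m)$ clauses; standard sequential counters cost about $4m$. I would first try the same idea one level down with a grouping trick:
\begin{itemize}
\item Partition $y$ into $t\approx m^{2/3}$ blocks of size $s\approx m^{1/3}$.
\item Give each block $B_j$ two indicator variables, $u_j$ meaning ``at least one true in $B_j$'' and $v_j$ meaning ``at least two true''.
\item For each $y_c\in B_j$, add the clause $(\overline{y_c}\lor u_j)$; this accounts for the $m$ clauses in the first half of the $2m$ budget.
\item Add the within-block clauses forcing $v_j$, namely $(\overline{y_c}\lor\overline{y_d}\lor v_j)$. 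These number $O(ts^2)=O(m^{4/3})$, which is too many, so this is the place where the block size must be tuned. I would instead use a sequential-counter-style chain inside each block, which forces $v_j$ at cost about one clause per variable; this accounts for the second $m$ clauses, plus $O(t)$ extra.
\item Enforce at-most-two on the block level over the $u_j,v_j$: $v_j$ excludes all other $u$'s, and at most two $u$'s hold. This lives on $O(t)=O(m^{2/3})$ variables, and applying any linear encoding to it costs only $O(m^{2/3})$.
\end{itemize}
If this block arithmetic does not come out to exactly $2m$ leading clauses, the fallback is to recurse the pair-labelling idea on the $y$'s themselves. I would carefully verify that the recursion's lower-order terms telescope to $O(\sqrt[3]{n})$.
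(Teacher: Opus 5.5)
\textbf{There is a genuine gap.} Your reduction itself is fine. Labelling each $x_i$ by a pair in $[m]$ and adding the $2n$ binary clauses is exactly the paper's clique encoding. Given a propagation-complete $E$, your unit-propagation argument goes through. The problem is that the whole difficulty is pushed into a propagation-complete encoding of $\amt(y_1,\dots,y_m)$ with $2m+O(m^{2/3})$ clauses, and no such encoding is known. The best propagation-complete one, the generalized product encoding, costs $3m+O(m^{2/3})$. That would give $2n+3\sqrt{2n}+\dots$, which is worse than Chen's $2n+4\sqrt n$. The paper instantiates your first two bullets only with disjunctive grid compression, and explicitly gives up propagation completeness to do so.

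Your block construction does not close the gap, because it never forbids three true $y$'s in the same block. If $y_a,y_b,y_c\in B_j$ are true, then $u_j$ and $v_j$ are true and no other $u$ is, so every constraint you listed is satisfied. Concretely, $x_i$ and $x_{i'}$ with labels $\{a,b\}$ and $\{a,c\}$ could both be true. Repairing this requires an $\amt$ constraint inside each block, which is the problem you started with. Row/column projections cannot detect three cells of a $2\times 2$ square, which is why the generalized product encoding goes to three dimensions and pays three clauses per variable. Two smaller problems also remain. A Sinz-style chain that forces $v_j$ uses about three clauses per variable, not one. The pair-labelling recursion fails because the set of touched vertices does not determine the number of edges: a triangle and a two-edge path both touch three vertices.

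The missing idea is to \emph{not} use all pairs as labels. Choose them so that the two endpoints of every label lie in different blocks, i.e.\ use a complete $p$-partite graph. Then inside a block you may impose the stronger and cheaper constraint ``at most one vertex''. This is done with $\amo'(\{v_i\mid i\in P_k\};z_k)$: Chen's encoding plus clauses $\overline{r_i}\lor z_k$, which is propagation complete and uses $2q+O(\sqrt q)$ clauses. You then only need $\amt(z_1,\dots,z_p)$ over the $p$ block indicators, where the $3p+O(p^{2/3})$ clauses of the generalized product encoding are negligible. The paper takes $p=\lceil\sqrt[6]{n}\rceil+1$ and $q=\lceil\sqrt2\,\sqrt[3]{n}\rceil$. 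This gives $\binom p2 q^2\ge n$, $pq=\sqrt{2n}+O(\sqrt[3]{n})$ and $p\sqrt q=O(\sqrt[3]{n})$. So the multipartite graph loses only lower-order terms relative to your clique, and the total is $2n+2\sqrt{2n}+O(\sqrt[3]{n})$ clauses. Propagation completeness then runs exactly like your argument:
\begin{itemize}
\item $x_{\{i,j\}}$ forces $v_i$ and $v_j$;
\item the two $\amo'$ encodings force $z_k$ and $z_\ell$, and falsify the other vertices of $P_k\cup P_\ell$;
\item propagation completeness of the $\amt$ encoding falsifies the remaining $z$'s, and hence every vertex in the other parts.
\end{itemize}
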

Not only does our encoding improve upon the product encoding with respect to both the number of clauses and the number of auxiliary variables, but it also has a few conceptually interesting aspects. Perhaps surprisingly, our encoding includes an $\amt$ constraint within its definition, which makes it 3-CNF despite the fact that $\amo$ is a 2-CNF function. While all previous linear encodings for this constraint in the literature are 2-CNF, Ku{\v c}era, Savick{\'{y}}, and Vorel~\cite{lower-bound} asked whether optimal propagation-complete encodings of $\amo$ are 2-CNF; our result gives evidence to the contrary. Moreover, as we show in \Cref{sec-circuit}, our construction yields a smaller monotone boolean circuit for the \emph{threshold-2} function, the negation of $\amo$, which improves on a 50-year-old construction of Adleman,\footnote{Adleman's construction was first mentioned in print by Bloniarz~\cite{bloniarz1979}.} answers a 47-year-old open question from Bloniarz~\cite[p.~158]{bloniarz1979}, and has implications for the so-called single-level conjecture~\cite{lenz-wegener,bublitz,mirwald-schnorr,amano-maruoka,single-level}.

Our encoding originates from reinterpreting Chen's product encoding, traditionally presented in terms of a grid, as involving a complete bipartite graph. With this change of perspective, any graph with $n$ edges can form the basis of an encoding of $\amo$, with $2n$ clauses that relate each edge to a pair of auxiliary variables corresponding to its endpoints and additional constraints depending on the graph structure.
Our improvement in the second-order term comes from using a complete multipartite graph with $\omega(1)$ parts, which has edge density $1$ rather than $1/2$ as in a complete bipartite graph; this allows for fewer vertices (i.e., auxiliary variables) for the same number of edges.

With regard to lower bounds, Ku{\v c}era, Savick{\'{y}}, and Vorel~\cite{lower-bound} proved that every propagation-complete encoding of $\amo(x_1, \ldots, x_n)$ requires $2n + \sqrt{n} - 2$ clauses for $n \ge 7$, so \Cref{thm:am1} is close to optimal. On the other hand, prior to the present work, no nontrivial lower bound was known without assuming propagation completeness.\footnote{Sinz~\cite{sinz} claimed that every encoding of $\amk(x_1,\dots,x_n)$ has at least $n$ clauses for all $k \in [n-2]$, but his proof has an irreparable gap. Indeed, the claim is false in general: Ben-Haim, Ivrii, Margalit, and Matsliah~\cite{hash} showed that $\amk(x_1,\dots,x_n)$ can be encoded using $O(\log n)$ clauses when $k = n-O(1)$.} Our second contribution provides such a bound---which is in fact better than the bound assuming propagation completeness---and answers a question from \cite{lower-bound}:
\begin{restatable}{theorem}{thmlowerbound} \label{thm:lower-bound}
    Every encoding of the $\amo(x_1, \ldots, x_n)$ constraint has at least $2n + \sqrt{n+1} - 2$ clauses for $n \ge 8$.
\end{restatable}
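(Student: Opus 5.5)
We argue by induction on $n$, eliminating auxiliary variables and input variables one at a time, in the spirit of the Ku{\v c}era--Savick{\'y}--Vorel lower bound but without relying on propagation completeness. Let $F(x,y)$ be an encoding of $\amo(x_1,\dots,x_n)$ with auxiliary variables $y$. We may assume $F$ is minimal: no clause can be removed, and no literal can be dropped from any clause, without changing the constraint. Two observations drive the argument.

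\textbf{Each $x_i$ occurs positively and negatively in $F$.} Suppose $x_i$ occurred only negatively. Setting $x_i=1$ together with any $x_j=1$ is excluded. If we set all other $x_j=0$, there must be an extension $y$ satisfying $F$. Flipping $x_i$ to $0$ keeps every clause satisfied, since only negative occurrences of $x_i$ are affected and they become true. This gives a satisfying extension of the all-zero input, which is consistent. The problem is the other direction. If $x_i$ occurred only positively, then from $x_i=1, x_j=1$ (which must be unsatisfiable) we could set $x_i=0$, keep the same $y$, and obtain a satisfying assignment of $x_j=1$ alone. So $x_i$ must occur positively, and it is not excluded that $x_i$ occurs only positively. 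To avoid this gap, we instead count occurrences of input literals. The target is to show that there are at least $2n$ clauses containing an input literal, plus roughly $\sqrt{n}$ further clauses.

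\textbf{The key step: resolution elimination.} If some $x_i$ occurs in exactly one clause negatively (or positively), we resolve it away. DP-eliminating $x_i$ on $F$ with $x_i:=0$ yields an encoding of $\amo$ on $n-1$ variables. We then show this removes at least two clauses, i.e.\ clauses$(n) \ge$ clauses$(n-1)+2$, using the fact that $x_i$ needs at least one positive clause and at least one negative clause. The simplest route is substitution: set $x_i=0$, which deletes every clause containing $\overline{x_i}$ and shrinks every clause containing $x_i$. We need at least two clauses to vanish or become redundant. If $x_i$ has at least two negative occurrences, substitution deletes two clauses. Otherwise $x_i$ has exactly one negative occurrence $C=\overline{x_i}\lor D$. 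In that case we show that some positive clause $x_i\lor E$ becomes subsumed or duplicated after substitution, or that $D$ must mention a fresh auxiliary that can be eliminated together with $x_i$. This is a case analysis on auxiliaries appearing only in few clauses, in the style of ``degree'' arguments for monotone circuits.

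\textbf{The $\sqrt{n+1}$ term.} Each elimination step reduces $n$ by one and the clause count by at least two. At some step, however, every $x_i$ must have at least two negative and at least one positive occurrence, and additional auxiliary clauses must exist. At this point we perform a global counting argument. Let $m$ be the number of auxiliaries. The negative-literal clauses distinguish pairs: for every pair $i\neq j$ there must be a refutation of $x_i\wedge x_j$. Encoding the auxiliaries involved shows that the $n$ inputs need $m$ auxiliaries with $\binom{m}{2}$ or $2^m$-type capacity, which forces on the order of $\sqrt{n}$ extra clauses beyond $2n$. Concretely, we would associate to each $x_i$ the set $S_i$ of auxiliary literals implied by $x_i$. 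The sets $S_i$ must be pairwise ``conflicting'' via clauses among auxiliaries. If those extra clauses number $t$, then a pair-covering bound gives $n+1\le (t+2)^2$, i.e.\ $t\ge\sqrt{n+1}-2$. The hypothesis $n\ge 8$ covers the base cases, which we check by hand or by computer.

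\textbf{Main obstacle.} The hardest part is the reduction step when an input has a single negative occurrence: without propagation completeness, implied auxiliary literals are not forced by unit propagation, so this case needs a careful semantic replacement argument. I would first try using a fixed satisfying extension for each single-true input.
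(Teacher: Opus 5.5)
There is a genuine gap, beginning with the reduction step. Its premise is false. An input need not occur positively at all: the direct encoding and Chen's product encoding contain only negative input literals. The valid observation is the reverse one. If $x_i$ occurred only positively, setting $x_i$ to $\top$ in a satisfying extension of the input with only $x_j$ true would satisfy $x_i\wedge x_j$, so $x_i$ must occur \emph{negatively}.

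More seriously, an increment of two clauses cannot carry the induction. Write $\mathcal{P}(n)$ for the minimum size. Then $\mathcal{P}(n)\ge\mathcal{P}(n-1)+2$ yields only $2n+\sqrt{n}-2$. Since clause counts are integers, this is one short of $2n+\sqrt{n+1}-2$ whenever $n$ is a perfect square. In general, a chain of $+2$ steps can never raise the square-root term above its value where the chain bottoms out.

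The paper needs a saving of $3$ in every non-regular case. Let $Q_{\varphi,i}$ be the clauses containing $\overline{x_i}$. The argument runs in three steps:
\begin{enumerate}
\item It shows $|Q_{\varphi,i}|\ge 2$ in a prime minimum encoding. A lone clause $\overline{x_i}\lor D$ can be deleted while an auxiliary literal $\ell\in D$ takes over the role of the $i$th input, contradicting minimality. Your idea of fixing a satisfying extension of each single-true input is what makes this work.
\item It adapts Ku{\v c}era--Savick{\'{y}}--Vorel, with $\models$ in place of $\vdash_1$, to save $3$ clauses whenever some $|Q_{\varphi,i}|\ge 3$, or some clause of $Q_{\varphi,i}$ contains a second input variable.
\item Otherwise it shrinks the clauses of $Q_{\varphi,i}$ to binary ones at no cost.
\end{enumerate}
The promotion of $\ell$ is also why the induction runs over \emph{generalized} $P$-encodings, i.e.\ encodings of $\amo$ or of $\eo$. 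The new formula need not accept the all-zero input, so $\amo$ encodings alone are not closed under the reductions you need.

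Your $\sqrt{n+1}$ step has no mechanism. The inequality ``$n+1\le(t+2)^2$'' is just the target restated. Without propagation completeness, the auxiliary literals semantically implied by $x_i$ give no handle on counting clauses. Counting occurrences of input literals does not give $2n$ clauses either, since one clause may contain several of them.

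The paper gets exactly $2n$ from \emph{regular form}. There each $x_i$ occurs negatively in exactly two clauses $\overline{x_i}\lor\ell_{i,1}$ and $\overline{x_i}\lor\ell_{i,2}$, with auxiliary literals $\ell_{i,1},\ell_{i,2}$. These pairs are $n$ distinct edges of a graph. Its maximum degree $\Delta$ and minimum vertex-cover size $\kappa$ satisfy $\Delta\kappa\ge n$, so either $\Delta>\sqrt{n+1}-1$ or $\kappa\ge\sqrt{n+1}+1$.

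The extra clauses come from a switching argument:
\begin{enumerate}
\item Fix an assignment $\alpha$ satisfying the remaining clauses, restricted by the input with only $x_1$ true, together with $\ell_{1,1}\wedge\ell_{1,2}$.
\item For another edge $i$, setting both its literals true gives an assignment $\alpha_i$ that must falsify some clause $C_i$, because two selected edges are contradictory.
\item Pick a witness literal $m_i$ of that edge, true under $\alpha_i$, false under $\alpha$, with $\overline{m_i}\in C_i$.
\item If $C_i=C_j$ with $m_i\neq m_j$, both witnesses are made true by both flips, so the two edges coincide, which is impossible.
\end{enumerate}
Hence distinct witnesses give distinct clauses. This yields at least $\Delta-1$ clauses beyond the $2n$ in the star case, and at least $\kappa-3$ in the cover case.

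Finally, ``checking the base case by computer'' is not a finite search as stated, since the auxiliaries are unrestricted. The paper instead uses the reduction itself to get $|\varphi|\ge\min(2n,3n-6)$, so that in the non-regular case $\mathcal{P}(8)\ge\mathcal{P}(7)+3\ge 17$.
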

Together with \Cref{thm:am1} (or Chen's result), this implies that the minimum number of clauses in an encoding of $\amo(x_1, \ldots, x_n)$ is $2n + \Theta(\sqrt{n})$.

Next, we turn to the constraint $\amk(x_1,\dots,x_n)$, which asserts that at most $k$ of the boolean variables $x_1,\dots,x_n$ are true. Sinz~\cite{sinz} gave an encoding for this using $7n - 3 \floor{\log n} - 6$ clauses and $2n-2$ auxiliary variables. When $k$ is large relative to $n$, this is the smallest known encoding for $\amk(x_1,\dots,x_n)$ as measured by the number of clauses. Our third contribution is to present smaller encodings when $k$ is small relative to $n$:
\begin{restatable}[Grid Compression]{theorem}{thmgridcompression} \label{thm:conjunctive-compression}
    There is an encoding of the $\amk(x_1,\dots,x_n)$ constraint using $4n + O\lrp{\sqrt[3]{kn^2}}$ clauses and $O\lrp{\sqrt[3]{kn^2}}$ auxiliary variables.
\end{restatable}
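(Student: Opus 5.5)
The plan is to view the $n$ variables as the cells of a grid and to replace the $\amk$ constraint on $n$ variables by $\amk$-type constraints on a much smaller ``compressed'' set of auxiliary variables. Two symptoms are forced by the target bound: each $x$ may appear in only about four clauses, and all remaining clauses must be charged to lower-order terms. Concretely, I would index the variables as $x_{i,j}$ with $i \in [r]$ and $j \in [c]$, where $rc \ge n$ and $c = \Theta(\sqrt[3]{n/k})$ is chosen at the end. I would introduce a row indicator $R_i$ and a column indicator $C_j$, linked by the $2n$ clauses $\overline{x_{i,j}} \lor R_i$ and $\overline{x_{i,j}} \lor C_j$.

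On its own this is unsound, because $k+1$ true variables can hide inside few rows and few columns. The ``hash table'' idea is therefore to give each $x_{i,j}$ a second pair of clauses, pointing into a second, sheared family of lines. For example, I would use the diagonals $D_{(i+j) \bmod c}$, or more generally a second projection chosen so that each row meets each new line in at most one cell. This accounts for $4n$ clauses in total.

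The crux is the following collision argument. Suppose a set $S$ of true variables has $|S| \ge k+1$. Then either $S$ activates at least $k+1$ of the compressed lines, or two elements of $S$ collide in a way that the second projection detects. In the first case a lower-order $\amk$ constraint on the indicator variables is violated. In the second case the collision is reported by a small gadget, for instance an ``at least two'' flag for each line, built from the indicators rather than from the $x$'s.

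The compressed side is then handled with Sinz's sequential-counter encoding of $\amk$ (or any $O(km)$-clause encoding on $m$ inputs), applied only to the auxiliary indicators. This costs $O(k \cdot (\text{number of lines}))$ clauses. The collision-detection gadget is charged per pair of lines, roughly $O(c^2)$ clauses per group, for a total of about $O(k c^2 + n/c)$. Balancing $k c^2$ against $n/c$ gives $c = (n/k)^{1/3}$. Both terms then equal $\Theta(\sqrt[3]{kn^2})$, which matches the second-order term in \Cref{thm:conjunctive-compression}, as well as the auxiliary-variable count.

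Correctness has two directions.
\begin{itemize}
\item Completeness: for any assignment with at most $k$ true $x$'s, set every indicator to the OR of its line. Then at most $k$ indicators in each family are true, and the counters and collision gadgets can be satisfied by their canonical values.
\item Soundness: show that any satisfying extension has at most $k$ true $x$'s, using the collision argument above.
\end{itemize}

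The main obstacle I expect is exactly this soundness step: designing the second projection and the collision gadget so that $k+1$ true inputs are always caught, while the inputs themselves are charged only $4n$ clauses. My first attempt would be an algebraic (Latin-square or affine-line) choice of the second family, so that any two cells share at most one line. I would then argue that a configuration with few active lines in both families forces some line to carry two true inputs, and that this double occupancy is detectable from the indicators alone.
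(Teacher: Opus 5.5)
\paragraph{The gap: soundness, which your architecture cannot supply.} Your proposal leaves the soundness step open, and it cannot be closed within the architecture you describe. In your scheme every input occurs only in binary clauses $\overline{x_{i,j}} \lor y$, with $y$ one of its (at most four) line indicators. Every other clause (counters, collision gadgets) mentions only auxiliary variables. A line indicator is a single bit recording whether the line is hit, so no gadget ``built from the indicators'' can see how many true inputs a line carries.

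Concretely, let $k \ge 4$ and fix a cell $x'$. For each line through $x'$, pick some other cell on that line; every row, column and diagonal has at least two cells. Extend these cells to a set $S$ of exactly $k$ cells with $x' \notin S$. By completeness, some satisfying assignment $\alpha$ makes exactly $S$ true, and $\alpha$ must set every indicator of $x'$ to true. The only clauses containing $\overline{x'}$ are the clauses $\overline{x'} \lor y$, so $\alpha$ with $x'$ flipped to true still satisfies every clause. That assignment has $k+1$ true inputs, so the encoding is unsound.

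More generally, with $t$ such projection clauses per input and $k \ge t$, correctness forces the indicator sets of the inputs to be $k$-cover-free. Hence each input must own an indicator used by no other input, which means at least $n$ auxiliary variables, incompatible with $O(\sqrt[3]{kn^2})$. Rows, columns, diagonals, or any Latin-square or affine family of lines all fail for this reason.

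Your accounting also has a separate problem. A sequential counter costs $O(k)$ clauses per input. On your $\Theta(\sqrt[3]{kn^2})$ row indicators that overshoots the target by a factor of $k$. You need a $k$-independent linear-size encoding of $\amk$, such as Sinz's parallel counter.

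\paragraph{The missing idea: a count-preserving compression routed by auxiliary variables.} The paper arranges the inputs as a $\lceil n/m\rceil \times m$ grid $M$ with $m = \Theta(\sqrt[3]{kn^2})$. It copies each occupied column into a distinct column of a smaller grid $L$ with $\ell = \Theta(\sqrt[3]{k^2 n})$ columns, using these clauses:
\begin{itemize}
\item $\overline{M_{i,j}} \lor c_j$ ($n$ clauses);
\item $\overline{M_{i,j}} \lor \overline{\mathsf{copy}_{j,p}} \lor L_{i,p}$ for $p \in H_j$ with $|H_j| = 3$ ($3n$ clauses);
\item $\overline{c_j} \lor \bigvee_{p \in H_j} \mathsf{copy}_{j,p}$;
\item an $\amo$ on the copy variables targeting each column $p$ of $L$;
\item a linear-size $\amk$ on the $\Theta(n\ell/m) = \Theta(\sqrt[3]{kn^2})$ cells of $L$.
\end{itemize}

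For soundness, the $\amo$ constraints make $j \mapsto p$ injective, so $(i,j) \mapsto (i,p)$ is injective too. Hence $k+1$ true inputs yield $k+1$ true cells of $L$. For completeness, any at most $k$ of the sets $H_j$ must have distinct representatives. The paper obtains this from Hall's theorem and a union bound over random $3$-subsets of $[\ell]$.

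These ternary clauses escape the cover-free obstruction above. Whether a true input forces a given cell of $L$ depends on the auxiliary choice $\mathsf{copy}_{j,p}$, and an unoccupied column can have $c_j$ false. The extra input can therefore never be ``absorbed for free'' the way it is in your scheme.
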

\begin{restatable}[Disjunctive Grid Compression]{theorem}{thmdisjunctivegridcompression} \label{thm:disjunctive-compression}
    There is an encoding of the $\amk(x_1,\dots,x_n)$ constraint using $2n + O\lrp{\sqrt{nk^3\log^3_k n}}$ clauses and $O\lrp{\sqrt{nk^3\log^3_k n}}$ auxiliary variables.
\end{restatable}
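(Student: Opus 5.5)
The plan is to keep only two clauses per input variable, each linking $x_i$ to auxiliary ``coordinate'' variables, and to push all the counting into a constraint on far fewer auxiliary variables. That constraint will be built from hash families and recursive $\amk$ encodings, which is where the $\sqrt{nk^3\log_k^3 n}$ term comes from. Concretely, I will index the inputs as $x_{r,c}$ with $r \in [R]$ and $c \in [C]$, where $RC = n$. I add row variables $u_r$ and column variables $v_{G,c}$, where $G$ ranges over groups of rows, and the $2n$ clauses are $\overline{x_{r,c}} \lor u_r$ and $\overline{x_{r,c}} \lor v_{G(r),c}$. Every other clause mentions only auxiliary variables, and there are $O(R + (R/g)\,C)$ of them, where $g$ is the group size.

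Soundness is the direction that makes $x_{r,c}$ true only when the rest is satisfiable. For it, I will make the auxiliary formula $F(u,v,\dots)$ satisfiable whenever the set $T$ of true positions has $|T| \le k$. This is done by setting exactly the coordinates of $T$ true and filling the remaining auxiliaries with a witness. For the converse, if $|T| \ge k+1$, then the coordinates forced true must make $F$ unsatisfiable. The difficulty is that the coordinates of $T$ only determine the products $U_G \times V_G$ inside each group, and such a product may contain up to $k^2$ points even when $|T| \le k$. So $F$ cannot simply count points in $U_G \times V_G$.

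To fix this, I will add a ``disjunctive'' layer, which I expect to be the source of the name. I fix a $(k+1)$-perfect hash family $h_1,\dots,h_t\colon [C] \to [b]$ with $b = O(k^2)$ and $t = O(k \log_k n)$. Then every set of at most $k+1$ columns is separated by some $h_j$. In the auxiliary layer, $F$ existentially guesses, for each group, an index $j$ and a matching between the true rows and the hashed true columns. It then asserts that the total number of matched pairs is at most $k$, using a standard linear $\amk$ encoding such as Sinz's on these $O(\mathrm{poly}(k)\cdot R)$ pair variables. The intended argument is that $k+1$ true points always force $k+1$ distinct separated pairs, whereas $k$ points admit a consistent guess. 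The cost of this layer is roughly $(R/g)\cdot t \cdot b \cdot g$ plus $C\cdot R/g$ clauses. Balancing $R$, $C$ and $g$ against $n$ should give $O\bigl(\sqrt{n k^3 \log_k^3 n}\bigr)$, with the $\log_k^3$ factor coming from $t$ and from iterating the hashing over roughly $\log_k n$ levels.

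I expect the main obstacle to be the design of $F$ itself. It must be sound for $|T| \le k$ even though the auxiliary variables see only projections. It must also be complete, meaning that it catches every $(k+1)$-set even when an adversarial assignment to the guessed auxiliaries tries to hide points. My first attempt will be to require each true column variable $v_{G,c}$ to be ``explained'' by at least one guessed pair in its group. This is a clause of the form $\overline{v_{G,c}} \lor \bigvee_{r \in G} p_{G,r,c}$, and similarly for rows. Together with the perfect-hash separation, this should give a lower bound of $\max(|U_G|,|V_G|)$ on the pair count, and I would then need to argue that it can be sharpened to $|T \cap G|$. If this sharpening fails, the fallback is to recurse: treat the pair variables as a new $\amk$ instance and apply \Cref{thm:conjunctive-compression} to it. Since that instance has only $o(n)$ variables, its $4\times$ overhead is harmless.
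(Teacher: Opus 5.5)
There is a genuine gap. It lies in the interface between the inputs and $F$, not in the design of $F$.

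Every clause that mentions an input is either $\overline{x_{r,c}} \lor u_r$ or $\overline{x_{r,c}} \lor v_{G(r),c}$, and $F$ contains only auxiliary variables. So a set $T$ of true inputs influences the rest of the formula only through the two sets $\{r \mid (r,c) \in T\}$ and $\{(G(r),c) \mid (r,c) \in T\}$, and only monotonically.

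Now take $k \ge 2$ and groups of size $g \ge 2$. Choose $T$ with $|T| = k$ that contains $(r_1,c_1)$ and $(r_2,c_2)$, where $r_1 \neq r_2$ lie in the same group $G$ and $c_1 \neq c_2$, but does not contain $(r_1,c_2)$. Any auxiliary assignment $\alpha$ satisfying the formula under $T$ already sets $u_{r_1}$ and $v_{G,c_2}$ to true. Therefore $\alpha$ also satisfies the formula under $T \cup \{(r_1,c_2)\}$, which has $k+1$ true inputs.

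Consequently, no choice of $F$ can make the encoding correct. This covers perfect hash families, guessed matchings, explanation clauses $\overline{v_{G,c}} \lor \bigvee_{r\in G} p_{G,r,c}$, and a recursive call to \Cref{thm:conjunctive-compression} alike. It is also exactly why your hoped-for sharpening from $\max(|U_G|,|V_G|)$ to $|T \cap G|$ must fail. If instead $g = 1$, the $v$'s are just $n$ renamed copies of the inputs, nothing is compressed, and your own count $O(R + (R/g)C)$ of the remaining clauses becomes $\Theta(n)$. Your account of where $\log_k^3 n$ comes from ("iterating the hashing over $\log_k n$ levels") is likewise unsupported, but that is secondary.

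The missing idea is to put the disjunction in the clauses that touch the inputs, and to let it keep the pairing information. The paper arranges the inputs as $M_{i,j}$ in a $\lceil n/m\rceil \times m$ grid and uses the clauses
\[
\overline{M_{i,j}} \lor c_j \qquad\text{and}\qquad \overline{M_{i,j}} \lor \bigvee_{p\in H_j} L_{i,p}.
\]
Each true input is thus witnessed by a single variable $L_{i,p}$ that records its full row together with a hashed column, rather than by two separate coordinate variables. The column indicators $c_j$ serve only to forbid collisions. The clauses $\ov_p \lor \amo(\{c_j \mid p \in H_j\})$ and $\overline{L_{i,p}} \lor \overline{\ov_p}$ force distinct occupied columns into distinct columns of $L$. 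Hence $\amk$ on the $\lceil n/m\rceil\cdot\ell$ variables $L_{i,p}$ counts the true inputs correctly.

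For satisfiability, $\mathcal{H}$ is taken to be a $(k-1)$-cover-free Reed--Solomon family with $|H_j| = q = \Theta(k\log_k n)$ and $\ell = q^2$. This guarantees that each of at most $k$ occupied columns has a private slot. Choosing $m = \Theta(\sqrt{nk\log_k n})$ balances the $n\ell/m$ clauses and variables from $L$ against the $mq$ clauses from the overload constraints, and both equal $\sqrt{nk^3\log_k^3 n}$. In your architecture the disjunction sits inside $F$, after the row/column pairing has already been discarded, and that is too late to recover it.
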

In particular, for $k=o(n)$, \Cref{thm:conjunctive-compression} gives an encoding using $4n + o(n)$ clauses and $o(n)$ auxiliary variables; for $k = o(\sqrt[3]{n})$, \Cref{thm:disjunctive-compression} gives an encoding using $2n + o(n)$ clauses and $o(n)$ auxiliary variables. We show in \Cref{cor:lower-bound-amk} that encoding $\amk(x_1,\dots,x_n)$ requires $2(n-k) + \Omega(\sqrt{n-k})$ clauses, so \Cref{thm:disjunctive-compression} notably implies that the minimum number of clauses in an encoding of $\amk(x_1,\dots,x_n)$ is $2n + \widetilde{\Theta}(\sqrt{n})$ for any $k = \log^{O(1)} n$.

Unfortunately, neither our encodings for $\amk$ nor Sinz's encoding are propagation complete. The smallest known propagation-complete encoding when $k$ is a small constant is the \emph{generalized product encoding}~\cite{product-k}, which uses $(k+1)n + O(k^2 n^{k/(k+1)})$ clauses and $O(kn^{k/(k+1)})$ auxiliary variables for $k=o(\log n / \log \log n)$. For large $k$, the smallest known propagation-complete encoding combines the AKS sorting network~\cite{aks} and a CNF encoding of sorting networks~\cite{sorting-networks,cardinality-networks}, and it uses $O(n \log k)$ clauses and auxiliary variables. Nonetheless, the disjunctive grid compression encoding turns out to be competitive in practice for some benchmarks. While our motivation for this work was primarily theoretical, these preliminary empirical results highlight that our techniques can be practical, and they cast doubt on the conventional wisdom that propagation completeness is essential for performance.  We discuss this further in~\Cref{sec:discussion} and Appendix~\ref{sec:empirical}.

The encodings from~\Cref{thm:conjunctive-compression} and~\Cref{thm:disjunctive-compression} are based on a new technique that we call \emph{grid compression}. The technique consists of arranging the input variables into a grid $M$, which is then ``compressed'' into a smaller grid $L$. Then, we enforce that at most $k$ variables in $L$ are true, which is cheaper since $L$ is smaller. The techniques to map $M$ into $L$ are inspired by those used in the construction of hash tables~\cite[Chapter~11]{clrs}.

On top of grid compression, \Cref{thm:disjunctive-compression} leverages a novel paradigm we call \emph{disjunctive switching}, which aims to address a common problem that arises when translating algorithmic ideas into CNF encodings. When an algorithm can take different branches according to some condition (i.e., a \emph{switch}), its runtime will only depend on the taken branch, whereas na\"ive CNF encodings include clauses for each possible branch, and thus their size is based on the sum of all branches. In a nutshell, disjunctive switching introduces wide clauses stating that \emph{some} branch will be taken, and then succinctly enforces that branches whose condition is not met cannot be taken, thus making the taken branch the only way to satisfy the disjunction. Besides applying disjunctive switching to the disjunctive grid compression encoding, we illustrate its utility by showing how the generalized product encoding for $\amk(x_1,\dots,x_n)$~\cite{product-k}, which uses $(k+1)n + O(k^2 n^{k/(k+1)})$ clauses for $k=o(\log n / \log \log n)$, can be ``disjunctivized'' into an encoding with only $2n + O(kn^{k/(k+1)})$ clauses (see \Cref{thm:amk-disjunctive-product}). If not for \Cref{thm:disjunctive-compression}, this would be the smallest known encoding of this constraint for $k=o(\log n / \log \log n)$.

\subparagraph*{Proofs.} Several of our results require technical combinatorial proofs, which are postponed to the appendix for improved readability.


\section{Preliminaries} \label{sec-prelim}
%


A \emph{CNF formula} is a set of clauses, each of which is a set of literals, and the \emph{size} of such a formula is the number of clauses it contains. Given a set of variables $X$, we denote by $\mathrm{lit}(X) := X \cup \{ \overline{x} \mid x \in X\}$ the set of literals over those variables.
For a partial assignment $\tau: \mathcal{V} \to \{\bot, \top\}$ and a formula $\varphi$, we denote by $\varphi|_\tau$ the formula obtained by eliminating from $\varphi$ each clause satisfied by $\tau$, and then from each remaining clause eliminating every literal $\ell$ such that $\tau \models \overline{\ell}$. We will write $\textsf{SAT}(\varphi)$ to say that $\tau \models \varphi$ for some assignment $\tau$.

We now present the two main definitions used in this paper.

\begin{definition}[CNF Encoding]
    Given a boolean function $f \colon \{\bot,\top\}^n \to \{\bot, \top\}$, and a sequence of propositional variables $X := (x_1, \dots, x_n)$, we say that a CNF formula $\varphi$ over variables $X \sqcup Y$ \emph{encodes} $f$ if, for every assignment $\tau$ of $X$,
    \[
    f(\tau(x_1), \dots, \tau(x_n)) = \top \iff  \mathsf{SAT}(\varphi|_\tau).
    \]
    The variables in $Y$ are called \emph{auxiliary} variables, whereas those in $X$ are called \emph{input} variables.
\end{definition}
%
%
\begin{definition}[Propagation Completeness {\cite{lower-bound}}]
    Let $f \colon \{\bot,\top\}^n \to \{\bot,\top\}$ be a boolean function and $\varphi$ a CNF encoding for $f$ over input variables $X$ and auxiliary variables $Y$. We say that $\varphi$ is \emph{propagation complete} if for every $\ell_1,\dots,\ell_m \in \mathrm{lit}(X)$
    \[
       \left(\varphi \land \bigwedge_{i = 1}^{m-1} \ell_i\right)  \models \ell_m \; \; \implies  \left[\left(\varphi \land \bigwedge_{i=1}^{m-1} \ell_i\right) \vdash_1 \ell_m  \; \; \text{or} \; \;  \left(\varphi \land \bigwedge_{i=1}^{m-1} \ell_i\right) \vdash_1 \bot\right],
    \]
    where $\psi \vdash_1 \ell$ means that $\ell$ can be derived from $\psi$ by unit propagation.
\end{definition}

For example, to say that an encoding $\varphi$ of $\amo(x_1,\dots,x_n)$ is propagation complete means that, whenever a variable $x_i$ is assigned to true, all other variables $x_j$ will be assigned to false by unit propagation, and thus if two input variables are ever assigned to true, a contradiction is detected by unit propagation alone. Given that SAT solvers perform unit propagation very efficiently, propagation-complete encodings guarantee that entailments involving the encoded function can be derived efficiently by a solver.

\section{\texorpdfstring{Encodings for $\amo$}{Encodings for AMO}} \label{sec-amo}

In this section, we prove \Cref{thm:am1,thm:lower-bound}. We start by presenting Chen's product encoding for $\amo$ and giving a new graph-theoretic perspective on it, from which our improved encoding will seem more natural.

\subsection{Two perspectives on the product encoding}\label{sec-perspective}

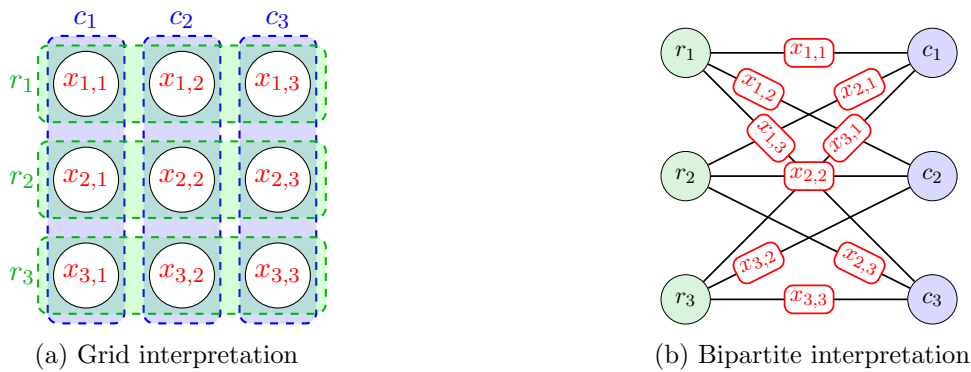
\begin{figure}
    \begin{subfigure}{0.48\linewidth}
        \centering
    \begin{tikzpicture}[scale=0.85]
  \draw[blue,dashed,rounded corners,thick,fill=blue,fill opacity=0.15] (0.15000000000000002,0) rectangle (1.35,4.5);
  \node[text=blue] at (0.75,4.75) {$c_{1}$};
  \draw[blue,dashed,rounded corners,thick,fill=blue,fill opacity=0.15] (1.6500000000000001,0) rectangle (2.8499999999999996,4.5);
  \node[text=blue] at (2.25,4.75) {$c_{2}$};
  \draw[blue,dashed,rounded corners,thick,fill=blue,fill opacity=0.15] (3.1500000000000004,0) rectangle (4.35,4.5);
  \node[text=blue] at (3.75,4.75) {$c_{3}$};
  \definecolor{rgreen}{RGB}{10,180,20};
  \draw[rgreen,dashed,rounded corners,thick,fill=green,fill opacity=0.15] (0,0.15000000000000002) rectangle (4.5,1.35);
  \node[text=rgreen] at (-0.25,0.75) {$r_{3}$};
  \draw[rgreen,dashed,rounded corners,thick,fill=green,fill opacity=0.15] (0,1.6500000000000001) rectangle (4.5,2.8499999999999996);
  \node[text=rgreen] at (-0.25,2.25) {$r_{2}$};
  \draw[rgreen,dashed,rounded corners,thick,fill=green,fill opacity=0.15] (0,3.1500000000000004) rectangle (4.5,4.35);
  \node[text=rgreen] at (-0.25,3.75) {$r_{1}$};
  \node[text=red,draw,circle,minimum size=6mm,fill=white, inner sep=2pt] at (0.75,3.75) {$x_{1,1}$};
  \node[text=red,draw,circle,minimum size=6mm,fill=white, inner sep=2pt] at (0.75,2.25) {$x_{2,1}$};
  \node[text=red,draw,circle,minimum size=6mm,fill=white, inner sep=2pt] at (0.75,0.75) {$x_{3,1}$};
  \node[text=red,draw,circle,minimum size=6mm,fill=white, inner sep=2pt] at (2.25,3.75) {$x_{1,2}$};
  \node[text=red,draw,circle,minimum size=6mm,fill=white, inner sep=2pt] at (2.25,2.25) {$x_{2,2}$};
  \node[text=red,draw,circle,minimum size=6mm,fill=white, inner sep=2pt] at (2.25,0.75) {$x_{3,2}$};
  \node[text=red,draw,circle,minimum size=6mm,fill=white, inner sep=2pt] at (3.75,3.75) {$x_{1,3}$};
  \node[text=red,draw,circle,minimum size=6mm,fill=white, inner sep=2pt] at (3.75,2.25) {$x_{2,3}$};
  \node[text=red,draw,circle,minimum size=6mm,fill=white, inner sep=2pt] at (3.75,0.75) {$x_{3,3}$};
\end{tikzpicture}
    \caption{Grid interpretation}
    \label{fig:grid}
    \end{subfigure}
    \hfill
      \begin{subfigure}{0.48\linewidth}
        \centering
    \scalebox{0.82}{
\begin{tikzpicture}[]
  \node[draw,circle,minimum size=8mm,fill=rgreen, fill opacity=0.15, text opacity=1] at (-2,-2.0) (node1) {$r_{3}$};
  \node[draw,circle,minimum size=8mm,fill=rgreen, fill opacity=0.15, text opacity=1] at (-2,0.0) (node2) {$r_{2}$};
  \node[draw,circle,minimum size=8mm,fill=rgreen, fill opacity=0.15, text opacity=1] at (-2,2.0) (node3) {$r_{1}$};
  \node[draw,circle,minimum size=8mm,fill=blue, fill opacity=0.15, text opacity=1] at (2,-2.0) (node4) {$c_{3}$};
  \node[draw,circle,minimum size=8mm,fill=blue, fill opacity=0.15, text opacity=1] at (2,0.0) (node5) {$c_{2}$};
  \node[draw,circle,minimum size=8mm,fill=blue, fill opacity=0.15, text opacity=1] at (2,2.0) (node6) {$c_{1}$};
  \draw[thick,color=black] (node3) -- (node6) node [color=red,midway,fill=white,rectangle,draw,rounded corners,sloped, inner sep=2.5pt] {$x_{1, 1}$};
  \draw[thick,color=black] (node3) -- (node5) node [color=red,near start,fill=white,rectangle,draw,rounded corners,sloped, inner sep=2.5pt] {$x_{1, 2}$};

  \draw[thick,color=black] (node2) -- (node6) node [color=red, near end,fill=white,rectangle,draw,rounded corners,sloped, inner sep=2.5pt] {$x_{2, 1}$};

    \draw[thick,color=black] (node3) -- (node4) node [color=red, pos=0.32,fill=white,rectangle,draw,rounded corners,sloped, inner sep=2.5pt] {$x_{1, 3}$};

  \draw[thick,color=black] (node2) -- (node4) node [color=red, near end,fill=white,rectangle,draw,rounded corners,sloped, inner sep=2.5pt] {$x_{2, 3}$};
  \draw[thick,color=black] (node1) -- (node6) node [color=red,pos=0.68,fill=white,rectangle,draw,rounded corners,sloped, inner sep=2.5pt] {$x_{3, 1}$};
  
  \draw[thick,color=black] (node1) -- (node5) node [color=red,near start,fill=white,rectangle,draw,rounded corners,sloped, inner sep=2.5pt] {$x_{3, 2}$};
  \draw[thick,color=black] (node1) -- (node4) node [color=red,midway,fill=white,rectangle,draw,rounded corners,sloped, inner sep=2.5pt] {$x_{3, 3}$};

    \draw[thick,color=black] (node2) -- (node5) node [color=red,midway,fill=white,rectangle,draw,rounded corners,sloped, inner sep=2.5pt] {$x_{2, 2}$};
\end{tikzpicture}
}
    \caption{Bipartite interpretation}
    \label{fig:bipartite}
    \end{subfigure}
    \caption{Illustration of our proposed change of perspective on Chen's product encoding}
\end{figure}

The traditional way to present Chen's product encoding is by arranging the input variables into a grid as depicted in \Cref{fig:grid}. Then, the key insight underlying the encoding is that at most one input variable is true if and only if (a) at most one row contains a true variable and (b) at most one column contains a true variable. This is encoded as follows. For each row, introduce an auxiliary variable $r_i$ that is implied by each variable in that row, and do similarly for the columns with auxiliary variables $c_j$. Then, one recursively encodes the $\amo$ constraints for $\{r_1,r_2,\dots\}$ and $\{c_1,c_2,\dots\}$.

More formally, rename the input variables $x_1,\dots,x_n$ to be of the form $x_{i,j}$ with $i,j \in [p]$, where $p = \ceil{\sqrt{n}}$. Let $E$ be the set of ordered pairs $(i,j)$ to which some variable is assigned. Then, the product encoding is as follows:
\[
    \textsf{PE}(\{ x_{i, j} \mid (i, j) \in E\}) := \left(\bigwedge_{(i, j) \in E} (\overline{x_{i, j}} \lor r_i) \land (\overline{x_{i, j}} \lor c_j) \right) \land \textsf{PE}(r_1, \ldots, r_p) \land \textsf{PE}(c_1, \ldots, c_p).
\]
For the base case (say, when $n \le 4$), we use the direct encoding: $\bigwedge_{1 \le i < j \le n} (\overline{x_i} \lor \overline{x_j})$.

But there is also another interpretation of the product encoding, illustrated in \Cref{fig:bipartite}. Here, we identify the input variables with the edges of a bipartite graph. We say that an edge is \emph{selected} if the corresponding input variable is true, and we say that a vertex is \emph{selected} if it is incident to a selected edge. Now, the key insight can be rephrased as follows: at most one edge is selected if and only if (a) at most one vertex in the left part is selected and (b) at most one vertex in the right part is selected. Of course, the grid interpretation and bipartite interpretation are equivalent, but the bipartite interpretation provides a nice conceptual lens for designing new encodings for $\amo$.

Given a graph $G$, our goal is to encode $\amo(E(G))$, the constraint asserting that at most one edge is selected. Let the input variables be $x_{\{i,j\}}$ for each $\{i,j\} \in E(G)$. As in the product encoding, we introduce an auxiliary variable $v_i$ for each vertex $i \in V(G)$, and we spend $2|E|$ clauses to make each edge imply its endpoints: $\overline{x_{i,j}} \lor v_i$ and $\overline{x_{i,j}} \lor v_j$. Then, it remains to use the auxiliary variables associated with the vertices to encode that at most one edge is selected; how we do this depends on the choice of $G$. The efficiency of the encoding (i.e., how many clauses it has as a function of $|E|$) depends on the edge density of the graph and how succinctly we can use the vertex variables to encode that at most one edge is selected.

\subsection{The multipartite encoding} \label{sec-multipartite}

We now describe an encoding for $\amo(x_1,\dots,x_n)$ with $2n+2\sqrt{2n}+O(\sqrt[3]{n})$ clauses and $\sqrt{2n}+O(\sqrt[3]{n})$ auxiliary variables, thus proving~\Cref{thm:am1}. We use the graph-theoretic strategy just described, taking $G$ to be a complete multipartite graph with $\Theta(\sqrt[6]{n})$ parts and $\Theta(\sqrt[3]{n})$ vertices within each part; we therefore call our encoding the \emph{multipartite encoding}. The constants are chosen so that $G$ has ${\sim}\sqrt{2n}$ vertices and ${\sim}n$ edges. These parameters turn out to be a good choice for two reasons. First, $G$ has a high edge density, which allows us to assign more input variables to the edges of $G$. Second, we have a succinct way to use the vertex variables to encode that at most one edge is selected:
\begin{framed}
\noindent
\textbf{Key insight:} At most one edge is selected if and only if (a) at most one vertex from each part is selected and (b) at most two parts contain a selected vertex (see \Cref{fig:multipartite}).
\end{framed}

\begin{figure}
    \begin{subfigure}{0.48\linewidth}
        \centering
    \scalebox{0.9}{
\begin{tikzpicture}[
    scale=0.6,
    vertex/.style={circle, draw=black, fill=white, inner sep=0pt, minimum size=7pt},
    edge/.style={line width=0.45pt, draw=black!60, opacity=0.5},
    blob/.style={rounded corners=10pt, very thick, inner sep=10pt, draw opacity=0.9, fill opacity=0.14}
]
\coordinate (tl) at (-3,  3);
\coordinate (tr) at ( 3,  3);
\coordinate (bl) at (-3, -3);
\coordinate (br) at ( 3, -3);

\foreach \name/\base in {a/tl,b/tr,c/bl,d/br}{
  \foreach \i/\x/\y in {1/-0.6/0.6, 2/0.6/0.6, 3/-0.6/-0.6, 4/0.6/-0.6}{
    \node[vertex] (\name\i) at ($(\base)+(\x,\y)$) {};
  }
}

\def\A{a1,a2,a3,a4}
\def\AONE{a1,a2}
\def\ATWO{a3,a4}
\def\ACONE{a1,a3}
\def\ACTWO{a2,a4}
\def\B{b1,b2,b3,b4}
\def\BONE{b1, b2}
\def\BTWO{b3, b4}
\def\BCONE{b1,b3}
\def\BCTWO{b2,b4}
\def\C{c1,c2,c3,c4}
\def\CONE{c1,c2}
\def\CTWO{c3, c4}
\def\CCONE{c1, c3}
\def\CCTWO{c2, c4}
\def\D{d1,d2,d3,d4}
\def\DONE{d1,d2}
\def\DTWO{d3,d4}
\def\DCONE{d1, d3}
\def\DCTWO{d2, d4}

\def\AL{a2,a3,a4}
\def\BL{b1,b3,b4}
\def\CL{c1,c2,c4}
\def\DL{d1,d2,d3}

\begin{scope}[on background layer]
  \node[blob, draw=teal!60,   fill=teal!50,   fit=(a1)(a2)(a3)(a4)] {};
  \node[blob, draw=teal!60,    fill=teal!50,    fit=(b1)(b2)(b3)(b4)] {};
  \node[blob, draw=teal!60,   fill=teal!50,   fit=(c1)(c2)(c3)(c4)] {};
  \node[blob, draw=teal!60, fill=teal!50, fit=(d1)(d2)(d3)(d4)] {};

    \foreach \u in \AL { \foreach \v in \DL { \draw[edge] (\u) -- (\v); } }

     \foreach \u in \BL { \foreach \v in \CL { \draw[edge] (\u) -- (\v); } }
    

  \foreach \u in \AONE { \foreach \v in \BONE { \draw[edge] (\u) to[bend left=10] (\v); } }
  \foreach \u in \AONE { \foreach \v in \BTWO { \draw[edge] (\u) -- (\v); } }
  \foreach \u in \ATWO { \foreach \v in \BONE { \draw[edge] (\u) -- (\v); } }
  
  \foreach \u in \ATWO { \foreach \v in \BTWO { \draw[edge] (\u) to[bend right=10] (\v); } }

  \foreach \u in \ACONE { \foreach \v in \CCONE { \draw[edge] (\u) to[bend right=10] (\v); } }
  \foreach \u in \ACONE { \foreach \v in \CCTWO { \draw[edge] (\u) -- (\v); } }
  \foreach \u in \ACTWO { \foreach \v in \CCONE { \draw[edge] (\u) -- (\v); } }
  \foreach \u in \ACTWO { \foreach \v in \CCTWO { \draw[edge] (\u) to[bend left=10] (\v); } }

  \foreach \u in \CONE { \foreach \v in \DONE { \draw[edge] (\u) to[bend left=10] (\v); } }
  \foreach \u in \CONE { \foreach \v in \DTWO { \draw[edge] (\u) -- (\v); } }
  \foreach \u in \CTWO { \foreach \v in \DONE { \draw[edge] (\u) -- (\v); } }
  \foreach \u in \CTWO { \foreach \v in \DTWO { \draw[edge] (\u) to[bend right=10] (\v); } }

  \foreach \u in \BCONE { \foreach \v in \DCONE { \draw[edge] (\u) to[bend right=10] (\v); } }
  \foreach \u in \BCONE { \foreach \v in \DCTWO { \draw[edge] (\u) -- (\v); } }
  \foreach \u in \BCTWO { \foreach \v in \DCONE { \draw[edge] (\u) -- (\v); } }
  \foreach \u in \BCTWO { \foreach \v in \DCTWO { \draw[edge] (\u) to[bend left=10] (\v); } }
  

  \draw[edge] (a1) to[bend left=10] (d4);
  \draw[edge] (c3) to[bend left=10] (b2);
\end{scope}

\draw[edge, ultra thick, magenta,opacity=0.8] (a2) -- (c1);
\node[vertex, fill=black] (a22) at (a2) {};
\node[vertex, fill=black] (c11) at (c1) {};

\draw[edge, ultra thick, magenta, opacity=0.8] (b3) -- (d4);
\node[vertex, fill=black] (b33) at (b3) {};
\node[vertex, fill=black] (d44) at (d4) {};



\node[] at (tl) {\Large $P_1$};
\node[] at (tr) {\Large $P_2$};
\node[] at (bl) {\Large $P_3$};
\node[] at (br) {\Large $P_4$};

\end{tikzpicture}
    }
    \caption{$\textsf{AtMostTwo}$-parts constraint is violated.}
    \end{subfigure}
      \begin{subfigure}{0.48\linewidth}
        \centering
    \scalebox{0.9}{\begin{tikzpicture}[
    scale=0.6,
    vertex/.style={circle, draw=black, fill=white, inner sep=0pt, minimum size=7pt},
    edge/.style={line width=0.45pt, draw=black!60, opacity=0.5},
    blob/.style={rounded corners=10pt, very thick, inner sep=10pt, draw opacity=0.9, fill opacity=0.14}
]
\coordinate (tl) at (-3,  3);
\coordinate (tr) at ( 3,  3);
\coordinate (bl) at (-3, -3);
\coordinate (br) at ( 3, -3);

\foreach \name/\base in {a/tl,b/tr,c/bl,d/br}{
  \foreach \i/\x/\y in {1/-0.6/0.6, 2/0.6/0.6, 3/-0.6/-0.6, 4/0.6/-0.6}{
    \node[vertex] (\name\i) at ($(\base)+(\x,\y)$) {};
  }
}

\def\A{a1,a2,a3,a4}
\def\B{b1,b2,b3,b4}
\def\C{c1,c2,c3,c4}
\def\D{d1,d2,d3,d4}

\def\A{a1,a2,a3,a4}
\def\AONE{a1,a2}
\def\ATWO{a3,a4}
\def\ACONE{a1,a3}
\def\ACTWO{a2,a4}
\def\B{b1,b2,b3,b4}
\def\BONE{b1, b2}
\def\BTWO{b3, b4}
\def\BCONE{b1,b3}
\def\BCTWO{b2,b4}
\def\C{c1,c2,c3,c4}
\def\CONE{c1,c2}
\def\CTWO{c3, c4}
\def\CCONE{c1, c3}
\def\CCTWO{c2, c4}
\def\D{d1,d2,d3,d4}
\def\DONE{d1,d2}
\def\DTWO{d3,d4}
\def\DCONE{d1, d3}
\def\DCTWO{d2, d4}

\def\AL{a2,a3,a4}
\def\BL{b1,b3,b4}
\def\CL{c1,c2,c4}
\def\DL{d1,d2,d3}

\begin{scope}[on background layer]
  \node[blob, draw=gray!60,   fill=gray!50,   fit=(a1)(a2)(a3)(a4)] {};
  \node[blob, draw=red!60,    fill=red!50,    fit=(b1)(b2)(b3)(b4)] {};
  \node[blob, draw=gray!60,   fill=gray!50,   fit=(c1)(c2)(c3)(c4)] {};
  \node[blob, draw=teal!60, fill=teal!50, fit=(d1)(d2)(d3)(d4)] {};

    \foreach \u in \AL { \foreach \v in \DL { \draw[edge] (\u) -- (\v); } }

     \foreach \u in \BL { \foreach \v in \CL { \draw[edge] (\u) -- (\v); } }

  \foreach \u in \AONE { \foreach \v in \BONE { \draw[edge] (\u) to[bend left=10] (\v); } }
  \foreach \u in \AONE { \foreach \v in \BTWO { \draw[edge] (\u) -- (\v); } }
  \foreach \u in \ATWO { \foreach \v in \BONE { \draw[edge] (\u) -- (\v); } }
  
  \foreach \u in \ATWO { \foreach \v in \BTWO { \draw[edge] (\u) to[bend right=10] (\v); } }

  \foreach \u in \ACONE { \foreach \v in \CCONE { \draw[edge] (\u) to[bend right=10] (\v); } }
  \foreach \u in \ACONE { \foreach \v in \CCTWO { \draw[edge] (\u) -- (\v); } }
  \foreach \u in \ACTWO { \foreach \v in \CCONE { \draw[edge] (\u) -- (\v); } }
  \foreach \u in \ACTWO { \foreach \v in \CCTWO { \draw[edge] (\u) to[bend left=10] (\v); } }

  \foreach \u in \CONE { \foreach \v in \DONE { \draw[edge] (\u) to[bend left=10] (\v); } }
  \foreach \u in \CONE { \foreach \v in \DTWO { \draw[edge] (\u) -- (\v); } }
  \foreach \u in \CTWO { \foreach \v in \DONE { \draw[edge] (\u) -- (\v); } }
  \foreach \u in \CTWO { \foreach \v in \DTWO { \draw[edge] (\u) to[bend right=10] (\v); } }

  \foreach \u in \BCONE { \foreach \v in \DCONE { \draw[edge] (\u) to[bend right=10] (\v); } }
  \foreach \u in \BCONE { \foreach \v in \DCTWO { \draw[edge] (\u) -- (\v); } }
  \foreach \u in \BCTWO { \foreach \v in \DCONE { \draw[edge] (\u) -- (\v); } }
  \foreach \u in \BCTWO { \foreach \v in \DCTWO { \draw[edge] (\u) to[bend left=10] (\v); } }

    \draw[edge] (a1) to[bend left=10] (d4);
  \draw[edge] (c3) to[bend left=10] (b2);

\end{scope}



\draw[edge, ultra thick, magenta, opacity=0.8] (d1) to[bend left=10] (b3);
\draw[edge, ultra thick, magenta, opacity=0.8] (d1) -- (b4);

\node[vertex, fill=black] (d11) at (d1) {};
\node[vertex, fill=black] (b33) at (b3) {};
\node[vertex, fill=black] (b44) at (b4) {};

\node[] at (tl) {\Large $P_1$};
\node[] at (tr) {\Large $P_2$};
\node[] at (bl) {\Large $P_3$};
\node[] at (br) {\Large $P_4$};

\end{tikzpicture}
    }
    \caption{$\amo'(\_;z_2)$-vertex constraint is violated.}
    \end{subfigure}
    \caption{Illustration of the multipartite encoding. Parts violating the $\amo$-vertex constraint are shaded red. Parts for which $z_k$ is true are shaded teal.}
    \label{fig:multipartite}
\end{figure}
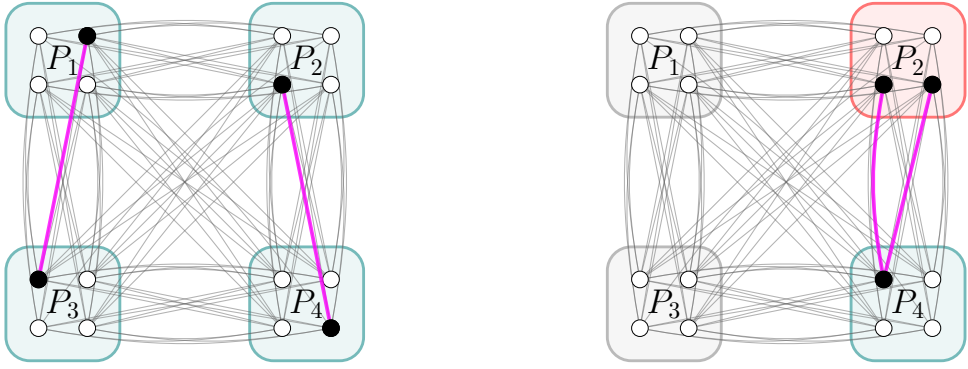

To materialize this encoding, we first require an intermediate construction, which is a simple modification of Chen's product encoding. Let $\amo'(x_1,\dots,x_n;z)$ be the constraint asserting that at most one of the variables $x_1,\dots,x_n$ is true and that $x_i$ implies $z$ for each $i \in [n]$; that is \(\amo'(x_1,\dots,x_n;z) \Longleftrightarrow \amo(x_1, \dots, x_n) \land \bigwedge_{i \in [n]} (\overline{x_i} \lor z) \).

\begin{restatable}{lemma}{lemamoindicator} \label{lem-amo-indicator} 
    There is a propagation-complete encoding of the $\amo'(x_1,\dots,x_n;z)$ constraint using $2n + O(\sqrt{n})$ clauses and $O(\sqrt{n})$ auxiliary variables.
\end{restatable}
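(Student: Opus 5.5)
Arrange the inputs in a grid, as in Chen's product encoding, and use the row variables to produce $z$ cheaply. Set $p = \lceil \sqrt{n}\,\rceil$ and rename the inputs $x_{i,j}$ with $(i,j) \in E \subseteq [p]^2$. Introduce row variables $r_1,\dots,r_p$ and column variables $c_1,\dots,c_p$, with the $2n$ clauses $(\overline{x_{i,j}} \lor r_i)$ and $(\overline{x_{i,j}} \lor c_j)$. For the indicator we do not need a clause from each $x_{i,j}$ to $z$. We only add the $p$ clauses $(\overline{r_i} \lor z)$, so each $x_{i,j}$ implies $z$ through the chain $x_{i,j} \to r_i \to z$. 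Finally, add a propagation-complete encoding of $\amo(r_1,\dots,r_p)$ and of $\amo(c_1,\dots,c_p)$. Either Chen's product encoding or, more simply, a sequential counter suffices here: any propagation-complete $O(p)$-clause, $O(p)$-auxiliary-variable encoding works, because the target is only $2n + O(\sqrt{n})$. The total is then $2n + p + O(p) = 2n + O(\sqrt{n})$ clauses and $2p + O(p) = O(\sqrt{n})$ auxiliary variables.

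Correctness comes next.

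\textbf{Soundness.} In any model, a true $x_{i,j}$ forces $r_i$ and $c_j$. Two true inputs in distinct rows (respectively columns) would make two $r$'s (respectively $c$'s) true, violating the inner $\amo$. Two distinct cells cannot share both row and column, so at most one input is true. Each true $x_{i,j}$ also forces $z$.

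\textbf{Completeness.} Given an input assignment satisfying $\amo'$, set $r_i$ true iff row $i$ contains the true input, and likewise for the $c_j$. Keep $z$ as given; it is true whenever some input is true. Then extend using the inner encodings, which is possible since at most one $r$ and at most one $c$ is true.

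We read ``encodes'' with $z$ treated as an input variable, since $\amo'$ is a function of $x_1,\dots,x_n,z$.

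Propagation completeness is the main obstacle, because literals over both the $x$'s and $z$ must now be handled. Every entailment of $\amo'$ from a consistent partial assignment is one of the following:
\begin{enumerate}
\item if some $x_{i,j}$ is true, then all other $x$'s are false and $z$ is true;
\item if $z$ is false, then all $x$'s are false;
\item an outright conflict, when two $x$'s are true, or when $z$ is false and some $x$ is true.
\end{enumerate}
For case 1, unit propagation derives $r_i$, $c_j$ and $z$. The inner propagation-complete encodings then falsify every $r_{i'}$ with $i' \ne i$ and every $c_{j'}$ with $j' \ne j$. Any other $x_{i',j'}$ lies in a different row or a different column, so the clause $(\overline{x_{i',j'}} \lor r_{i'})$ or $(\overline{x_{i',j'}} \lor c_{j'})$ becomes unit and falsifies it. For case 2, $\overline{z}$ propagates to $\overline{r_i}$ for all $i$, and then every $x_{i,j}$ is falsified through $(\overline{x_{i,j}} \lor r_i)$. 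For the conflicts in case 3, combining these propagations yields $\bot$.

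The delicate point is checking that the inner encodings' own auxiliary variables cause no trouble. Propagation completeness of the inner $\amo$ is stated only with respect to its inputs $r, c$, and in our setting those inputs are set only by propagation, never assigned directly. Since unit propagation is monotone, this gives exactly the behaviour above.
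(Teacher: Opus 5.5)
Your proposal is correct and is essentially the paper's proof: the paper also takes Chen's product encoding on a $\lceil\sqrt{n}\,\rceil \times \lceil\sqrt{n}\,\rceil$ grid and adds only the $p$ clauses $(\overline{r_i} \lor z)$, so each input reaches $z$ through its row variable. The differences are minor. The paper fixes the recursive product encoding for the inner row and column constraints, whereas you allow any propagation-complete linear encoding (e.g.\ the sequential counter). You also write out the propagation-completeness case analysis with $z$ treated as an input, which the paper leaves as ``very similar to that of the product encoding.''
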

\begin{proof}[Proof sketch]
    It suffices to use Chen's product encoding and add one clause $(\overline{r_i} \lor z)$ per row $r_i$. Then, each input variable implies $z$ indirectly through its row.
\end{proof}

\noindent
Now we can prove \Cref{thm:am1}.

\begin{proof}[Proof of \Cref{thm:am1}]
    Let $G$ be the complete $p$-partite graph with $q$ vertices within each part for $p := \ceil{\sqrt[6]{n}}+1$ and $q := \ceil{\sqrt{2} \cdot \sqrt[3]{n}}$. This way, $|E(G)| = \binom{p}{2}q^2 \ge n$. Let $P_1,\dots,P_p$ be the parts of $G$. Assign the variables $x_1, \dots, x_n$ to distinct edges of $G$, renaming the variables so that $x_{\{i,j\}}$ is the variable assigned to the edge $\{i,j\}$. Let $E$ be the set of edges of $G$ to which some variable is assigned. Discard any vertices of $G$ not incident to an edge from $E$. Introduce auxiliary variables $v_i$ for each $i \in V(G)$ and $z_k$ for each $k \in [p]$. Our encoding is as follows:
    \begin{multline*}
        \textsf{ME}(\{x_{\{i,j\}} \mid \{i,j\} \in E\}) := \left(\bigwedge_{\{i,j\} \in E} (\overline{x_{\{i, j\}}} \lor v_i) \land (\overline{x_{\{i, j\}}} \lor v_j) \right) \land \\ \left(\bigwedge_{k \in [p]} \amo'(\{v_i \mid i \in P_k\};z_k) \right)  \land\, \amt(z_1,\dots,z_p),
    \end{multline*}
    where we use the generalized product encoding for $\amt(z_1,\dots,z_p)$---described as well in~\Cref{sec:gdpe}---which is propagation complete and uses $3p + O(p^{2/3})$ clauses and $O(p^{2/3})$ auxiliary variables~\cite{product-k}.

    First, we argue that the encoding is correct and propagation complete. Suppose that at most one input variable is true. If zero input variables are true, then $\textsf{ME}(\{x_{\{i,j\}} \mid \{i,j\} \in E\})$ is satisfiable by setting all of the $v$ and $z$ auxiliary variables to false. Otherwise, exactly one input variable $x_{\{i,j\}}$ is true. Let $k$ and $\ell$ be such that $i \in P_k$ and $j \in P_\ell$. Then, $\textsf{ME}(\{x_{\{i,j\}} \mid \{i,j\} \in E\})$ is satisfiable by setting $v_i$, $v_j$, $z_k$, and $z_\ell$ to true and all of the other $v$ and $z$ auxiliary variables to false.
    
    It remains to show that $\textsf{ME}(\{x_{\{i,j\}} \mid \{i,j\} \in E\}) \land x_{\{i,j\}} \vdash_1 \overline{x_{\{i',j'\}}}$ for all $\{i',j'\} \in E \setminus \{\{i,j\}\}$. If $x_{\{i,j\}}$ is true, then $v_i$ and $v_j$ can be derived by one step of unit propagation, and since our encodings of $\amo'(\{v_i \mid i \in P_k\};z_k)$ and $\amo'(\{v_i \mid i \in P_\ell\};z_\ell)$ are propagation complete by \Cref{lem-amo-indicator}, we can derive $z_k$ and $z_\ell$ by unit propagation (for $k$ and  $\ell$ defined as above), as well as $\overline{v_{i'}}$ for all $i' \in P_k \cup P_\ell \setminus \{v_i,v_j\}$. Thus, we can derive $\overline{x_{\{i',j'\}}}$ by unit propagation for every edge $\{i',j'\}$ between $P_k$ and $P_\ell$ other than $\{i,j\}$. Since our encoding of $\amt(z_1,\dots,z_p)$ is propagation complete, $\overline{z_{k'}}$ is derivable by unit propagation for all $k' \notin \{k,\ell\}$, and therefore $\overline{v_{i'}}$ is derivable by unit propagation for all $i' \in P_{k'}$ and all $k' \notin \{k,\ell\}$. Thus, we can derive $\overline{x_{\{i',j'\}}}$ by unit propagation for every edge $\{i',j'\}$ that is not between $P_k$ and $P_\ell$. We conclude that the encoding is correct and propagation complete.

   Finally, the number of clauses is $2n + p \cdot (2q + O(\sqrt{q})) + (3p + O(p^{2/3})) = 2n + 2 \sqrt{2n} + O(\sqrt[3]{n})$, and the number of auxiliary variables is $pq + p \cdot O(\sqrt{q}) + O(p^{2/3}) = \sqrt{2n} + O(\sqrt[3]{n})$.
\end{proof}

\subsection{The clique encoding}

The analysis in the proof of~\Cref{thm:am1} generalizes as follows: let $f_1(n)$ and $f_2(n)$ be the minimum size of encodings for $\amo$ and $\amt$ on $n$ variables, respectively. Then, taking $G$ to be a complete $p$-partite graph with $q$ vertices within each part yields  
\[
f_1(n) \leq 2n + p \cdot f_1(q) + f_2(p).
\]
For $p, q = \omega(1)$, the best choice is that of~\Cref{thm:am1}.
When $p = 2$, we have $f_2(p) = 0$, and since $G$ is a complete bipartite graph, we recover the product encoding. The other extreme is to take $q = 1$, in which case $f_1(q) = 0$, and $G$ is a complete graph; then, the size of the encoding is $2n + f_2(p)$. Using our disjunctive grid compression encoding (\Cref{thm:disjunctive-compression}) for $\amt$, we can get an encoding for $\amo(x_1,\dots,x_n)$ with slightly fewer clauses than in~\Cref{thm:am1}, at the cost of losing propagation completeness. This is the smallest encoding for $\amo$ that we know of.

\begin{restatable}[Clique Encoding]{theorem}{thmamoclique} \label{thm:am1-clique}
    There is an encoding of the $\amo(x_1, \ldots, x_n)$ constraint using $2n + 2 \sqrt{2n} + \widetilde{O}(\sqrt[4]{n})$ clauses and $\sqrt{2n} + \widetilde{O}(\sqrt[4]{n})$ auxiliary variables.
\end{restatable}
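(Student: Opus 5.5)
We instantiate the graph-theoretic framework of \Cref{sec-perspective} with $q = 1$, i.e., with $G$ the complete graph $K_p$. Concretely, we take $p$ to be the least integer with $\binom{p}{2} \ge n$, so that $p = \sqrt{2n} + O(1)$. We assign the input variables $x_1,\dots,x_n$ to distinct edges of $K_p$, introduce a vertex variable $v_i$ for each $i \in [p]$, and add the $2n$ clauses $(\overline{x_{\{i,j\}}} \lor v_i) \land (\overline{x_{\{i,j\}}} \lor v_j)$. On top of these we place an encoding of $\amt(v_1,\dots,v_p)$.

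The first step is correctness, which mirrors the key insight of \Cref{sec-multipartite} with every part a singleton. If at most one edge $\{i,j\}$ is selected, we set $v_i, v_j$ true (or nothing true if no edge is selected) and all other vertex variables false. At most two vertex variables are then true, so the $\amt$ encoding is satisfiable by its defining property as an encoding. Conversely, suppose two distinct edges are selected. They share at most one endpoint, so at least three distinct vertex variables are forced true. The $\amt$ encoding restricted to such an assignment is then unsatisfiable; since the $v_i$ appear only positively in the edge clauses, any satisfying assignment of the whole formula would yield one of $\amt$ with at least three true inputs. This establishes that the formula encodes $\amo$. Note that we do not claim propagation completeness, since \Cref{thm:disjunctive-compression} does not provide it.

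The second step is counting. We invoke \Cref{thm:disjunctive-compression} with $k = 2$ on $p$ variables. This gives $2p + O\bigl(\sqrt{p \cdot 8 \log_2^3 p}\bigr) = 2p + O(\sqrt{p}\log^{3/2} p)$ clauses and $O(\sqrt{p}\log^{3/2} p)$ auxiliary variables. Substituting $p = \sqrt{2n} + O(1)$ gives
\[
2n + 2\sqrt{2n} + O(1) + O\bigl(n^{1/4}\log^{3/2} n\bigr) = 2n + 2\sqrt{2n} + \widetilde{O}(\sqrt[4]{n})
\]
clauses. The auxiliary variables number $p + \widetilde{O}(\sqrt[4]{n}) = \sqrt{2n} + \widetilde{O}(\sqrt[4]{n})$.

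The only subtle point is the rounding of $p$, since $\binom{p}{2}$ may exceed $n$. This is harmless: unused edges simply contribute no clauses. To be safe we can discard vertices incident to no assigned edge, as in the proof of \Cref{thm:am1}, which only lowers the counts. Everything else is direct, so the main weight of the argument rests on \Cref{thm:disjunctive-compression}, which we are assuming.
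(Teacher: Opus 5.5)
Your proposal is correct and follows essentially the same route as the paper. The paper also takes $G = K_p$, though with $p = \lceil \sqrt{2n}\rceil + 1$; your minimal $p$ with $\binom{p}{2} \ge n$ works equally well. It likewise adds the $2n$ edge-to-endpoint clauses, encodes $\amt(v_1,\dots,v_p)$ with disjunctive grid compression, proves correctness via $|\{i,j,i',j'\}| \ge 3$, and obtains the same clause and variable counts.
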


\subsection{A smaller circuit for threshold-2} \label{sec-circuit}

The search for \emph{small} circuits (i.e., using few gates or wires) for cardinality constraints has been a cornerstone of circuit complexity (see, e.g., \cite{bloniarz1979,dunne1984,wegener1987,grinchuk,sergeev}). While the product encoding was only proposed in the context of SAT in 2010~\cite{Chen2010ANS}, the same idea was independently discovered by Adleman in 1976 and first mentioned in print by Bloniarz~\cite{bloniarz1979} a few years later. The \emph{threshold-2} function, denoted $T_2(x_1,\dots,x_n)$, is the negation of $\amo(x_1,\dots,x_n)$. Adleman showed that there is a monotone boolean circuit for $T_2(x_1,\dots,x_n)$ with $2n + 2 \sqrt{n} + O(\sqrt[4]{n})$ gates. Our construction from \Cref{thm:am1} can naturally be adapted to circuits, yielding the first improvement to Adleman's result:
\begin{restatable}{theorem}{thmcircuit} \label{thm:circuit}
    There is a monotone boolean circuit for $T_2(x_1,\dots,x_n)$ with $2n + \sqrt{2n} + O(\sqrt[3]{n})$ gates.
\end{restatable}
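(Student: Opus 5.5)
We adapt the multipartite construction from the proof of \Cref{thm:am1} to circuits. The saving over the clause count comes from the fact that OR-ing together the edges incident to a vertex shares work. Let $G$ be the complete $p$-partite graph with parts $P_1,\dots,P_p$ of size $q$, with $p := \ceil{\sqrt[6]{n}}+1$ and $q := \ceil{\sqrt{2}\cdot\sqrt[3]{n}}$ as before. Assign the inputs to distinct edges, let $E$ be the set of used edges, and discard unused vertices, so $|V| \le pq = \sqrt{2n} + O(\sqrt[3]{n})$. For each vertex $i$, compute $v_i := \bigvee_{j : \{i,j\}\in E} x_{\{i,j\}}$ using $\deg(i)-1$ binary OR gates. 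Summed over all vertices, this stage costs $2|E| - |V| = 2n - |V|$ gates.

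For each part $P_k$, build a monotone circuit computing both $T_2(\{v_i \mid i \in P_k\})$ and $z_k := \bigvee_{i\in P_k} v_i$. Adleman's construction gives $T_2$ on $q$ inputs with $2q + O(\sqrt{q})$ gates. It already computes the row ORs $r_1,\dots,r_{\sqrt q}$, so $z_k = \bigvee r_j$ costs only $O(\sqrt q)$ further gates. Next compute $T_2(z_1,\dots,z_p)$ with $O(p)$ gates, for instance by Adleman's circuit. Finally output
\[
  T_2(z_1,\dots,z_p) \lor \bigvee_{k\in[p]} T_2(\{v_i \mid i\in P_k\}),
\]
which costs $p$ more OR gates. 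The total is $(2n-|V|) + p\,(2q+O(\sqrt q)) + O(p) = 2n + |V| + O(p\sqrt q) \le 2n + \sqrt{2n} + O(\sqrt[3]{n})$, since $p\sqrt q = O(n^{1/6}\cdot n^{1/6})$ and $2pq - |V| \le |V|$ holds once we also replace $2pq$ by $2|V|$. To get that last bound, run each part's subcircuit only on its surviving vertices.

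Correctness is the circuit version of the key insight from \Cref{sec-multipartite}, and we must check it in the monotone direction. If two edges are selected, their endpoints comprise at least three selected vertices. Either two of them lie in a common part, so that part's $T_2$ fires, or they lie in at least three distinct parts, so $T_2(z)$ fires. Conversely, suppose $T_2$ on part $P_k$ fires. Then two distinct vertices of $P_k$ are selected, and since there are no edges inside a part, they are covered by two distinct selected edges. Suppose instead that $T_2(z)$ fires. Then three parts contain selected vertices, which a single edge cannot achieve. So the output equals $T_2(x_1,\dots,x_n)$.

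The only delicate step is the gate accounting for the per-part subcircuits. We must make sure that computing $z_k$ alongside $T_2$ on the part costs only $O(\sqrt q)$ extra, by reusing the row ORs in Adleman's circuit. We must also charge each part's $2q$ term against its actual (possibly reduced) vertex count, so that the second-order term is $+|V|$ rather than $+2pq - |V|$. Once this is handled, the remaining terms are all $O(\sqrt[3]{n})$.
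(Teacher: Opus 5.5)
There is a genuine error in the top-level combination: the threshold applied to the part indicators must be $T_3$, not $T_2$. Since $G$ has no edges inside a part, a single true input $x_{\{i,j\}}$ with $i \in P_k$, $j \in P_\ell$, $k \neq \ell$, already makes $z_k = z_\ell = \top$. So $T_2(z_1,\dots,z_p)$ fires, and your circuit outputs $\top$ on an input of Hamming weight one. The false step in your correctness argument is ``Suppose instead that $T_2(z)$ fires. Then three parts contain selected vertices''; $T_2$ only certifies two parts. The key insight of \Cref{sec-multipartite} is that at most one edge is selected iff every part has at most one selected vertex and at most \emph{two} parts are touched. Its negation is therefore $\bigvee_{k} T_2(\{v_i \mid i \in P_k\}) \lor T_3(z_1,\dots,z_p)$, which is what the paper outputs.

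A sanity check would have caught this. Your circuit consists only of OR trees and Adleman-type circuits, so it is single level. A correct single-level circuit with $2n + \sqrt{2n} + O(\sqrt[3]{n})$ gates would contradict Sergeev's single-level lower bound of $2n + 2\sqrt{n+11} - 10$. The $T_3$ subcircuit, which necessarily breaks the single-level property, is exactly what lets the construction beat that bound.

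The repair is cheap. $T_3$ on $p$ inputs has a monotone circuit with $3p + O(p^{2/3})$ gates, or $O(p)$ gates by maintaining running OR, $T_2$, and $T_3$ values. This stage therefore still costs $O(\sqrt[6]{n})$, and your count is unaffected.

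With that change, your argument coincides with the paper's: vertex ORs at cost $2n - |V|$, a per-part circuit computing both the OR and $T_2$, and a final OR. For the per-part circuit you derive it from Adleman's circuit by OR-ing the row ORs, whereas the paper cites Sergeev's $2n + O(\sqrt{n})$ bound for $S_2$. Your accounting, which charges $2|P_k \cap V|$ per part and uses $|V| \le pq$, is valid. However, the step you flag as delicate is unnecessary: any graph with $n$ edges has $|V| \ge \sqrt{2n}$. So charging $2pq = 2\sqrt{2n} + O(\sqrt[3]{n})$ for the parts, as the paper does, already gives $2n + \sqrt{2n} + O(\sqrt[3]{n})$.
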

Sergeev~\cite{sergeev} proved that every monotone boolean circuit for $T_2(x_1,\dots,x_n)$ has at least $2n + \sqrt{(2n-4)/3} - 19/6$ gates, so \Cref{thm:circuit} is almost optimal.

A monotone boolean circuit is said to be \emph{single level} if every path from an input to the output goes through at most one $\land$ gate. Interestingly, Sergeev showed that every \emph{single-level} monotone boolean circuit for $T_2(x_1,\dots,x_n)$ has at least $2n + 2 \sqrt{n+11} - 10$ gates.\footnote{In \cite{sergeev}, the bound is stated as $2n + 2 \sqrt{n+27} - 14$. Sergeev told us that the proof contains a mistake that, when corrected, yields the (improved) bound $2n + 2 \sqrt{n+11} - 10$.} Thus, Adleman's construction is essentially optimal for single-level circuits, and a corollary of \Cref{thm:circuit} is that the smallest monotone boolean circuits for $T_2(x_1,\dots,x_n)$ are not single level. This answers a 47-year-old open question from Bloniarz~\cite[p.~158]{bloniarz1979}, and it should be contrasted with a result of Krichevskii~\cite{krichevskii} stating that single-level monotone boolean \emph{formulas} are optimal for $T_2(x_1,\dots,x_n)$. It was a long-standing open problem whether there exists a quadratic boolean function (i.e., a disjunction of cubes of the form $x_i \land x_j$) whose single-level monotone circuit complexity is strictly greater than its monotone circuit complexity; the negation of this statement was sometimes called the \emph{single-level conjecture}. The problem appears to originate with Bloniarz~\cite[p.~158]{bloniarz1979} and was further studied by Lenz and Wegener~\cite{lenz-wegener} and several other authors (see, e.g., \cite{bublitz,mirwald-schnorr,amano-maruoka}). The conjecture was finally disproved by Jukna~\cite{single-level} using a carefully constructed quadratic boolean function. Thus, our results show that, surprisingly, the conjecture already fails for $T_2(x_1,\dots,x_n)$, the simplest quadratic boolean function of all.

\subsection{\texorpdfstring{An unconditional lower bound for $\amo$}{An unconditional lower bound for AMO}}


Unconditional lower bounds on CNF encodings are rare (directly related to circuit lower bounds and thus $\textsf{P} \stackrel{?}{=} \textsf{NP}$~\cite{Aaronson2016}); the only example we are aware of is for the \emph{parity} function, for which Emdin, Kulikov, Mihajlin, and Slezkin~\cite{cnf-symmetric} showed a $3n - 9$ lower bound, to be compared with the $4n - 6$ known upper bound.

Ku{\v c}era, Savick{\'{y}}, and Vorel~\cite{lower-bound} proved that every propagation-complete encoding of the $\amo(x_1, \ldots, x_n)$ constraint requires $2n + \sqrt{n} - 2$ clauses for $n \ge 7$, and that this can be improved to $2n + 2\sqrt{n} - 3$ clauses for $n \ge 9$ for 2-CNF encodings. They asked whether a $2n + \Omega(\sqrt{n})$ lower bound holds for \emph{unit refutation complete} encodings of $\amo(x_1, \ldots, x_n)$, where unit refutation completeness is a weaker version of propagation completeness. \Cref{thm:lower-bound} provides such a bound with no propagation assumptions. 
Our lower bound proof is heavily inspired by Ku{\v c}era, Savick{\'{y}}, and Vorel's proof. They define a \emph{regular form} for encodings of $\amo(x_1, \ldots, x_n)$, which is well-suited for theoretical analysis. The proof strategy is two-pronged. On one hand, if the smallest encoding $\varphi$ of $\amo(x_1, \ldots, x_n)$ is not in regular form, then they show that there is an encoding $\varphi'$ of $\amo(x_1, \ldots, x_{n-1})$ with three fewer clauses, which allows one to conclude by induction. 
On the other hand, if an encoding $\varphi$ is in regular form, then they argue directly that the encoding satisfies their claimed bound. Both prongs of their argument use the propagation completeness assumption. Our contribution is to show that the assumption can be eliminated in both cases. The analysis of encodings in regular form uses a graph-theoretic argument that is inspired by the perspective in \Cref{sec-perspective} (see \Cref{lem-regular-lb}).

The lower bound for $\amo$ easily extends to $\amk$:
\begin{corollary} \label{cor:lower-bound-amk}
    Every encoding of the $\amk(x_1, \ldots, x_n)$ constraint has at least $2(n-k) + \sqrt{n-k+2}$ clauses for $n \ge k + 7$.
\end{corollary}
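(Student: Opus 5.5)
The plan is to reduce to \Cref{thm:lower-bound} by restriction. Let $\varphi$ be an encoding of $\amk(x_1,\dots,x_n)$ over input variables $X$ and auxiliary variables $Y$, and let $m := n-k+1$. Fix the first $k-1$ inputs to true: let $\tau$ assign $x_{m+1},\dots,x_n$ to $\top$ and leave the rest unassigned. We take $\psi := \varphi|_\tau$, viewed as a formula over input variables $x_1,\dots,x_m$ and auxiliary variables $Y$.

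First, check that $\psi$ encodes $\amo(x_1,\dots,x_m)$. Take any assignment $\sigma$ of $x_1,\dots,x_m$. Then $\textsf{SAT}(\psi|_\sigma)$ holds iff $\textsf{SAT}(\varphi|_{\tau\cup\sigma})$ holds. By the encoding property this holds iff at most $k$ of the $n$ inputs are true under $\tau\cup\sigma$. Since $\tau$ already makes $k-1$ inputs true, this is iff at most one of $x_1,\dots,x_m$ is true under $\sigma$.

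Second, compare sizes. Restriction only deletes clauses or shortens them. So $|\psi| \le |\varphi|$, treating the clauses of $\psi$ as a set; merging duplicate clauses only helps. (Shortening a clause to the empty clause cannot happen, since $\psi$ is satisfiable, e.g.\ by the all-false $\sigma$.)

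Third, apply \Cref{thm:lower-bound} to $\psi$ with $m = n-k+1$ inputs. This requires $m \ge 8$, i.e.\ $n \ge k+7$, which is exactly the hypothesis. It gives
\[
|\varphi| \ge |\psi| \ge 2(n-k+1) + \sqrt{n-k+2} - 2 = 2(n-k) + \sqrt{n-k+2}.
\]

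Every step is routine. The only point needing care is the bookkeeping behind the exact constant: we must fix $k-1$ inputs rather than $k$, so that $m = n-k+1$; the $+2$ gained in the linear term then cancels the $-2$ from \Cref{thm:lower-bound}.
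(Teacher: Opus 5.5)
Your proposal is correct and matches the paper's proof. Both fix $k-1$ inputs to $\top$, observe that the restricted formula encodes $\amo$ on the remaining $n-k+1$ inputs with no more clauses, and apply \Cref{thm:lower-bound} with $n-k+1 \ge 8$. Your version also spells out the encoding check that the paper leaves as ``note that''.
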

\begin{proof}
    Let $\varphi$ be an encoding of $\amk(x_1, \ldots, x_n)$, and let $\tau$ be the partial assignment setting $x_1, \dots, x_{k-1}$ to $\top$. Then, note that $\varphi|_\tau$ encodes $\amo(x_k, \dots, x_{n})$, and thus by~\Cref{thm:lower-bound}, it must have at least $2(n-k+1) + \sqrt{n-k+2} - 2$ clauses. Since $\varphi$ has at least as many clauses as $\varphi|_\tau$, this concludes the proof.
\end{proof}

\section{\texorpdfstring{Disjunctive Switching Yields a $2n + o_k(n)$ Encoding for $\amk$}{Disjunctive Switching Yields a 2n + o\_k(n) Encoding for AMK}}
\label{sec:gdpe}

Before the present work, the smallest known encodings for $\amk(x_1,\dots,x_n)$ were of size $(k+1)n + o(n)$ for $k \le 5$~\cite{product-k} and $7n - o(n)$ for $k \ge 6$~\cite{sinz}, and \Cref{cor:lower-bound-amk} gives a lower bound of $2n$ for $k = o(n)$. The encoding of size $(k+1)n + o(n)$ is based on a generalization of Chen's product encoding. In this section, we show that the generalized product encoding can be ``disjunctivized'', yielding the following:
\begin{restatable}{theorem}{amkdisjunctiveproduct} \label{thm:amk-disjunctive-product}
    There is an encoding of the $\amk(x_1,\dots,x_n)$ constraint using $2n + O(kn^{k/(k+1)})$ clauses and $O(kn^{k/(k+1)})$ auxiliary variables.
\end{restatable}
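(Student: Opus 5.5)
We arrange the inputs in a $k+1$-dimensional grid and keep the $2n$ budget by having each input appear in exactly two clauses. First recall the generalized product encoding: place the variables in a $d$-dimensional grid $[m]^d$ with $d=k+1$ and $m=\lceil n^{1/(k+1)}\rceil$. For each dimension $t\in[d]$, project away coordinate $t$, obtaining a $(d-1)$-dimensional slice index $s_t(x)\in[m]^{d-1}$. The product encoding introduces an auxiliary variable $y^{(t)}_s$ for each such projection and adds the clauses $\overline{x}\lor y^{(t)}_{s_t(x)}$ for every $t$, which costs $(k+1)n$ clauses. It then recursively encodes $\amk$ on each family $\{y^{(t)}_s\}_s$, which has $m^k=O(n^{k/(k+1)})$ variables. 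Correctness rests on a pigeonhole fact. If $k+1$ distinct points are true, then some coordinate direction $t$ has its $k+1$ projections pairwise distinct; equivalently, at least $k+1$ distinct slices in direction $t$ become true. This holds because any two distinct points differ in some coordinate. In a $(k+1)$-dimensional grid one can choose a direction that separates all of them. I expect this combinatorial claim to be the main obstacle, since the naive claim is false. The fallback is the precise criterion that the generalized product encoding~\cite{product-k} already relies on, and I would cite it rather than reprove it.

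The disjunctive step replaces the $k+1$ implications per input with two clauses. Introduce switch variables $w_1,\dots,w_d$ together with per-direction copies. For each input $x$ use the wide clause $\overline{x}\lor\bigvee_{t} y^{(t)}_{s_t(x)}$, where the intended meaning is that $x$ is routed through whichever direction is ``active.'' Routing an input through a single direction loses soundness, though, because the adversary picks the routing. I would therefore flip the roles. Each input gets one clause $\overline{x}\lor u_x$, where $u_x$ is a group variable, and one wide clause. Soundness must come from the switch: the auxiliary variables for direction $t$ are forced false unless $w_t$ is chosen. The clauses $\overline{w_t}\lor\overline{y^{(t)}_s}$ cost $O(m^k)$ each, and only one $w_t$ may be deselected, via an $O(k)$-size constraint.

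Concretely, the plan runs as follows. (i) Fix the grid and introduce the slice variables $y^{(t)}_s$. (ii) Make each input imply the disjunction of its $k+1$ slice variables; this clause is the single wide clause for that input. (iii) Add the second clause per input, $\overline{x}\lor$ (something cheap), used to rule out cheating. (iv) Impose $\amk$ on each slice family by recursion or by Sinz's $O(m^k)$ encoding, giving $O(k\,m^k)$ clauses in total. (v) Check completeness: given at most $k$ true inputs, exhibit an assignment. Here the pigeonhole argument is reversed. For at most $k$ points there exists a direction $t$ along which we may route every true point, with the slice family of direction $t$ holding at most $k$ true variables; this is automatic, since $k$ points have at most $k$ projections. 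Soundness then requires that for every routing choice, $k+1$ true points force some family to exceed $k$. This is exactly where the switch is needed: restrict routing to a single global direction $t$, chosen by the $w$ variables. In direction $t$, however, the $k+1$ points might share a slice. So the switch must guarantee that the chosen direction separates the points, which is again the pigeonhole fact. Since the adversary picks $t$, the encoding must actually require routing in all directions but pay for only one. I expect this soundness tension to be the crux. My first attempt would be a hybrid: route disjunctively through the $d$ directions, and use the per-direction $\amk$ constraints together with a counting argument showing that $k+1$ points cannot all be routed without overfilling some family.

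The clause count is then $2n+O(k\,n^{k/(k+1)})$, and the number of auxiliary variables is $O(k\,n^{k/(k+1)})$.
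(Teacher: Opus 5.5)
You correctly identify the crux, but you do not resolve it, so there is a genuine gap. Step (iii) is left as ``something cheap'', and $u_x$ is never defined. The encoding you do specify is unsound. It consists of the wide clause $\overline{x}\lor\bigvee_t y^{(t)}_{s_t(x)}$, a switch allowing only one active direction, and $\amk$ on every slice family. Take $k+1$ true inputs, two of which differ only in coordinate $t$. They occupy at most $k$ distinct $t$-slices, so activating direction $t$ and setting exactly those slices true satisfies every clause. Your fallback ``counting argument'' cannot repair this, because that assignment already satisfies all the per-direction $\amk$ constraints. The separation fact for $k+1$ points in $k+1$ dimensions that you worry about is actually true, but it is beside the point. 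It says a good direction exists, not that a satisfying assignment must pick it.

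The missing idea is to spend the second clause per input on a \emph{direct} projection onto one fixed direction, $\overline{x_{\vec i}}\lor A_{1,\vec i/1}$. These direction-1 slice variables retain every coordinate except the first, so they can be used to \emph{certify} the switch cheaply. The construction then reads as follows.
\begin{itemize}
\item The wide clause ranges over $d\in[2,k+1]$ only.
\item Impose $\amo$ on $w_2,\dots,w_{k+1}$ and add the clauses $\overline{A_{d,\vec i}}\lor w_d$.
\item Impose $\amk$ on each family $d\ge 2$, and none on family $1$.
\item Add the check $\overline{w_d}\lor\amo(\{A_{1,\vec i'} \mid \vec i'/(d-1)=\vec i\})$ for every $\vec i\in[p]^{k-1}$.
\end{itemize}
Apart from the $2n$ per-input clauses, all of this costs $O(kp^k)$ clauses.

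Soundness then needs no combinatorial lemma. Two true inputs sharing a $d$-slice differ only in coordinate $d\ne 1$. Their forced-true direction-1 slices are therefore distinct yet agree after deleting coordinate $d-1$, which violates the check. Hence $k+1$ true inputs force $k+1$ true variables $A_{d,\cdot}$, contradicting $\amk$.

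The lemma is needed for \emph{completeness} instead. The at most $k$ true direction-1 slices are at most $k$ points of $[p]^k$. The Bondy-type lemma for $k$ points in $k$ dimensions supplies a coordinate whose deletion keeps them distinct, which gives a $w_d$ that passes the check.

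Finally, handle $n\le (k+1)^k$ separately, e.g.\ with Sinz's linear encoding. Otherwise the $(k+1)p^k$ auxiliary variables with $p\ge 2$ can exceed $O(kn^{k/(k+1)})$.
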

\noindent
For $k=o(\log n / \log \log n)$, this encoding uses $2n + o(n)$ clauses and $o(n)$ auxiliary variables.

We begin by introducing the concept of disjunctive switching through an example. Then, we present the generalized product encoding of Frisch and Giannaros~\cite{product-k} and sketch how disjunctive switching allows us to obtain~\Cref{thm:amk-disjunctive-product} from it.


\subsection{Disjunctive switching} \label{sec-disjunctive-switching}

Let us illustrate the main idea behind disjunctive switching through an example and then offer a more abstract perspective in Appendix~\ref{app:disjunctive-switching}.

Recall that in the grid presentation of Chen's product encoding (\Cref{fig:grid}), having two input variables $x_{i, j}$ and $x_{i', j'}$ assigned to true will either contradict that at most one row is selected or that at most one column is selected. Based on which of these two constraints is violated, one might in hindsight feel as if one of the two implications (a) $x_{i, j} \to r_i$ (b) $x_{i, j} \to c_j$ was ``wasted''. 
Disjunctive switching will use \emph{disjunctive implications} $x_{i, j} \to (r_i \lor c_j)$ and then incorporate some \emph{switching} mechanism to negate whichever variable (i.e., $r_i$ or $c_j$) would have been ``wasted''.


We illustrate the technique with the following problem: we have a $3\times3$ grid of variables $x_{1,1}, \dots, x_{3,3}$, as illustrated in~\Cref{fig:exampleswitch}, and a ``direction'' variable $d$. We wish to encode that if $d$ is true, then at most one column of the grid contains a true $x$ variable, and that if $d$ is false, then at most one row of the grid contains a true $x$ variable. In other words, variable $d$ corresponds to which direction (horizontal or vertical) will be constrained. 
\begin{figure}[h]
    \centering
\begin{tikzpicture}
    \def\sizemult{0.75}
    \newcommand{\drawgrid}[2]{
        \begin{scope}[xshift=#1]
            \foreach \row/\col in {#2} {
                \fill[green, opacity=0.3] (\sizemult*\col-\sizemult*1, \sizemult*3-\sizemult*\row) rectangle (\sizemult*\col, \sizemult*4-\sizemult*\row);
            }
            
            \draw[step=\sizemult*1cm, black, thick] (0,0) grid (\sizemult*3,\sizemult*3);
            
            \foreach \row in {1,2,3} {
                \foreach \col in {1,2,3} {
                    \node at (\sizemult*\col-\sizemult*0.5, \sizemult*3.5-\sizemult*\row) {$x_{\row,\col}$};
                }
            }
        \end{scope}
    }

    \drawgrid{0cm}{1/1, 2/2, 3/3} 

    \def\moveup{0.25}
    \node[] at (1.25, -0.75+\moveup) {$d = \top$ \textcolor{red!80!black}{\ding{55}}};
    \node[] at (1.25, -1.25+\moveup) {$d = \bot$ \textcolor{red!80!black}{\ding{55}}};

      \node[] at (4.25+0.5, -0.75+\moveup) {$d = \top$ \textcolor{red!80!black}{\ding{55}}};
    \node[] at (4.25+0.5, -1.25+\moveup) {$d = \bot$ \textcolor{green!60!black}{\ding{51}}};

      \node[] at (7.25+1, -0.75+\moveup) {$d = \top$ \textcolor{green!60!black}{\ding{51}}};
    \node[] at (7.25+1, -1.25+\moveup) {$d = \bot$ \textcolor{green!60!black}{\ding{51}}};

      \node[] at (10.25+1.5, -0.75+\moveup) {$d = \top$ \textcolor{green!60!black}{\ding{51}}};
    \node[] at (10.25+1.5, -1.25+\moveup) {$d = \bot$ \textcolor{red!80!black}{\ding{55}}};

    \drawgrid{3.5cm}{1/1, 1/2, 1/3}

    \drawgrid{7cm}{2/2}

    \drawgrid{10.5cm}{1/2, 3/2}

    \node[text=green!60!black] at (-0.3, 1.9) {$r_1$};
    \node[text=green!60!black] at (-0.3, 1.9-0.75) {$r_2$};
    \node[text=green!60!black] at (-0.3, 1.9-1.5) {$r_3$};

    \node[text=green!60!black] at (-0.3+3.5, 1.9) {$r_1$};
    \node[] at (-0.3+3.5, 1.9-0.75) {$r_2$};
    \node[] at (-0.3+3.5, 1.9-1.5) {$r_3$};

    \node[] at (-0.3+7.0, 1.9) {$r_1$};
    \node[text=green!60!black] at (-0.3+7.0, 1.9-0.75) {$r_2$};
    \node[] at (-0.3+7.0, 1.9-1.5) {$r_3$};

    \node[text=green!60!black] at (-0.3+10.5, 1.9) {$r_1$};
    \node[] at (-0.3+10.5, 1.9-0.75) {$r_2$};
    \node[text=green!60!black] at (-0.3+10.5, 1.9-1.5) {$r_3$};

    \node[text=green!60!black] at (1*0.75-0.4, 2.5) {$c_1$};
    \node[text=green!60!black] at (2*0.75-0.4, 2.5) {$c_2$};
    \node[text=green!60!black] at (3*0.75-0.4, 2.5) {$c_3$};

    \node[text=green!60!black] at (1*3.5+1*0.75-0.4, 2.5) {$c_1$};
    \node[text=green!60!black] at (1*3.5+2*0.75-0.4, 2.5) {$c_2$};
    \node[text=green!60!black] at (1*3.5+3*0.75-0.4, 2.5) {$c_3$};

    \node[] at (2*3.5+1*0.75-0.4, 2.5) {$c_1$};
    \node[text=green!60!black] at (2*3.5+2*0.75-0.4, 2.5) {$c_2$};
    \node[] at (2*3.5+3*0.75-0.4, 2.5) {$c_3$};

    \node[] at (3*3.5+1*0.75-0.4, 2.5) {$c_1$};
    \node[text=green!60!black] at (3*3.5+2*0.75-0.4, 2.5) {$c_2$};
    \node[] at (3*3.5+3*0.75-0.4, 2.5) {$c_3$};


\end{tikzpicture}
    \caption{Example for the disjunctive switching technique. True input variables are highlighted in green. The \textcolor{green!60!black}{\ding{51}} and \textcolor{red!80!black}{\ding{55}} symbols indicate whether the encoding should be satisfiable for the given $d$.}
    \label{fig:exampleswitch}
\end{figure}
A na\"ive encoding would resemble Chen's product encoding; we start with $2$ clauses per input variable
\begin{equation}\label{eq:18clauses}
    \bigwedge_{i = 1}^3 \bigwedge_{j=1}^3 \lrp{x_{i,j} \rightarrow r_i} \land  \lrp{x_{i,j} \rightarrow c_j},
\end{equation}
and then we enforce $d \rightarrow \amo(c_1, c_2, c_3 )$ and  $\overline{d} \rightarrow \amo(r_1, r_2, r_3)$, both of which consist of 3 clauses.
Thus, this encoding uses a total of $2 \cdot 9 + 2 \cdot 3 = 24$ clauses.

The disjunctive switching paradigm allows us to reduce the number of clauses. We start with the $9$ disjunctive clauses
\begin{equation}\label{eq:disjunctive-imp}
    \bigwedge_{i = 1}^3 \bigwedge_{j=1}^3 \lrp{x_{i, j} \rightarrow (r_i \lor c_j)} \equiv  \bigwedge_{i = 1}^3 \bigwedge_{j=1}^3 \lrp{\overline{x_{i, j}} \lor r_i \lor c_j},
\end{equation}
instead of the $18$ clauses from constraint~\eqref{eq:18clauses}. Now, we add the  $\amo$ constraints
\begin{equation}\label{eq:rowamo}
    (\overline{r_1} \lor \overline{r_2}) \land  (\overline{r_1} \lor \overline{r_3}) \land  (\overline{r_2} \lor \overline{r_3}) \land  (\overline{c_1} \lor \overline{c_2}) \land  (\overline{c_1} \lor \overline{c_3}) \land  (\overline{c_2} \lor \overline{c_3}),
\end{equation}
and we complete our encoding by adding the following clauses:
\begin{equation}\label{eq:turning-off}    
    (d \to \overline{r_1}) \land (d \to \overline{r_2}) \land (d \to \overline{r_3}) \land     (\overline{d} \to \overline{c_1}) \land (\overline{d} \to \overline{c_2}) \land (\overline{d} \to \overline{c_3}). 
\end{equation}
Intuitively, if $d$ is true, then it ``turns off'' all the $r_i$ variables, and thus (i) their $\amo$ constraint in~\eqref{eq:rowamo} is trivially satisfied, and (ii) the disjunctive implications in constraint~\eqref{eq:disjunctive-imp} reduce to $x_{i, j} \to c_j$, and thus \eqref{eq:rowamo} enforces an at-most-one-column constraint. Naturally, for the other direction, $\overline{d}$ ``turns off'' the $c_j$ variables through~\eqref{eq:turning-off}, and thus~\eqref{eq:disjunctive-imp} together with~\eqref{eq:rowamo} end up enforcing an at-most-one-row constraint.
This disjunctive-switching encoding uses $9 + 6 + 6 = 21$ clauses, saving 3 from the na\"ive one. 
For an $n \times n$ grid, the na\"ive encoding would have $2n^2 + O(n)$ clauses, whereas the disjunctive-switching one would only have $n^2 + O(n)$.
Next, we will see that the same principle used in this encoding can serve as the basis of a compact encoding for $\amk$.

\subsection{The generalized product encoding}

\begin{figure}[b]
    \begin{subfigure}[t]{0.48\linewidth}
    \centering
    \scalebox{0.7}{%
\begin{tikzpicture}[
    3d view={-145}{32},
    line join=round,
    line cap=round
]
    \def\N{4}
    \def\S{4}
    \def\gap{0.02}
    \def\lw{0.35pt}
    \def\cellopacity{0.82}

    \newcommand{\TopCell}[3]{%
        \pgfmathsetmacro\xa{#2-1+\gap}
        \pgfmathsetmacro\xb{#2-\gap}
        \pgfmathsetmacro\ya{#3-1+\gap}
        \pgfmathsetmacro\yb{#3-\gap}
        \path[fill=#1,fill opacity=\cellopacity,draw=white,line width=\lw]
            (\xa,\ya,\S)--(\xb,\ya,\S)--(\xb,\yb,\S)--(\xa,\yb,\S)--cycle;
    }

    \newcommand{\XFaceCell}[3]{%
        \pgfmathsetmacro\ya{#2-1+\gap}
        \pgfmathsetmacro\yb{#2-\gap}
        \pgfmathsetmacro\za{#3-1+\gap}
        \pgfmathsetmacro\zb{#3-\gap}
        \path[fill=#1,fill opacity=\cellopacity,draw=white,line width=\lw]
            (0,\ya,\za)--(0,\yb,\za)--(0,\yb,\zb)--(0,\ya,\zb)--cycle;
    }

    \newcommand{\YFaceCell}[3]{%
        \pgfmathsetmacro\xa{#2-1+\gap}
        \pgfmathsetmacro\xb{#2-\gap}
        \pgfmathsetmacro\za{#3-1+\gap}
        \pgfmathsetmacro\zb{#3-\gap}
        \path[fill=#1,fill opacity=\cellopacity,draw=white,line width=\lw]
            (\xa,\S,\za)--(\xb,\S,\za)--(\xb,\S,\zb)--(\xa,\S,\zb)--cycle;
    }

    \newcommand{\HighlightCubie}[4]{%
        \pgfmathtruncatemacro{\cx}{#2}
        \pgfmathtruncatemacro{\cy}{#3}
        \pgfmathtruncatemacro{\cz}{#4}
        \ifnum\cz=\N
            \TopCell{#1}{\cx}{\cy}
        \fi
        \ifnum\cx=1
            \XFaceCell{#1}{\cy}{\cz}
        \fi
        \ifnum\cy=\N
            \YFaceCell{#1}{\cx}{\cz}
        \fi
    }

    \fill[black, opacity=0.08]
        (0.45,0.12,0) -- (4.25,0.12,0) -- (4.88,3.34,0) -- (1.08,3.34,0) -- cycle;
    \fill[black, opacity=0.055]
        (0.28,-0.02,0) -- (4.42,-0.02,0) -- (5.05,3.50,0) -- (0.95,3.50,0) -- cycle;
    \fill[black, opacity=0.035]
        (0.12,-0.18,0) -- (4.58,-0.18,0) -- (5.25,3.66,0) -- (0.84,3.66,0) -- cycle;
    \fill[black, opacity=0.02]
        (-0.02,-0.34,0) -- (4.72,-0.34,0) -- (5.42,3.84,0) -- (0.72,3.84,0) -- cycle;

    \draw[gray!70, line width=0.65pt, opacity=0.68] (0,0,0) -- (\S,0,0) -- (\S,\S,0) -- (0,\S,0) -- cycle;
    \draw[gray!70, line width=0.65pt, opacity=0.68] (0,0,\S) -- (\S,0,\S) -- (\S,\S,\S) -- (0,\S,\S) -- cycle;
    \draw[gray!70, line width=0.65pt, opacity=0.68] (0,0,0) -- (0,0,\S);
    \draw[gray!70, line width=0.65pt, opacity=0.68] (\S,0,0) -- (\S,0,\S);
    \draw[gray!70, line width=0.65pt, opacity=0.68] (\S,\S,0) -- (\S,\S,\S);
    \draw[gray!70, line width=0.65pt, opacity=0.68] (0,\S,0) -- (0,\S,\S);

    \foreach \x in {1,...,\N}{
        \foreach \z in {1,...,\N}{
            \YFaceCell{gray!24}{\x}{\z}
        }
    }
    \foreach \x in {1,...,\N}{
        \foreach \y in {1,...,\N}{
            \TopCell{gray!20}{\x}{\y}
        }
    }
    \foreach \y in {1,...,\N}{
        \foreach \z in {1,...,\N}{
            \XFaceCell{gray!16}{\y}{\z}
        }
    }

    \begin{scope}[transparency group]
        \def\cellopacity{0.30}
        \foreach \y in {1,...,\N}{
            \HighlightCubie{blue!70}{1}{\y}{4}
        }
        \def\cellopacity{0.30}
        \foreach \z in {1,...,\N}{
            \HighlightCubie{green!62!black}{1}{2}{\z}
        }
        \def\cellopacity{0.35}
        \foreach \x in {1,...,\N}{
            \HighlightCubie{red!82}{\x}{2}{4}
        }
    \end{scope}
    \def\cellopacity{0.82}

    \path[draw=black, line width=0.9pt]
        (0+\gap,1+\gap,\S)--(1-\gap,1+\gap,\S)--(1-\gap,2-\gap,\S)--(0+\gap,2-\gap,\S)--cycle;
    \path[draw=black, line width=0.9pt]
        (0,1+\gap,3+\gap)--(0,2-\gap,3+\gap)--(0,2-\gap,4-\gap)--(0,1+\gap,4-\gap)--cycle;

    \coordinate (axisO) at (-1.25, +0.30,0.25);
    \draw[-{Stealth[length=1.8mm]}, line width=0.55pt, gray!70!black, opacity=0.8]
        (axisO) -- ++(0.70,0,0) node[font=\scriptsize, anchor=south] {$x$};
    \draw[-{Stealth[length=1.8mm]}, line width=0.55pt, gray!70!black, opacity=0.8]
        (axisO) -- ++(0,0.70,0) node[font=\scriptsize, anchor=east] {$y$};
    \draw[-{Stealth[length=1.8mm]}, line width=0.55pt, gray!70!black, opacity=0.8]
        (axisO) -- ++(0,0,0.70) node[font=\scriptsize, anchor=south] {$z$};

    \node[font=\scriptsize, anchor=west] at (1.0,1.55,\S-0.15) {$p_{1,2,4}$};
    \node[font=\scriptsize, text=red!70!black, anchor=west] at (\S+0.6,2,\S+0.3) {$r^{x}_{2,4}$};
    \node[font=\scriptsize, text=blue!70!black, anchor=south] at (0.2,-0.2,\S) {$r^{y}_{1,4}$};
    \node[font=\scriptsize, text=green!50!black, anchor=east] at (-0.1,0.8,-0.7) {$r^{z}_{1,2}$};
\end{tikzpicture}%
}
    \caption{Rods implied by $p_{1,2,4}$: $r^x_{2,4}$, $r^y_{1,4}$, and $r^z_{1,2}$}
    \label{fig:gen-prod}
    \end{subfigure}
    \hfill
    \begin{subfigure}[t]{0.48\linewidth}
        \centering
        \scalebox{0.7}{%
\begin{tikzpicture}[
    3d view={-145}{32},
    line join=round,
    line cap=round
]
    \def\N{4}
    \def\S{4}
    \def\gap{0.02}
    \def\lw{0.35pt}
    \def\cellopacity{0.82}

    \colorlet{pA}{red!82}
    \colorlet{pB}{blue!70}
    \colorlet{pC}{green!62!black}

    \newcommand{\TopCell}[3]{%
        \pgfmathsetmacro\xa{#2-1+\gap}
        \pgfmathsetmacro\xb{#2-\gap}
        \pgfmathsetmacro\ya{#3-1+\gap}
        \pgfmathsetmacro\yb{#3-\gap}
        \path[fill=#1,fill opacity=\cellopacity,draw=white,line width=\lw]
            (\xa,\ya,\S)--(\xb,\ya,\S)--(\xb,\yb,\S)--(\xa,\yb,\S)--cycle;
    }

    \newcommand{\XFaceCell}[3]{%
        \pgfmathsetmacro\ya{#2-1+\gap}
        \pgfmathsetmacro\yb{#2-\gap}
        \pgfmathsetmacro\za{#3-1+\gap}
        \pgfmathsetmacro\zb{#3-\gap}
        \path[fill=#1,fill opacity=\cellopacity,draw=white,line width=\lw]
            (0,\ya,\za)--(0,\yb,\za)--(0,\yb,\zb)--(0,\ya,\zb)--cycle;
    }

    \newcommand{\YFaceCell}[3]{%
        \pgfmathsetmacro\xa{#2-1+\gap}
        \pgfmathsetmacro\xb{#2-\gap}
        \pgfmathsetmacro\za{#3-1+\gap}
        \pgfmathsetmacro\zb{#3-\gap}
        \path[fill=#1,fill opacity=\cellopacity,draw=white,line width=\lw]
            (\xa,\S,\za)--(\xb,\S,\za)--(\xb,\S,\zb)--(\xa,\S,\zb)--cycle;
    }

    \newcommand{\FaceCubie}[3]{%
        \pgfmathsetmacro\xa{#2-1+\gap}
        \pgfmathsetmacro\xb{#2-\gap}
        \pgfmathsetmacro\za{#3-1+\gap}
        \pgfmathsetmacro\zb{#3-\gap}
        \pgfmathsetmacro\yin{\S-1}
        \path[fill=#1, fill opacity=0.36, draw=#1!75!black, draw opacity=0.62, line width=0.42pt]
            (\xa,\S,\za)--(\xb,\S,\za)--(\xb,\S,\zb)--(\xa,\S,\zb)--cycle;
        \path[fill=#1!72!black, fill opacity=0.16, draw=none]
            (\xa,\S,\za)--(\xa,\yin,\za)--(\xa,\yin,\zb)--(\xa,\S,\zb)--cycle;
        \path[fill=#1!45!white, fill opacity=0.16, draw=none]
            (\xa,\S,\zb)--(\xb,\S,\zb)--(\xb,\yin,\zb)--(\xa,\yin,\zb)--cycle;
        \draw[#1!70!black, opacity=0.42, line width=0.32pt]
            (\xa,\yin,\za)--(\xb,\yin,\za)--(\xb,\yin,\zb)--(\xa,\yin,\zb)--cycle;
        \draw[#1!70!black, opacity=0.40, line width=0.30pt] (\xa,\S,\za)--(\xa,\yin,\za);
        \draw[#1!70!black, opacity=0.40, line width=0.30pt] (\xa,\S,\zb)--(\xa,\yin,\zb);
        \draw[#1!70!black, opacity=0.40, line width=0.30pt] (\xb,\S,\zb)--(\xb,\yin,\zb);
    }

    \newcommand{\ProjectionPrism}[3]{%
        \pgfmathsetmacro\xa{#2-1+\gap}
        \pgfmathsetmacro\xb{#2-\gap}
        \pgfmathsetmacro\za{#3-1+\gap}
        \pgfmathsetmacro\zb{#3-\gap}
        \pgfmathsetmacro\ya{\S}
        \pgfmathsetmacro\yb{\S-1.00}
        \path[fill=#1, fill opacity=0.08, draw=#1!70!black, draw opacity=0.24, line width=0.34pt]
            (\xa,\yb,\za)--(\xb,\yb,\za)--(\xb,\yb,\zb)--(\xa,\yb,\zb)--cycle;
        \path[fill=#1, fill opacity=0.06, draw=#1!70!black, draw opacity=0.20, line width=0.30pt]
            (\xa,\ya,\za)--(\xa,\yb,\za)--(\xa,\yb,\zb)--(\xa,\ya,\zb)--cycle;
        \path[fill=#1, fill opacity=0.06, draw=#1!70!black, draw opacity=0.20, line width=0.30pt]
            (\xa,\ya,\zb)--(\xb,\ya,\zb)--(\xb,\yb,\zb)--(\xa,\yb,\zb)--cycle;
    }

    \newcommand{\YFaceBorder}[2]{%
        \pgfmathsetmacro\xa{#1-1+\gap}
        \pgfmathsetmacro\xb{#1-\gap}
        \pgfmathsetmacro\za{#2-1+\gap}
        \pgfmathsetmacro\zb{#2-\gap}
        \path[draw=black!85, line width=0.8pt, dashed]
            (\xa,\S,\za)--(\xb,\S,\za)--(\xb,\S,\zb)--(\xa,\S,\zb)--cycle;
    }

    \newcommand{\TopProjCell}[3]{%
        \pgfmathsetmacro\xa{#2-1+\gap}
        \pgfmathsetmacro\xb{#2-\gap}
        \pgfmathsetmacro\ya{#3-1+\gap}
        \pgfmathsetmacro\yb{#3-\gap}
        \path[fill=#1, fill opacity=0.30, draw=white, line width=\lw]
            (\xa,\ya,\S)--(\xb,\ya,\S)--(\xb,\yb,\S)--(\xa,\yb,\S)--cycle;
    }

    \newcommand{\XProjCell}[3]{%
        \pgfmathsetmacro\ya{#2-1+\gap}
        \pgfmathsetmacro\yb{#2-\gap}
        \pgfmathsetmacro\za{#3-1+\gap}
        \pgfmathsetmacro\zb{#3-\gap}
        \path[fill=#1, fill opacity=0.30, draw=white, line width=\lw]
            (0,\ya,\za)--(0,\yb,\za)--(0,\yb,\zb)--(0,\ya,\zb)--cycle;
    }

    \newcommand{\TopProjSplitCell}[4]{%
        \pgfmathsetmacro\xa{#3-1+\gap}
        \pgfmathsetmacro\xb{#3-\gap}
        \pgfmathsetmacro\ya{#4-1+\gap}
        \pgfmathsetmacro\yb{#4-\gap}
        \path[fill=#1, fill opacity=0.30, draw=none]
            (\xa,\ya,\S)--(\xb,\ya,\S)--(\xb,\yb,\S)--cycle;
        \path[fill=#2, fill opacity=0.30, draw=none]
            (\xa,\ya,\S)--(\xb,\yb,\S)--(\xa,\yb,\S)--cycle;
        \path[draw=white, line width=\lw]
            (\xa,\ya,\S)--(\xb,\ya,\S)--(\xb,\yb,\S)--(\xa,\yb,\S)--cycle;
        \path[draw=gray!55, line width=0.32pt, opacity=0.75]
            (\xa,\ya,\S)--(\xb,\yb,\S);
    }

    \newcommand{\XProjSplitCell}[4]{%
        \pgfmathsetmacro\ya{#3-1+\gap}
        \pgfmathsetmacro\yb{#3-\gap}
        \pgfmathsetmacro\za{#4-1+\gap}
        \pgfmathsetmacro\zb{#4-\gap}
        \path[fill=#1, fill opacity=0.30, draw=none]
            (0,\ya,\za)--(0,\yb,\za)--(0,\yb,\zb)--cycle;
        \path[fill=#2, fill opacity=0.30, draw=none]
            (0,\ya,\za)--(0,\yb,\zb)--(0,\ya,\zb)--cycle;
        \path[draw=white, line width=\lw]
            (0,\ya,\za)--(0,\yb,\za)--(0,\yb,\zb)--(0,\ya,\zb)--cycle;
        \path[draw=gray!55, line width=0.32pt, opacity=0.75]
            (0,\ya,\za)--(0,\yb,\zb);
    }

    \fill[black, opacity=0.08]
        (0.45,0.12,0) -- (4.25,0.12,0) -- (4.88,3.34,0) -- (1.08,3.34,0) -- cycle;
    \fill[black, opacity=0.055]
        (0.28,-0.02,0) -- (4.42,-0.02,0) -- (5.05,3.50,0) -- (0.95,3.50,0) -- cycle;
    \fill[black, opacity=0.035]
        (0.12,-0.18,0) -- (4.58,-0.18,0) -- (5.25,3.66,0) -- (0.84,3.66,0) -- cycle;
    \fill[black, opacity=0.02]
        (-0.02,-0.34,0) -- (4.72,-0.34,0) -- (5.42,3.84,0) -- (0.72,3.84,0) -- cycle;

    \draw[gray!70, line width=0.65pt, opacity=0.68] (0,0,0) -- (\S,0,0) -- (\S,\S,0) -- (0,\S,0) -- cycle;
    \draw[gray!70, line width=0.65pt, opacity=0.68] (0,0,\S) -- (\S,0,\S) -- (\S,\S,\S) -- (0,\S,\S) -- cycle;
    \draw[gray!70, line width=0.65pt, opacity=0.68] (0,0,0) -- (0,0,\S);
    \draw[gray!70, line width=0.65pt, opacity=0.68] (\S,0,0) -- (\S,0,\S);
    \draw[gray!70, line width=0.65pt, opacity=0.68] (\S,\S,0) -- (\S,\S,\S);
    \draw[gray!70, line width=0.65pt, opacity=0.68] (0,\S,0) -- (0,\S,\S);

    \foreach \x in {1,...,\N}{
        \foreach \z in {1,...,\N}{
            \YFaceCell{gray!24}{\x}{\z}
        }
    }
    \foreach \x in {1,...,\N}{
        \foreach \y in {1,...,\N}{
            \TopCell{gray!20}{\x}{\y}
        }
    }
    \foreach \y in {1,...,\N}{
        \foreach \z in {1,...,\N}{
            \XFaceCell{gray!16}{\y}{\z}
        }
    }


    \FaceCubie{pA}{2}{3}
    \FaceCubie{pB}{4}{1}
    \FaceCubie{pC}{4}{3}
    \YFaceBorder{2}{3}
    \YFaceBorder{4}{1}
    \YFaceBorder{4}{3}

    \TopProjCell{pA}{2}{4}
    \TopProjSplitCell{pB}{pC}{4}{4}

    \XProjCell{pB}{4}{1}
    \XProjSplitCell{pA}{pC}{4}{3}


    \node[] at (1.5, \S, 2.50) {\scriptsize $r^y_{2,3}$};

     \node[] at (3.5, \S, 2.50) {\scriptsize $r^y_{4,3}$};

     \node[] at (3.5, \S, 0.50) {\scriptsize $r^y_{4,1}$};

     \node[] at (-0.45, \S+0.1, 1) {\scriptsize $r^x_{4,1}$};

          \node[] at (-0.45, \S-1.5, 2.5) {\large $A_{y, z}$};

    \node[] at (-0.45, \S+0.1, 3) {\scriptsize $r^x_{4,3}$};

    \node[] at (3.5, \S-0.5, 4) {\scriptsize $r^z_{4,4}$};

    \node[] at (1.5, \S-0.5, 4) {\scriptsize $r^z_{2,4}$};

     \node[] at (1.75, \S-1.5, 4.5) {\large $A_{x,y}$};

       \node[] at (1.85, \S-1.25, 0.25) {\large $A_{x,z}$};


    \coordinate (axisO) at (-1.25,+0.30,0.25);
    \draw[-{Stealth[length=1.8mm]}, line width=0.55pt, gray!70!black, opacity=0.8]
        (axisO) -- ++(0.70,0,0) node[font=\scriptsize, anchor=south] {$x$};
    \draw[-{Stealth[length=1.8mm]}, line width=0.55pt, gray!70!black, opacity=0.8]
        (axisO) -- ++(0,0.70,0) node[font=\scriptsize, anchor=east] {$y$};
    \draw[-{Stealth[length=1.8mm]}, line width=0.55pt, gray!70!black, opacity=0.8]
        (axisO) -- ++(0,0,0.70) node[font=\scriptsize, anchor=south] {$z$};

    \node[font=\scriptsize, text=black, anchor=south west] at (1.95,\S,2.95) {$p_{2,4,3}$};
    \node[font=\scriptsize, text=black, anchor=north east] at (4.0,\S,1.0) {$p_{4,4,1}$};
    \node[font=\scriptsize, text=black, anchor=south west] at (3.95,\S,2.95) {$p_{4,4,3}$};
\end{tikzpicture}%
}
        \caption{Three input variables and rods visible from different faces}
        \label{fig:gen-prod2}
    \end{subfigure}
        \caption{Example of the generalized product encoding for $k=2$}
\end{figure}

Frisch and Giannaros proposed a nice generalization of Chen's product encoding~\cite{product-k}; rather than imagining the input variables in a two-dimensional grid as in \Cref{fig:grid}, we instead imagine them in a $(k+1)$-dimensional grid, and use the higher-dimensional analogues of rows and columns to recursively encode $\amk$. Let us detail the particular case of $k=2$ and present the general case in Appendix~\ref{appendix:dgpe}.

For $k=2$, the input variables $p_{x,y,z}$ take place in a three-dimensional grid (for $n$ variables, it will be a $\ceil{\sqrt[3]{n}} \times \ceil{\sqrt[3]{n}} \times \ceil{\sqrt[3]{n}}$ grid), and each variable implies the three ``rods'' it belongs to, as e.g.,
\(
    (p_{1, 2, 4} \to r^x_{2, 4}) \land (p_{1,2,4} \to r^y_{1,4}) \land (p_{1,2,4} \to r^z_{1,2}),
\)
depicted in~\Cref{fig:gen-prod}.

Then, we let $A_{y,z} = \left\{r^x_{y, z} \mid 1 \leq y,z \leq \ceil{\sqrt[3]{n}} \right\}$ be the rods visible from the $yz$ face of the grid, and similarly let $A_{x,z}$ and $A_{x,y}$ be the rods visible from the $xz$ and $xy$ faces.
The encoding is then completed by recursively enforcing that from every face at most two rods are visible: $\amt\left(A_{y,z}\right) \land \amt(A_{x,z})  \land \amt(A_{x,y})$.

Clearly, if at most two input variables are true, then the encoding is satisfiable. Less obviously, if at least three input variables are true, then there will be some face $yz$, $xz$, or $xy$ from which at least three implied rods are visible, making the encoding unsatisfiable. This is directly analogous to the key insight justifying the product encoding, namely that if at least two input variables are true, then there is some \emph{side} (i.e., $x$-axis or $y$-axis) from which at least two implied rows/columns are visible.

Note that there are $3n$ \emph{projection} clauses, each stating that an input variable $p_{x,y,z}$ implies a rod variable. Moreover, there are $O(n^{2/3})$ rods $A_{y,z}$, $A_{x,z}$, and $A_{x,y}$, from where the recursive $\amt$ constraints use only $O(n^{2/3})$ clauses. Hence, the encoding is of size $3n + O(n^{2/3})$.

\subsection{The disjunctive generalized product encoding}

The bottleneck of the generalized product encoding is the projection clauses, which consist of $3n$ clauses for $k=2$. Using disjunctive switching, we can reduce this burden to only $2n$ clauses, yielding a $2n+o(n)$ encoding for fixed $k$. Here, we describe the encoding for $k=2$, leaving a full description to the appendix.

Recall that the correctness of the previous encoding relied on the following fact: if at least three input variables are true, then there is some face $yz$, $xz$, or $xy$ from which at least three implied rods are visible. Call such a face a \emph{witnessing} face. If we knew a priori which face is witnessing, then we could readily apply disjunctive switching to encode the projections using only $n$ clauses. Our main insight for this encoding is that, while we do not know which face is witnessing a priori, we can determine it after projecting onto the first face $yz$. For example, in \Cref{fig:gen-prod2}, the visible rods from the $yz$-face are $r^x_{4,1}$ and $r^x_{4,3}$, and since these rods share a $y$-coordinate, projecting onto the $xy$-face is a bad idea: the $r^x_{4,1}$ and $r^x_{4,3}$ variables do not carry information of the $x$-coordinates of the input variables implying them, so if those $x$-coordinates turned out to be equal (as is the case for $p_{4,4,1}$ and $p_{4,4,3}$ in \Cref{fig:gen-prod2}), their $xy$-projections would coincide ($r^z_{4,4}$ in \Cref{fig:gen-prod2}).
Therefore, we conclude that $xz$ must be the witnessing face if there is one.

We are now ready for a formal description of the encoding. First, we include projection clauses, direct ones for the $yz$-face and ``disjunctivized'' ones for the other two:
\begin{equation} \label{eq-disj-proj}
    \bigwedge_{x,y,z \in \left[\ceil{\sqrt[3]{n}}\right]} \left(\overline{p_{x,y,z}} \lor r^x_{y,z}\right) \land \bigwedge_{x,y,z \in \left[\ceil{\sqrt[3]{n}}\right]} \left(\overline{p_{x,y,z}} \lor r^y_{x,z} \lor r^z_{x,y}\right).
\end{equation}
Then, we impose the following constraints on the $xz$- and $xy$-faces: $\amt(A_{x,z}) \land \amt(A_{x,y})$; it will turn out to be superfluous to impose $\amt(A_{y,z})$.

Next, we introduce an auxiliary variable $w$, whose intended semantics is that $xz$ is a witnessing face (if there is one). 
By the discussion above, if there are two visible rods $r^x_{y,z_1}$ and $r^x_{y,z_2}$, then $w$ must be true. Thus, we include the following constraint:
\[
    \bigwedge_{y \in \left[\ceil{\sqrt[3]{n}}\right]} \left(w \lor \amo\left(\left\{r^x_{y,z} \mid z \in  \left[\ceil{\sqrt[3]{n}}\right]\right\}\right)\right), 
\]
and analogously for the symmetric case,
\[
 \bigwedge_{z \in \left[\ceil{\sqrt[3]{n}}\right]} \left(\overline{w} \lor \amo\left(\left\{r^x_{y,z} \mid y \in  \left[\ceil{\sqrt[3]{n}}\right]\right\}\right)\right).
\]
Finally, based on $w$, we constrain the projection variables for the $xz$- and $xy$-faces (cf.~\eqref{eq:turning-off}):
\[
     \bigwedge_{x,y \in \left[\ceil{\sqrt[3]{n}}\right]} \left(\overline{r^z_{x,y}} \lor \overline{w}\right) \land \bigwedge_{x,z \in \left[\ceil{\sqrt[3]{n}}\right]} \left(\overline{r^y_{x,z}} \lor w\right).
\]
This completes the description of the disjunctive generalized product encoding for $k=2$. There are $2n$ projection clauses in \eqref{eq-disj-proj}, and all other constraints use $O(n^{2/3})$ clauses. Thus, the total encoding size is $2n + O(n^{2/3})$.

\section{\texorpdfstring{Grid Compression Encodings for $\amk$}{Grid Compression Encodings for AMK}} \label{sec-grid-compression}
The disjunctive generalized product encoding for $\amk$ has $2n + o(n)$ clauses for $k = o(\tfrac{\log n}{\log \log n})$. In this section, we introduce a technique called \emph{grid compression}, which allows us to give an encoding for $\amk$ with $2n + o(n)$ clauses for $k = o(\sqrt[3]{n})$. We start by describing an encoding with $4n + o(n)$ clauses for any $k = o(n)$. Then, we show how a ``disjunctivized'' version of this yields a $2n + o(n)$ encoding for $k = o(\sqrt[3]{n})$.

\begin{figure}
\centering
\scalebox{0.8}{
\begin{tikzpicture}[ basenode/.style={circle, fill=blue!60!gray, text=white, inner sep=1pt}, auxnode/.style={circle, fill=green!60!gray, text=white, inner sep=1pt}]

\foreach \i in {1,2,3,4,5,6} {
  \foreach \j in {1,2,3,4} {
    \def\deepblue{blue!10!white}

    \pgfmathsetmacro\il{int(\i)}
         \pgfmathsetmacro\jl{int(\j)}
      \ifnum\il=2\relax
        \ifnum\jl=2\relax
          \def\deepblue{green!40!white}
         \fi
        \ifnum\jl=4\relax
          \def\deepblue{green!40!white}
         \fi
     \fi
     \ifnum\il=5\relax
        \ifnum\jl=1\relax
          \def\deepblue{green!40!white}
         \fi
          
     \fi

      \ifnum\il=6\relax
        \ifnum\jl=3\relax
          \def\deepblue{green!40!white}
         \fi
     \fi

   \node[draw, fill=\deepblue, minimum size=30pt] (b\i\j) at (1.05*\i, -1.05*\j) {$M_{\j, \i}$};

   }

}


\node [] (c1) at (1.1, -5) {$c_1$};
\node [] (c2) at (2.15, -5) {\Large \textcolor{green!60!black}{$c_2$}};
\node [] (c3) at (3.2, -5) {$c_3$};
\node [] (c4) at (4.25, -5) {$c_4$};
\node [] (c5) at (5.3, -5) {\Large \textcolor{green!60!black}{$c_5$}};
\node [] (c6) at (6.35, -5) {\Large \textcolor{green!60!black}{$c_6$}};


\foreach \i in {1, 2, 3} {
  \foreach \j in {1,2,3,4} {
    \def\deepblue{blue!10!white}

    \pgfmathsetmacro\il{int(\i)}

\pgfmathsetmacro\id{int(\i / 2)}
         \pgfmathsetmacro\jl{int(\j)}
      \ifnum\il=2\relax
        \ifnum\jl=2\relax
          \def\deepblue{green!40!white}
         \fi
        \ifnum\jl=4\relax
          \def\deepblue{green!40!white}
         \fi
     \fi
     \ifnum\il=3\relax
        \ifnum\jl=1\relax
          \def\deepblue{green!40!white}
         \fi
          
     \fi

      \ifnum\il=1\relax
        \ifnum\jl=3\relax
          \def\deepblue{green!40!white}
         \fi
     \fi

   \node[draw, fill=\deepblue, minimum size=30pt] (b\i\j) at (8 + 1.05*\i, -1.05*\j) {$L_{\j, \i}$};

   }
}

\draw[ultra thick, lipicsYellow] (1.9, -0.5) edge[-Latex, in=150, out=30] (10, -0.5);
\draw[ultra thick, lipicsYellow] (6.25, -0.5) edge[-Latex, in=150, out=30] (9, -0.5);
\draw[ultra thick, lipicsYellow] (5.2, -0.5) edge[-Latex, in=150, out=30] (11, -0.5);



\end{tikzpicture}
}
\caption{Illustration of the grid compression framework}\label{fig:compression}
\end{figure}
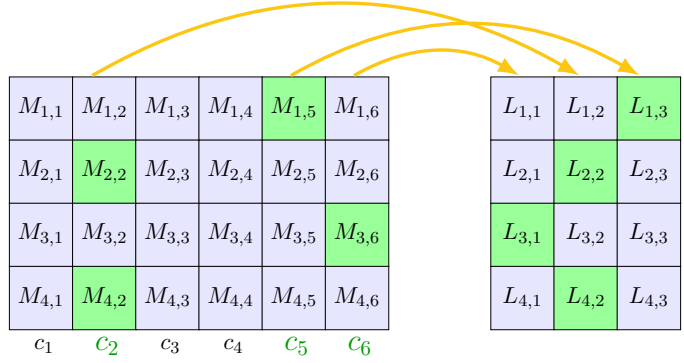

\subsection{General framework}\label{subsec:gridgeneral}
We arrange the input variables $x_1, \ldots, x_n$ in a $\lceil n / m \rceil \times m$ grid $M$, where $m$ is a parameter satisfying $k = o(m)$ and $m = o(n)$ that will be specified later for each encoding. If at most $k$ input variables are set to true, then at most $k$ columns of the grid contain input variables set to true (call these columns \textit{occupied}). Grid compression works by \emph{compressing} $M$ into a $\lceil n / m \rceil \times \ell$ grid $L$---where $\ell$ is a parameter satisfying $k = o(\ell)$ and $\ell = o(m)$---which will contain a copy of each occupied column in $M$, although potentially in a different order (see \Cref{fig:compression}). To that end, we will have auxiliary variables $L_{i, j}$ for $i \in [\ceil{n / m}]$ and $j \in [\ell]$.

To implement grid compression, we will employ intermediate constructions that assign the indices of occupied columns in $M$ to indices of columns in $L$. Then, if column index $j$ of $M$ is mapped to column index $p$ of $L$, we will have another set of constraints that ensure that $M_{:,j} = L_{:,p}$. We then enforce
\begin{equation}
    \amk(\{L_{i, j} \mid (i, j) \in [\ceil{n/m}] \times [\ell]\}) \label{eq:atmostkinL}
\end{equation}
using $O(\ceil{n/m} \times \ell) = o(n)$ clauses through, e.g., the parallel counter encoding~\cite{sinz}.

For each column $j \in [m]$, we will have an auxiliary variable $c_j$ representing whether that column of $M$ is occupied. Correspondingly, we will add clauses of the form
\begin{equation}
\bigwedge_{(i, j) \in [\ceil{n / m}] \times [m]} (\overline{M_{i, j}} \lor c_j). \label{eq:varimpliesitscolinM}
\end{equation}

While both encodings we present next include constraints~\eqref{eq:atmostkinL} and~\eqref{eq:varimpliesitscolinM},
the specific implementation of the copying from $M$ to $L$ depends on which variant of the encoding we employ. The commonality is that, for each $j \in [m]$, we associate some \textit{small} set $H_j \subset [\ell]$, which will represent the set of columns in $L$ that \textit{may} store a copy of the $j$th column of $M$. 
At a high level, our encodings need to use small sets $H_j$ to save clauses, and yet these small sets need to be chosen carefully so that it is still possible to choose a different target column from each $H_j$ such that column $j$ is occupied; this is a traditional problem in hashing, for which our two encodings use different solutions.

\subsection{Grid compression encoding}
\newcommand{\cp}{\mathsf{copy}}

We set $m = \Theta(\sqrt[3]{kn^2})$ and $\ell = \Theta(\sqrt[3]{k^2n})$ for suitable constants. We will have $|H_j| = 3$ for every $j \in [m]$, and the construction of these sets will be specified later. We introduce auxiliary variables $\cp_{j,p}$, representing that $M_{:,j}$ is to be copied to $L_{:,p}$, whose semantics are enforced by the following clauses:
\begin{equation}
     \bigwedge_{(i, j) \in [\ceil{n/m}] \times [m]} \; \bigwedge_{p \in H_j} \lrp{\overline{M_{i, j}} \lor \overline{\cp_{j, p}} \lor L_{i, p}}.\label{eq:colcopyconjunctive}
\end{equation}
We next enforce that each occupied column $j$ of $M$ is copied to some column $p$ of $L$ with $p \in H_j$:
\begin{equation}
    \bigwedge_{j \in [m]} \lrp{\overline{c_j} \lor \bigvee_{p \in H_j} \cp_{j, p}}.\label{eq:setpossiblecopycolsconjunctive}
\end{equation}
Finally, we enforce that no two columns of $M$ can be copied to the same column of $L$:
\begin{equation}
    \bigwedge_{p \in [\ell]} \amo(\{\cp_{j, p} \mid j \in [m] \text{ such that } p \in H_j\}).\label{eq:columnmatching}
\end{equation}

To make the encoding correct, we must choose our collection of sets $\mathcal{H} := \{ H_j \mid j \in [m] \}$ in such a way that every subset $\mathcal{F} \subset \mathcal{H}$ of size at most $k$ has a \emph{transversal}, i.e., it is possible for each $F \in \mathcal{F}$ to pick a distinct point $p_F \in F$. We show that such a set $\mathcal{H}$ exists by a probabilistic argument based on Hall's marriage theorem.

\begin{restatable}{lemma}{lemhall} \label{lem-hall}
    We can choose $m = \Theta(\sqrt[3]{kn^2})$ and $\ell = \Theta(\sqrt[3]{k^2n})$ such that there exists a set $\mathcal{H}$ such that (a) $|\mathcal{H}| = m$, (b) $H_i \in \binom{[\ell]}{3}$ for each $H_i \in \mathcal{H}$, and (c) any subset $\mathcal{F} \subset \mathcal{H}$ of size at most $k$ admits a transversal. \label{lem:conjunctivegridcompression2}
\end{restatable}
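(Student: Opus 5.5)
We will use the probabilistic method. Choose each $H_j$ independently and uniformly at random from $\binom{[\ell]}{3}$, for $j \in [m]$. By Hall's marriage theorem, a family $\mathcal{F}$ admits a transversal if and only if every subfamily $\mathcal{F}' \subseteq \mathcal{F}$ satisfies $|\bigcup \mathcal{F}'| \ge |\mathcal{F}'|$. Hence property (c) fails only if there is a set $S \subseteq [m]$ with $1 \le s := |S| \le k$ and a set $T \subseteq [\ell]$ with $|T| = s-1$ such that $H_j \subseteq T$ for all $j \in S$. Since $|H_j| = 3$, this can only happen for $s \ge 4$. We will show that the expected number of such bad pairs $(S,T)$ is below $1$.

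For fixed $S$ and $T$, the probability that every $H_j$ with $j \in S$ lies inside $T$ is
\[
\left(\binom{s-1}{3}\Big/\binom{\ell}{3}\right)^{s} \le (s/\ell)^{3s}.
\]
The number of choices of $S$ is $\binom{m}{s} \le (em/s)^s$, and the number of choices of $T$ is $\binom{\ell}{s-1} \le (e\ell/s)^{s}$ (for $s \le \ell/2$, say). So the expected number of bad pairs with $|S| = s$ is at most
\[
\left(\frac{em}{s}\cdot\frac{e\ell}{s}\cdot\frac{s^3}{\ell^3}\right)^{s} = \left(\frac{e^2 m s}{\ell^2}\right)^{s} \le \left(\frac{e^2 m k}{\ell^2}\right)^{s}.
\]

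It therefore suffices to have $e^2 mk/\ell^2 \le 1/2$, which makes the sum over $s \ge 4$ at most $\sum_{s\ge4} 2^{-s} < 1$. Then some choice of $\mathcal{H}$ has no bad pair, and by Hall's theorem every $\mathcal{F}$ of size at most $k$ has a transversal.

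It remains to set the parameters. Take $\ell = \lceil c\sqrt[3]{k^2 n}\rceil$ and $m = \lfloor \ell^2/(2e^2 k)\rfloor$, so that $m = \Theta(\ell^2/k) = \Theta(\sqrt[3]{k n^2})$ and the condition $e^2mk/\ell^2 \le 1/2$ holds. We also need $\ell \ge 3$ and $k \le \ell/2$, which hold since $k = o(\sqrt[3]{k^2 n})$ whenever $k = o(n)$. The only points needing care are checking the binomial estimates, in particular the $T$-count bound in the regime $s \le k$, and confirming that the exponent calculation balances so that $m\ell^{-2}k$ is the controlling quantity. This balance is exactly what forces the ratio $m \sim \ell^2/k$ and yields the stated asymptotics. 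I expect this bookkeeping to be the main obstacle; it is not conceptually difficult.
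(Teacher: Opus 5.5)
Your proposal is correct and follows essentially the same route as the paper. Both arguments choose i.i.d.\ uniform 3-subsets, reduce via Hall's theorem, and union-bound over index sets of size $s \le k$ and candidate small supersets of their union, arriving at the same controlling quantity $(e^2 m s/\ell^2)^s$. The differences are cosmetic: you take $|T| = s-1$, so only $s \ge 4$ contributes, whereas the paper uses supersets of size $\gamma$; and you set $m = \lfloor \ell^2/(2e^2k)\rfloor$ where the paper fixes $m = \lfloor\sqrt[3]{kn^2}\rfloor$ and $\ell = 4\lceil\sqrt[3]{k^2n}\rceil$.
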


This completes the description of the grid compression encoding; both correctness and a detailed count of the clauses are proved in the appendix, but in summary, constraint~\eqref{eq:varimpliesitscolinM} uses $n$ clauses, constraint~\eqref{eq:colcopyconjunctive} uses $3n$ clauses, and the remaining constraints all use $O(\sqrt[3]{kn^2})$ clauses.
This establishes \Cref{thm:conjunctive-compression}.

\subsection{Disjunctive grid compression}
Now we set $m = \Theta\lrp{\sqrt{nk\log_kn}}$ and $\ell = \Theta\lrp{k^2\log^2_kn}$ for
suitable choices of constants, and we will ensure that the $H_i$ are distinct and each of size
$\Theta(k \log_k n)$ for a suitable choice of constant.
In this encoding, the disjunctive switching manifests as the following set of clauses enforcing that each true variable in $M_{:,j}$ is copied to $L_{:,p}$ for \emph{some} column $p \in H_j$:
\begin{equation}
    \bigwedge_{(i, j) \in [\ceil{n / m}] \times [m]} \Big(\overline{M_{i, j}} \lor \bigvee_{p \in H_j} L_{i, p}\Big). \label{eq:colcopydisjunctive}
\end{equation}
This constraint, however, could allow true variables in different columns of $M$ to be copied to the same variable in $L$, and thus we need some mechanism to prevent this. To that end, we introduce auxiliary variables $\ov_p$ for each $p \in [\ell]$, which intuitively represent whether column $p$ of $L$ is \emph{overloaded}, meaning that there are multiple occupied columns of $M$ that could be copied to $p$. The semantics for $\ov_p$ are given by the following constraints:
\begin{equation}
    \bigwedge_{p \in [\ell]} \ov_p \lor \amo(\{c_j \mid j \in [m] \text{ such that } p \in H_j\}).\label{eq:getoverloadedcols}
\end{equation}

Finally, to make the disjunctive switching work, we enforce that if we copy into some $L_{i, p}$, then column $p$ must not be overloaded:
\begin{equation}
    \bigwedge_{i \in [\ceil{n/m}]} \bigwedge_{p \in [\ell]} \left(\overline{L_{i, p}} \lor \overline{\ov_p}\right). \label{eq:removeoverloadedcols}
\end{equation}

These are all of the constraints for the disjunctive grid compression encoding, and it remains to specify 
$\mathcal{H} := \{H_j \mid j \in [m]\}$. 
For the encoding to be correct, we require the following: if we pick any $k$ sets $H_{j_1}, \ldots, H_{j_k}  \in \mathcal{H}$, then $H_{j_1} \not\subseteq H_{j_2} \cup \dots \cup H_{j_k}$. This so-called \emph{$(k-1)$-cover-free property} has been studied by numerous authors \cite{kautz-singleton,cover-free, Idalino2023}. We prove the existence of a $(k-1)$-cover-free family $\mathcal{H}$ that satisfies our requirements using an elegant algebraic construction from \cite{cover-free}:
\begin{restatable}{lemma}{lemreedsolomon} \label{lem:disjointgridcompression3}
    There exists $q = \Theta(k \log_k n)$ such that there exists a $(k-1)$-cover-free set $\mathcal{H}$ satisfying $H_i \in \binom{[\ell]}{q}$ for each $H_i \in \mathcal{H}$ and $|\mathcal{H}| \ge m = \Theta\lrp{\sqrt{nk\log_kn}}$.
\end{restatable}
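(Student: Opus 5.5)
We plan to use the Reed--Solomon style construction of Kautz--Singleton type. Fix a prime power $Q$ and an integer degree bound $d$. To each polynomial $f$ over $\mathbb{F}_Q$ of degree at most $d$ we associate the set
\[
H_f := \{(a, f(a)) \mid a \in \mathbb{F}_Q\} \subseteq \mathbb{F}_Q \times \mathbb{F}_Q.
\]
We then identify $\mathbb{F}_Q \times \mathbb{F}_Q$ with $[Q^2]$, so that $\ell = Q^2$. Each $H_f$ has exactly $q := Q$ elements, since it is the graph of a function. Distinct polynomials agree on at most $d$ points, so $|H_f \cap H_g| \le d$ whenever $f \neq g$. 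Hence for any distinct $f, g_2, \dots, g_k$,
\[
|H_f \cap (H_{g_2} \cup \dots \cup H_{g_k})| \le (k-1)d.
\]
As long as $Q > (k-1)d$, the set $H_f$ is therefore not covered by the others, which is exactly the $(k-1)$-cover-free property. The number of sets is $Q^{d+1}$, and distinct polynomials give distinct sets.

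Next we choose the parameters. We need $Q^{d+1} \ge m = \Theta(\sqrt{nk\log_k n})$ and $Q > (k-1)d$, and we want $Q = \Theta(k\log_k n)$ so that $\ell = Q^2 = \Theta(k^2\log_k^2 n)$ matches the setting of the section. We take $d := \lceil \log_k n \rceil$ and let $Q$ be a prime (by Bertrand's postulate) with $kd < Q \le 2kd$. Then $Q = \Theta(k \log_k n)$ and $Q > (k-1)d$. For the count, note that $Q \ge k$, so $Q^{d+1} \ge k^{d+1} \ge k n \ge m$ whenever $m \le kn$, which holds in the relevant regime $k \le n$. If we want exactly $m$ sets, we simply take any $m$ of them; the cover-free property is inherited by subfamilies.

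The remaining step is bookkeeping about the asymptotic regime. The constants hidden in $\ell = \Theta(k^2\log_k^2 n)$ must be consistent with $Q^2$; we take $\ell := Q^2$ exactly. The statement only requires $\ell = \Theta(\cdot)$, and $Q$ is within a factor $2$ of $kd$, so this is fine. For small $k$ (say $k=1$, where $\log_k$ degenerates) we need a convention such as using $\max(k,2)$, and the argument is unchanged.

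We expect no real obstacle. The only mildly delicate points are these degenerate edge cases and confirming that $Q^{d+1}$ comfortably exceeds $m$. The latter is loose, since $Q^{d+1} \ge n$ already dominates $\sqrt{nk\log_k n}$ for $k\le n$.
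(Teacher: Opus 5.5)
Your proposal is correct and is essentially the paper's proof. It uses the same Reed--Solomon (Kautz--Singleton) family of polynomial graphs $\{(a,f(a)) : a\in\mathbb{F}_q\}$ in $\mathbb{F}_q^2$ with $\ell=q^2$, and cover-freeness again comes from two distinct low-degree polynomials agreeing on few points (the paper argues by pigeonhole, you by $|H_f\cap H_g|\le d$ plus a union bound). The only difference is parameterization: you fix $d=\lceil\log_k n\rceil$ and take a Bertrand prime $Q\in(kd,2kd]$, which makes the count $Q^{d+1}\ge k^{d+1}\ge kn\ge m$ explicit, whereas the paper fixes $q$ as a power of two and bounds the degree by $\lceil q/(k-1)\rceil$, leaving the count to ``suitable choices of constants.''
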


With $\mathcal{H}$ chosen according to the lemma, the encoding is complete. In contrast to the probabilistic proof of \Cref{lem-hall}, the construction in \Cref{lem:disjointgridcompression3} is deterministic and efficient.

\section{Concluding Remarks and Empirical Evaluation}\label{sec:discussion}

  

We solved a fundamental problem in the theory of CNF encodings by showing that the minimum number of clauses in an encoding of $\amk(x_1,\dots,x_n)$ is $2n + \widetilde{\Theta}(\sqrt{n})$ for each fixed $k$. We also tightened the upper bound on the minimum number of clauses in an encoding of $\amo$, refuting a conjecture of Chen~\cite{Chen2010ANS} and resolving a long-standing open problem in circuit complexity~\cite[p.~158]{bloniarz1979}. En route to these results, we introduced (a) a graph-theoretic framework for constructing and analyzing $\amo$ encodings and (b) disjunctive switching, a general technique for compactly encoding switch statements into CNF, and (c) grid compression, a framework for constructing CNF encodings of boolean functions supported on inputs of small Hamming weight.

Our constructions are structurally very different from other encodings in the literature; for instance, the multipartite encoding for $\amo$ is notable for using clauses of width 3, despite the fact that $\amo$ is a 2-CNF function. Ku{\v c}era, Savick{\'{y}}, and Vorel~\cite{lower-bound} asked whether the smallest propagation-complete encoding of an antitone 2-CNF function is always 2-CNF. While an affirmative answer has some \emph{prima facie} plausibility, our construction concretely demonstrates how wide clauses can be useful even for $\amo$, the simplest antitone 2-CNF function. In fact, we conjecture that Chen's product encoding is essentially optimal for 2-CNF encodings, in the sense that every 2-CNF encoding for $\amo$ has at least $2n + 4 \sqrt{n} - o(\sqrt{n})$ clauses.

Our encodings from \Cref{thm:amk-disjunctive-product,thm:disjunctive-compression} are also unique for employing disjunctive switching, which has not been used in any previous CNF encodings to the best of our knowledge. Notably, they allow us to surpass the $3n$ lower bound for monotone circuits~\cite{sergeev}.
Thus, although previous researchers have noted the close connection between CNF encodings for $\amk$ and circuits for threshold functions (see, e.g., \cite{lower-bound,sergeev,cnf-symmetric}), our results demonstrate the importance of viewing CNF encodings as constituting an important model of computation in their own right. We expect disjunctive switching, and the innovative use of wide clauses more generally, to be important for harnessing the full power of CNF encodings in other contexts as well.


The above considerations demonstrate how rich the theory of CNF encodings is, even for very simple boolean functions. Given the importance of CNF encodings to SAT solving, and the fact that ``not much is known about CNF encodings from a theoretical point of view''~\cite{cnf-symmetric}, we expect the further development of this theory to be of great practical significance.

\subparagraph*{Empirical results.}
Even though propagation completeness is often described as an essential requirement for practical encodings of cardinality constraints~\cite[p. 95]{karpinski2019cnfencodingscardinalityconstraints}, exploratory experiments with our encodings show they can be competitive in practice---especially the disjunctive grid compression (DGC) encoding. \Cref{fig:enc-comparison} depicts results on a family of instances in which one would expect propagation completeness to play a role; $3$ random disjoint subsets of 10 variables (out of $n$) are chosen, and we add clauses enforcing that at least one variable in each subset is true. Then, a global $\amt$ constraint makes the instance unsatisfiable. We compared against all cardinality encodings present in PySAT~\cite{imms-sat18}, of which the \emph{sequential counter}~\cite{sinz} performed best. The number of clauses does \emph{not} reliably predict which encoding performed best, and yet this proxy led us to develop encodings that perform well in practice.
Appendix~\ref{sec:empirical} includes a more detailed experimental evaluation, but in a nutshell, we believe our results challenge the folklore claim of propagation completeness being absolutely necessary.

\begin{figure}
    \begin{subfigure}{0.49\linewidth}
        \begin{tikzpicture}
\begin{axis}[
  title={},
  xlabel={$n$},
  grid=major,
  legend pos=north east,
  xmin = 100000,
  xmax = 3000001,
  ymin = 0,
  ymax = 2000,
  height=5.0cm,
  width=7cm,
  legend style={
    font=\footnotesize,
    nodes={scale=0.7, transform shape}
}
]

\addplot+[mark=diamond*, lipicsYellow, mark options={fill=lipicsYellow}, line width=0.8pt] coordinates {(100000, 159.8) (200000, 395.6) (300000, 746.4) (400000, 1090.4) (500000, 1622.4) (600000, 1929.4) (700000, 2504.2) (800000, 2550.4) (900000, 2691) (1000000, 3170.6) (1100000, 3938.8) (1200000, 4185.4) (1300000, 3815.8) (1400000, 5657.8) (1500000, 5163.4) (1600000, 5879.8) (1700000, 6570) (1800000, 6208.6) (1900000, 6756.4) (2000000, 6502.6) (2100000, 8226.8) (2200000, 7014.8) (2300000, 9410) (2400000, 8657.6) (2500000, 9856) (2600000, 9959.6) (2700000, 10839) (2800000, 12628.4) (2900000, 12601) (3000000, 10012.6)};
\addlegendentry{GP}

\addplot+[mark=square*, blue!60, mark options={fill=blue!60}, line width=0.8pt] coordinates {(100000, 21.4) (200000, 49) (300000, 76.8) (400000, 104.6) (500000, 135.2) (600000, 190.4) (700000, 245.2) (800000, 286.6) (900000, 307.4) (1000000, 252.2) (1100000, 326) (1200000, 358.4) (1300000, 437.8) (1400000, 618.4) (1500000, 482.8) (1600000, 598.2) (1700000, 641.2) (1800000, 565.6) (1900000, 582) (2000000, 571) (2100000, 623.2) (2200000, 737.4) (2300000, 769.2) (2400000, 798.4) (2500000, 825.8) (2600000, 855.6) (2700000, 875.6) (2800000, 976.4) (2900000, 945.4) (3000000, 1051)};
\addlegendentry{DGP}
\addplot+[mark=*, red!90!white, mark options={fill=red!90!white}, line width=0.8pt] coordinates {(100000, 20.6) (200000, 40.4) (300000, 63.4) (400000, 80.8) (500000, 101.2) (600000, 139) (700000, 146.2) (800000, 164.2) (900000, 200.2) (1000000, 218.4) (1100000, 232.2) (1200000, 261.8) (1300000, 276.8) (1400000, 328) (1500000, 355.4) (1600000, 371.4) (1700000, 409.6) (1800000, 445.4) (1900000, 474) (2000000, 501.4) (2100000, 538.8) (2200000, 573.4) (2300000, 610.2) (2400000, 625.2) (2500000, 677.2) (2600000, 802.6) (2700000, 778.4) (2800000, 856.2) (2900000, 850.2) (3000000, 848)};
\addlegendentry{Seq. Counter}
\addplot+[mark=triangle*, green!60!black, mark options={fill=green!60!black}, line width=0.8pt] coordinates {(100000, 11.8) (200000, 28) (300000, 53.4) (400000, 55.4) (500000, 70.6) (600000, 97) (700000, 101.6) (800000, 120.6) (900000, 115.4) (1000000, 134.2) (1100000, 169.6) (1200000, 169.4) (1300000, 170) (1400000, 227.2) (1500000, 251.6) (1600000, 239) (1700000, 257.4) (1800000, 285.2) (1900000, 240.4) (2000000, 253.6) (2100000, 323.4) (2200000, 336.8) (2300000, 334.2) (2400000, 345.6) (2500000, 377.6) (2600000, 387.2) (2700000, 419.4) (2800000, 503.2) (2900000, 400.8) (3000000, 434.4)};
\addlegendentry{DGC}
\end{axis}
\end{tikzpicture}
\caption{Runtime (ms)}
    \end{subfigure}
    \begin{subfigure}{0.49\linewidth}
    \centering
        \begin{tikzpicture}
  \begin{axis}[
    name=main,
    xlabel={$n$},
    grid=major,
    legend pos=north west,
    xmin = 100000, xmax = 3000001,
    ymin = 0,
    height=5.0cm, width=7cm,
    legend style={font=\footnotesize, nodes={scale=0.7, transform shape}},
  ]

    \addplot+[mark=*, red!90!white, mark options={fill=red!90!white}, line width=0.8pt]
    coordinates {(200000, 999993) (400000, 1999993) (600000, 2999993) (800000, 3999993)
      (1000000, 4999993) (1200000, 5999993) (1400000, 6999993) (1600000, 7999993)
      (1800000, 8999993) (2000000, 9999993) (2200000, 10999993) (2400000, 11999993)
      (2600000, 12999993) (2800000, 13999993) (3000000, 14999993)};
  \addlegendentry{Seq.\ Counter}

\addplot+[mark=diamond*, lipicsYellow, mark options={fill=lipicsYellow}, line width=0.8pt] coordinates {(200000, 654117) (400000, 1273908) (600000, 1892916) (800000, 2506839) (1000000, 3120159) (1200000, 3737979) (1400000, 4349103) (1600000, 4959408) (1800000, 5571486) (2000000, 6181791) (2200000, 6793356) (2400000, 7401942) (2600000, 8011734) (2800000, 8622291) (3000000, 9232587)};
\addlegendentry{GP}

  \addplot+[mark=square*, blue!60, mark options={fill=blue!60}, line width=0.8pt]
    coordinates {(200000, 462163) (400000, 897953) (600000, 1329347) (800000, 1754915)
      (1000000, 2179177) (1200000, 2605203) (1400000, 3024873) (1600000, 3445443)
      (1800000, 3866913) (2000000, 4284737) (2200000, 4707827) (2400000, 5122113)
      (2600000, 5541665) (2800000, 5956707) (3000000, 6377267)};
  \addlegendentry{DGP}

  \addplot+[mark=triangle*, green!60!black, mark options={fill=green!60!black}, line width=0.8pt]
    coordinates {(200000, 448996) (400000, 877578) (600000, 1301906) (800000, 1723022)
      (1000000, 2143170) (1200000, 2561360) (1400000, 2979142) (1600000, 3395660)
      (1800000, 3811884) (2000000, 4227056) (2200000, 4641736) (2400000, 5056372)
      (2600000, 5470200) (2800000, 5883808) (3000000, 6297534)};
  \addlegendentry{DGC}

  \draw[black]
    (axis cs:2550000, 5400000) rectangle (axis cs:2850000, 6100000);

\coordinate (zSW) at (axis cs:2550000,5400000);
\coordinate (zNE) at (axis cs:2850000,6100000);
\coordinate (zSE) at (axis cs:2850000,5400000);
\draw[black] (zSW) rectangle (zNE);

  \end{axis}

  \begin{axis}[
    name=zoom,
    at={(main.south east)}, anchor=south east,
    xshift=-0.7mm, yshift=0.5mm, 
    width=3.2cm, height=2.4cm,
    xmin=2550000, xmax=2850000,
    ymin=5400000, ymax=6100000,
    grid=major,
    xticklabel=\empty,
    yticklabel=\empty,
     scaled x ticks=false,
    scaled y ticks=false,
     xlabel={}, ylabel={},
     axis background/.style={fill=white},
  ]

  \addplot+[mark=square*, blue!60, mark options={fill=blue!60}, line width=0.8pt,
            mark size=1.2pt] coordinates {(200000, 462163) (400000, 897953) (600000, 1329347)
      (800000, 1754915) (1000000, 2179177) (1200000, 2605203) (1400000, 3024873)
      (1600000, 3445443) (1800000, 3866913) (2000000, 4284737) (2200000, 4707827)
      (2400000, 5122113) (2600000, 5541665) (2800000, 5956707) (3000000, 6377267)};

  \addplot+[mark=triangle*, green!60!black, mark options={fill=green!60!black}, line width=0.8pt,
            mark size=1.4pt] coordinates {(200000, 448996) (400000, 877578) (600000, 1301906)
      (800000, 1723022) (1000000, 2143170) (1200000, 2561360) (1400000, 2979142)
      (1600000, 3395660) (1800000, 3811884) (2000000, 4227056) (2200000, 4641736)
      (2400000, 5056372) (2600000, 5470200) (2800000, 5883808) (3000000, 6297534)};


  \end{axis}

\draw[black, thin] (zSW) -- (zoom.north west);
\draw[black, thin] (zSE) -- (zoom.north east);
\end{tikzpicture}
\caption{Number of clauses}
    \end{subfigure}
    \caption{Comparison of encodings for $\amt$ (UNSAT). We abbreviate the generalized product encoding as GP and its disjunctive variant as DGP.}
    \label{fig:enc-comparison}
\end{figure}
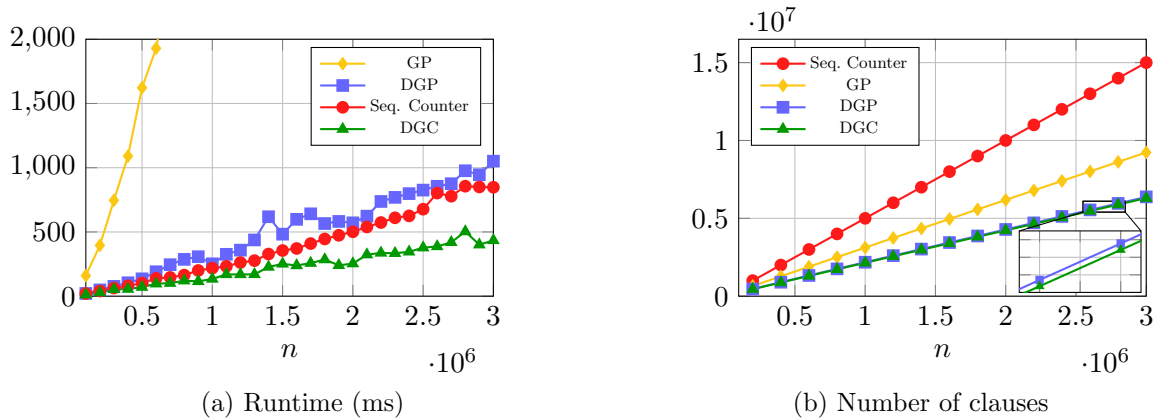

\section*{Acknowledgments}
We are grateful to Petr Savick\'y for finding an error in an earlier version of this paper. We also thank Marijn Heule for his suggestions on improving the presentation of this paper.

\section*{Funding}
Krapivin was supported by the NSF grant CNS2504471, Packard Foundation grant 2020-71730, and the Jeanne B. and Richard F. Berdik ARCS Pittsburgh Endowed Scholar Award.
    Przybocki and Subercaseaux were supported by NSF grant DMS-2434625. Przybocki was additionally supported by the NSF Graduate Research Fellowship Program under Grant
No. DGE-2140739. 

\bibliographystyle{abbrvurl}
\bibliography{bib}

\appendix

\section{\texorpdfstring{Proofs for \Cref{sec-multipartite}}{Proofs for the multipartite encoding}}
\subsection{\texorpdfstring{Proof of \Cref{lem-amo-indicator}}{Proof of the indicator lemma}}

\lemamoindicator*
\begin{proof}
    Rename the input variables $x_1,\dots,x_n$ to be of the form $x_{i,j}$ with $i,j \in [p]$, where $p = \ceil{\sqrt{n}}$. Let $E$ be the set of ordered pairs $(i,j)$ to which some variable is assigned. Then, we can use the following extension of the product encoding:
    \begin{multline*}
    \amo'(\{ x_{i, j} \mid (i, j) \in E\};z) := \left(\bigwedge_{(i, j) \in E} (\overline{x_{i, j}} \lor r_i) \land  (\overline{x_{i, j}} \lor c_j) \right) \\ \land \textsf{PE}(r_1, \ldots, r_p) \land \textsf{PE}(c_1, \ldots, c_p) \land 
    \bigwedge_{i \in [p]} (\overline{r_i} \lor z).
    \end{multline*}
    It is not hard to see that the encoding uses  $2n + O(\sqrt{n})$ clauses and $O(\sqrt{n})$ auxiliary variables, and the justification of the correctness and propagation completeness of this encoding is very similar to that of the product encoding.
\end{proof}

\subsection{\texorpdfstring{Proof of \Cref{thm:circuit}}{Proof of the threshold-2 circuit theorem}}

Now we describe how the same construction yields a circuit for $T_2(x_1,\dots,x_n)$. Let $S_2 \colon \{\bot, \top\}^n \to \{\bot, \top\}^{2}$ be the boolean operator defined by 
\[ 
S_2(x_1, \dots, x_n) := \lrp{x_1 \lor \dots \lor x_n,\; T_2(x_1, \dots, x_n)}.
\]
We make use of two standard facts about circuits for threshold functions. First, as an analogue of \Cref{lem-amo-indicator}, $S_2$ has a circuit of size $2n + O(\sqrt{n})$~\cite{sergeev}. Second, as an analogue of the generalized product encoding, $T_3$ has a circuit of size $3n + O(n^{2/3})$~\cite{dunne1984,wegener1987}, where $T_3$ is the negation of $\amt$. Note that circuits for $T_3$ cannot be single level, and thus neither will be our circuit for $T_2$, which internally uses a $T_3$ circuit.

\thmcircuit*
\begin{proof}
    Let $G$ be the complete $p$-partite graph with $q$ vertices within each part, where $p = \ceil{\sqrt[6]{n}}+1$ and $q = \ceil{\sqrt{2} \cdot \sqrt[3]{n}}$, so $|E(G)| \ge n$. Let $P_1,\dots,P_p$ be the parts of $G$. Assign the variables $x_1,\dots,x_n$ to distinct edges of $G$, renaming the variables so that $x_e$ is the variable assigned to the edge $e$. Let $E$ be the set of edges of $G$ to which some variable is assigned. Discard any vertices of $G$ not incident to an edge from $E$.

    For each $i \in V(G)$, let $y_i = \bigvee_{\substack{e \in E \\ i \in e}} x_e$. Then, let $(z_k,w_k) = S_2(\{y_i \mid i \in P_k\})$ for each $k \in [p]$. Then, our circuit for $T_2$ is as follows:
    \[
        T_2(\{x_e \mid e \in E\}) := \bigvee_{k \in [p]} w_k \lor T_3(z_1,\dots,z_p).
    \]

    The justification for the correctness of the circuit is similar to the proof of \Cref{thm:am1}. The number of gates required to compute all of the $y_i$ variables is at most $2n - \sqrt{2n}$. The number of gates required to compute all of the $(z_k,w_k)$ variables is at most $p \cdot (2q + O(\sqrt{q})) = 2 \sqrt{2n} + O(\sqrt[3]{n})$. In total, the gate complexity of the circuit is $2n + \sqrt{2n} + O(\sqrt[3]{n})$, as desired.
\end{proof}

\section{\texorpdfstring{A Lower Bound for $\amo$}{A Lower Bound for AMO}}
\begin{definition}
    Let $\vec{x} = (x_1,\dots,x_n)$. We say that $\varphi(\vec{x},\vec{y})$ is a \emph{generalized $P$-encoding on $n$ input variables} if it satisfies the following two conditions:
    \begin{enumerate}[label=(\alph*)]
        \item $\varphi \land x_i$ is satisfiable for each $i \in [n]$, and
        \item $\varphi \models \overline{x_i} \lor \overline{x_j}$ for each $i,j \in [n]$ with $i \neq j$.
    \end{enumerate}
\end{definition}
Ku{\v c}era, Savick{\'{y}}, and Vorel~\cite{lower-bound} defined $P$-encodings almost equivalently, but with condition (b) using $\vdash_1$ (i.e., derivable by unit propagation) instead of our semantic implication $\models$. Our proof will leverage a few auxiliary lemmas of their work, whose proofs apply directly in our context by simply changing each $\vdash_1$ for a $\models$. Nonetheless, in some cases (e.g., our~\Cref{lemma:no-single-clause}) we had to use a different proof since theirs relied essentially on unit propagation.

It is easy to see that $\varphi(\vec{x},\vec{y})$ is a generalized $P$-encoding if and only if it is an encoding of $\amo(x_1,\dots,x_n)$ or $\eo(x_1,\dots,x_n)$, where $\eo(x_1,\dots,x_n)$ is the boolean function that evaluates to true if and only if exactly one of its inputs is true.

Throughout the proof, for a CNF formula $\varphi$, and an input variable $x_i$, we will use notation $Q_{\varphi, i} := \{ C \in \varphi : \overline{x_i} \in C\}$ for the set of clauses containing $x_i$ negatively.

\begin{definition}
    Let $\varphi(\vec{x},\vec{y})$ be a generalized $P$-encoding on $n$ input variables. We say that $\varphi$ is in \emph{regular form} if, for each $i \in [n]$, we have the following:
    \begin{enumerate}[label=(\alph*)]
        \item $|Q_{\varphi,i}| = 2$,
        \item $x_i$ is the only input variable in the clauses in $Q_{\varphi,i}$, and
        \item each clause in $Q_{\varphi,i}$ is binary.
    \end{enumerate}
\end{definition}

We say that a generalized $P$-encoding $\varphi$ on $n$ variables is \emph{prime} if any CNF $\varphi'$ obtained by removing any literal from any of the clauses of $\varphi$, is not a generalized $P$-encoding on $n$ variables.

\begin{lemma} \label{lem-regular}
    If $\varphi(\vec{x},\vec{y})$ is a prime generalized $P$-encoding on $n$ input variables minimizing $|\varphi|$ for some $n \ge 4$, then either there is some generalized $P$-encoding $\varphi'$ on $n$ input variables in regular form and such that $|\varphi'| = |\varphi|$, or there is a generalized $P$-encoding $\varphi'$ on $n-1$ input variables such that $|\varphi| \ge |\varphi'| + 3$.
\end{lemma}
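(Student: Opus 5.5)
We follow the two-pronged strategy of Ku{\v c}era, Savick{\'y}, and Vorel, with every use of unit propagation replaced by a semantic argument. Fix a prime generalized $P$-encoding $\varphi$ of minimum size on $n \ge 4$ inputs. The plan is to examine, for each $i$, the set $Q_{\varphi,i}$ and to show two things. Either every $i$ already satisfies conditions (a)--(c) of regular form, possibly after a size-preserving local rewriting. Or some index $i$ can be eliminated by a restriction that deletes at least three clauses.

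\textbf{Reduction step.} The basic tool for the second alternative is to set $x_i := \bot$. Then $\varphi|_{\overline{x_i}}$ is a generalized $P$-encoding on the remaining $n-1$ inputs. Condition (b) is inherited. Condition (a) holds because any model of $\varphi \land x_j$ with $j \ne i$ has $x_i = \bot$, since $\varphi \models \overline{x_i} \lor \overline{x_j}$. This restriction removes exactly the clauses containing $\overline{x_i}$, so we are done whenever $|Q_{\varphi,i}| \ge 3$.

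\textbf{Case $|Q_{\varphi,i}| \le 1$.} We show this cannot occur for a minimal prime encoding. If $Q_{\varphi,i} = \emptyset$, then setting $x_i := \top$ does not shrink any clause. So any model of $\varphi \land x_j$ can be modified to set $x_i = \top$, contradicting (b); making this precise needs some care with clauses containing $x_i$ positively. If $Q_{\varphi,i} = \{C\}$, we use a different argument from theirs (this is the role of \Cref{lemma:no-single-clause}). By (b), for every $j \ne i$ the formula $\varphi \land x_i \land x_j$ is unsatisfiable, and the only clause that $x_i$ shortens is $C$. Take a model $\sigma_j$ of $\varphi \land x_j$, which has $x_i = \bot$, and flip $x_i$ to $\top$. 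This yields a model unless $C \setminus \{\overline{x_i}\}$ is falsified by $\sigma_j$. From this we will argue that $C\setminus\{\overline{x_i}\}$ can be used to build a smaller encoding, or that primality is violated by deleting a literal from $C$. We expect this to be delicate.

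\textbf{Case $|Q_{\varphi,i}| = 2$.} Here we must establish conditions (b) and (c). Suppose a clause $C \in Q_{\varphi,i}$ contains another input literal. A negative input literal $\overline{x_j}$ would make $C$ contain $\overline{x_j}$, so $C \in Q_{\varphi,j}$. Counting these clauses, which are shared between several indices, either forces some index to satisfy $|Q_{\varphi,j}| \ge 3$, or lets us argue that removing $\overline{x_j}$ preserves the $P$-property, contradicting primality. A positive input literal can be removed by primality, since setting $x_j$ true is never useful in a model together with $x_i$. For (c), suppose $C = \overline{x_i} \lor \ell_1 \lor \dots \lor \ell_r$ with $r \ge 2$ auxiliary literals. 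We introduce a fresh auxiliary variable $u$, replace $C$ by $\overline{x_i} \lor u$, and add $\overline{u} \lor \ell_1 \lor \dots \lor \ell_r$. This costs one extra clause, so to keep $|\varphi'| = |\varphi|$ we need to save a clause elsewhere. Alternatively, when the rewriting must cost a clause, we fall back on the $n-1$ reduction, which then saves $2+1 = 3$.

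\textbf{Main obstacle.} We expect the main difficulty to be the non-binary case (c), together with the $|Q_{\varphi,i}| \le 1$ case. In their proof, Ku{\v c}era, Savick{\'y}, and Vorel used propagation to show that the wide clause's literals are forced, and we cannot do that here. Our first attempt will be a semantic "resolution-based" replacement: resolve $C$ against the clauses containing the complementary literals, and use minimality to bound how many clauses this creates.
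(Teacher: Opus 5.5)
Your reduction step (restricting $x_i:=\bot$ when $|Q_{\varphi,i}|\ge 3$) and your removal of positive input literals by primality are fine. The three cases that carry the real content, however, are left open, and one of them rests on a false claim.

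\textbf{The non-binary case.} The most important gap is condition (c). This is the step the paper's proof of this lemma actually carries out; the other cases are delegated to \Cref{lem-regular-1,lem-regular-2}. You do not need forced literals or a fresh variable. Your rewriting costs a clause, and the fallback does not recover it, because restricting $x_i:=\bot$ in general saves only the two clauses of $Q_{\varphi,i}$. Instead, fix a model $\tau_i$ of $\varphi\land x_i$. Since $\tau_i(x_i)=\top$, it satisfies some $\ell_j$ of the wide clause $C=\overline{x_i}\lor\ell_1\lor\dots\lor\ell_r$. Then $\varphi'=(\varphi\setminus\{C\})\cup\{\overline{x_i}\lor\ell_j\}$ is a generalized $P$-encoding of the same size:
\begin{itemize}
\item $\tau_i$ witnesses (a) for $x_i$;
\item every model of $\varphi\land x_k$ with $k\neq i$ has $x_i=\bot$, so it satisfies the new clause;
\item (b) holds because $\varphi'\models\varphi$.
\end{itemize}
All sets $Q_{\varphi',k}$ keep their sizes and their input-variable content, so iterating reaches regular form. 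One satisfying assignment suffices; no forcing is involved.

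\textbf{The case $|Q_{\varphi,i}|=1$.} You stop before the key idea, which is to let an auxiliary literal of $C$ take over the role of $x_i$. First rule out input literals in $C\setminus\{\overline{x_i}\}$; for $\overline{x_j}$, flip $x_i$ to $\top$ in a model of $\varphi\land x_k$ with $k\notin\{i,j\}$. Then pick an auxiliary $\ell\in C$. Now $\varphi\setminus\{C\}$ is a generalized $P$-encoding on inputs $x_1,\dots,x_{i-1},\ell,x_{i+1},\dots,x_n$:
\begin{itemize}
\item $(\varphi\setminus\{C\})\land\ell$ is satisfiable by primality;
\item a model of $(\varphi\setminus\{C\})\land\ell\land x_j$ would become a model of $\varphi\land x_i\land x_j$ after flipping $x_i$ to $\top$.
\end{itemize}
This encoding has one clause fewer, contradicting minimality (\Cref{lemma:no-single-clause}). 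Your observation that models of $\varphi\land x_j$ falsify $C\setminus\{\overline{x_i}\}$ is about $\varphi$, whereas the entailments are needed for $\varphi\setminus\{C\}$.

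\textbf{A clause of $Q_{\varphi,i}$ containing $\overline{x_j}$.} Your dichotomy here is false. Consider
\[(\overline{x_1}\lor\overline{x_2})\land(\overline{x_1}\lor y)\land(\overline{x_2}\lor y)\land(\overline{x_3}\lor\overline{x_4})\land(\overline{x_3}\lor\overline{y})\land(\overline{x_4}\lor\overline{y}).\]
It is prime, has the minimum size $6$ for $n=4$ (\Cref{lem-weaker-lb}), and every $|Q_{\varphi,k}|=2$. Yet deleting $\overline{x_2}$ from its first clause leaves the unit $\overline{x_1}$, which breaks (a).

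What is needed is a genuine three-clause saving. Restricting $x_1:=\bot$ deletes two clauses and leaves $x_2$ with a single negative occurrence. The single-occurrence argument, applied after making the restricted formula prime, then removes a third clause. In the example, $y$ replaces $x_2$ and the direct encoding on $x_3,x_4,y$ remains. The paper handles this case as \Cref{lem-regular-2}.
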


\subsection{\texorpdfstring{Proof of \Cref{lem-regular}}{Proof of the regular-form lemma}}

Recall that in a generalized $P$-encoding $\varphi$, the formula  $\varphi \land x_i$ is satisfiable for every $i$. Let us use notation $\tau_i$ for an arbitrary satisfying assignment of $\varphi \land x_i$.

\begin{lemma}\label{lemma:no-single-clause}
        Let $\varphi(\vec{x},\vec{y})$ be a prime generalized $P$-encoding on $n$ input variables minimizing $|\varphi|$ for some $n \ge 3$. Then,  $|Q_{\varphi,i}| > 1$ for every $i \in [n]$.
\end{lemma}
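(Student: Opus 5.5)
The plan is a case analysis on $|Q_{\varphi,i}|$. We rule out $|Q_{\varphi,i}|=0$ by a direct semantic argument, and we rule out $|Q_{\varphi,i}|=1$ by exhibiting a generalized $P$-encoding on $n$ variables with strictly fewer clauses, contradicting minimality. Throughout, $\tau_j$ denotes a satisfying assignment of $\varphi\land x_j$.

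\textbf{Case $|Q_{\varphi,i}|=0$.} Here $x_i$ occurs in $\varphi$ only positively (or not at all). Pick $j\neq i$, which is possible since $n\ge 3$. Flip $x_i$ to $\top$ in $\tau_j$. Every clause stays satisfied, because raising a variable that occurs only positively cannot falsify any clause. This gives a model of $\varphi\land x_i\land x_j$, contradicting condition (b).

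\textbf{Case $|Q_{\varphi,i}|=1$: what $D$ must satisfy.} Write $Q_{\varphi,i}=\{C\}$ with $C=\overline{x_i}\lor D$. Apply the same flipping argument to any model $\sigma$ of $\varphi$ with $\sigma(x_i)=\bot$ and $\sigma(x_j)=\top$ for some $j\neq i$. Flipping $x_i$ to $\top$ can only falsify $C$, and condition (b) forces it to do so. Hence $\sigma\not\models D$; in particular, every $\tau_j$ with $j\neq i$ falsifies $D$. On the other hand, $\tau_i\models D$. So, on models of $\varphi$, $D$ behaves as a witness for ``$x_i$ is allowed'': $D$ is true at $\tau_i$ and false whenever another input is on. Note that primality gives no contradiction by itself. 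Deleting $\overline{x_i}$ from $C$ breaks condition (a) exactly because $\tau_j\not\models D$, so the contradiction must come from minimality of $|\varphi|$.

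\textbf{Case $|Q_{\varphi,i}|=1$: shrinking $\varphi$.} Let $P_i$ be the set of clauses containing $x_i$ positively. My plan is to eliminate $x_i$ by Davis--Putnam resolution on $C$. Replace $C$ together with all of $P_i$ by the resolvents $D\lor E$ for $x_i\lor E\in P_i$. This removes exactly one clause, and the resulting formula $\psi$ is the projection of $\varphi$ onto the remaining variables. Next, reintroduce $x_i$ as an input using the fact that $D$ already certifies the ``$x_i$ slot''. Choose a literal $\ell\in D$ that is true in $\tau_i$ and false in every $\tau_j$ for $j\neq i$. Then take $\varphi'=\psi\land(\overline{x_i}\lor\ell)$, possibly adding back the clauses of $P_i$ with $x_i$ replaced appropriately.

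If this step works, $|\varphi'|<|\varphi|$, and verifying conditions (a) and (b) for $\varphi'$ uses the properties of $D$ established above. Condition (a) holds because $\tau_i$ restricts to a model of $\psi$ satisfying $\ell$. Condition (b) holds because a model with $x_i$ and $x_j$ both on would lift to a model of $\varphi$ in which $D$ is true while $x_j$ is on, which is impossible.

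\textbf{Main obstacle.} The hard step is choosing $\ell$ and controlling the clause count. A single literal of $D$ need not separate $\tau_i$ from all the $\tau_j$ simultaneously. Moreover, resolution may not save a clause when $|P_i|$ is large, because the added clause $\overline{x_i}\lor\ell$ costs one back.

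To handle this, I would first try to use primality to show $|P_i|\le 1$, or that $D$ is a single literal. If the literal-removal argument shows $D=\ell$ is a unit, the construction becomes exact. If that fails, I would fall back to introducing a fresh auxiliary variable equivalent to $D$ via clauses already present, and reuse existing clauses rather than adding new ones.
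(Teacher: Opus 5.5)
Your case $|Q_{\varphi,i}|=0$ is correct, and the paper's write-up leaves it implicit. Your facts about $D$ are also correct. With $n\ge 3$ they already show that $D$ contains no input literals: $x_j\in D$ is impossible because $\tau_j$ falsifies $D$, and $\overline{x_j}\in D$ is impossible because $\tau_k$ for $k\notin\{i,j\}$ falsifies $D$ yet satisfies $\overline{x_j}$.

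\textbf{The gap is the shrinking step.} Replacing $C$ and $P_i$ by the $|P_i|$ resolvents saves exactly one clause. Adding $\overline{x_i}\lor\ell$ back spends it again, so $|\varphi'|=|\varphi|$ in general. This holds regardless of $|P_i|$, and also when $D$ is a unit. Your claimed strict inequality $|\varphi'|<|\varphi|$ is therefore false. In fact your own Case~0 argument shows that as long as $x_i$ remains the $i$-th input, some clause must contain $\overline{x_i}$. So no construction that eliminates $x_i$ and then reinstates it as an input can beat $|\varphi|$, and the fallbacks you list (primality forcing $|P_i|\le 1$ or $|D|=1$, fresh auxiliaries) cannot help. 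Your worry about finding an $\ell$ that is false in every $\tau_j$ is also misplaced. Condition (b) is semantic and is proved by a flipping argument, not by inspecting particular models $\tau_j$.

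\textbf{The missing idea is to change which variable serves as the $i$-th input.} Set $\varphi':=\varphi\setminus\{C\}$; no resolution is needed. Pick a literal $\ell\in D$ with $\tau_i\models\ell$; it exists since $\tau_i\models D$, and it is auxiliary by the remark above. Declare $(x_1,\dots,x_{i-1},\ell,x_{i+1},\dots,x_n)$ to be the inputs, demoting $x_i$ to an auxiliary variable that now occurs only positively; rename the variable of $\ell$ if $\ell$ is negative. Then $\varphi'$ is a generalized $P$-encoding:
\begin{itemize}
\item Condition (a): $\tau_i\models\varphi'\land\ell$, and $\tau_j\models\varphi'\land x_j$.
\item Condition (b) for $\ell$ and $x_j$: a model of $\varphi'\land\ell\land x_j$, with $x_i$ set to $\top$, satisfies $C$ through $\ell$. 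This would give a model of $\varphi\land x_i\land x_j$, a contradiction.
\item Condition (b) for $x_j$ and $x_k$: a model of $\varphi'\land x_j\land x_k$, with $x_i$ set to $\bot$, satisfies $C$, again contradicting that $\varphi$ is an encoding.
\end{itemize}
Since $|\varphi'|=|\varphi|-1$, this contradicts minimality.

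This is the paper's argument. The paper takes an arbitrary auxiliary $\ell\in D$ and uses primality to ensure $\varphi'\land\ell$ is satisfiable; choosing $\ell$ true under $\tau_i$, as you did, avoids primality entirely. Your resolvent formula $\psi$ would also work if you declared $\ell$ its $i$-th input instead of reintroducing $x_i$.
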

\begin{proof}
Assume, expecting a contradiction, that for some $i$ there is only a single clause $C' := (\overline{x_i} \lor C) \in \varphi$ containing $\overline{x_i}$. We now show by cases that $C$ cannot contain input variables:
    \begin{itemize}
        \item If $x_j \in C$ for some $j \neq i$, then  $\tau'(x) := \begin{cases}
            \top & \text{if } x = x_i\\
            \tau_j(x) & \text{otherwise}
        \end{cases}$ satisfies $\varphi$ and yet $\tau' \not \models \overline{x_i} \lor \overline{x_j}$, a contradiction. If $x_i \in C$, then $C$ is a tautology, so removing it contradicts minimality of $|\varphi|$.
        \item If $\overline{x_j} \in C$ for some $j \neq i$, then let $k \in [n] \setminus \{i, j\}$, and note that $\tau'(x) := \begin{cases}
            \top & \text{if } x = x_i\\
            \tau_k(x) & \text{otherwise}
        \end{cases}$ satisfies $\varphi$ (since $\tau_k \models \overline{x_j}$) and yet $\tau' \not \models \overline{x_i} \lor \overline{x_k}$, a contradiction.
    \end{itemize}
    Since $\varphi \land x_i$ is satisfiable, $C$ cannot be empty, and thus there is some auxiliary literal $\ell \in C$. Now, let $\varphi' := \varphi \setminus \{C'\}$, and observe that $\varphi' \models \overline{\ell} \lor \overline{x_j}$ for every $j \neq i$. Indeed, if $\varphi' \land \ell \land x_j$ were satisfied by some assignment $\tau$, 
    then $\tau'(x)  := \begin{cases}
            \top & \text{if } x = x_i\\
            \tau(x) & \text{otherwise}
        \end{cases}$
    would satisfy both $\varphi'$ and $C'$, and thus $\varphi$, contradicting that $\varphi \not \models \overline{x_i} \lor \overline{x_j}$. Moreover, $\varphi' \land \ell$ must be satisfiable, since if it were not, and $\varphi' \models \overline{\ell}$, then removing $\ell$ from $C$ in $\varphi$ would still yield a generalized $P$-encoding, contradicting $\varphi$ being prime.  Note that, for distinct indices $j, k \in [n] \setminus \{i\}$, we have $\varphi' \models \overline{x_j} \lor \overline{x_k}$, since if there were some satisfying assignment for $\varphi' \land x_j \land x_k$, extending it to set $x_i = \bot$ would yield a satisfying assignment for $\varphi \land x_j \land x_k$, a contradiction. Thus, letting $\vec{x}' := (x_1, \dots, x_{i-1}, \ell, x_{i+1}, \dots, x_n)$, we have that $\varphi'$ is a generalized $P$-encoding on $n$ input variables, and since $|\varphi'| = |\varphi| - 1$, we have reached a contradiction.
\end{proof}

\begin{lemma} \label{lem-regular-1}
    Let $\varphi(\vec{x},\vec{y})$ be a prime generalized $P$-encoding on $n$ input variables minimizing $|\varphi|$ for some $n \ge 3$, and suppose that $|Q_{\varphi,i}| \neq 2$ for some $i \in [n]$. Then, there is a generalized $P$-encoding $\varphi'$ on $n-1$ input variables such that $|\varphi| \ge |\varphi'| + 3$.
\end{lemma}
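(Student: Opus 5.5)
By \Cref{lemma:no-single-clause}, the hypothesis $|Q_{\varphi,i}| \neq 2$ forces $|Q_{\varphi,i}| \ge 3$. The plan is to eliminate $x_i$: take $\varphi'$ to be the restriction $\varphi|_{x_i=\bot}$, viewed as a formula on the remaining input variables $x_j$ with $j \neq i$. Setting $x_i$ to $\bot$ satisfies every clause of $Q_{\varphi,i}$, so all of them disappear, and we immediately get $|\varphi'| \le |\varphi| - |Q_{\varphi,i}| \le |\varphi| - 3$.

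It remains to check that $\varphi'$ is a generalized $P$-encoding on the $n-1$ variables $(x_j)_{j\neq i}$. The argument follows Ku\v{c}era, Savick\'y, and Vorel, with $\vdash_1$ replaced by $\models$.
\begin{itemize}
\item \emph{Condition (a).} Fix $j \neq i$. The assignment $\tau_j$ satisfies $\varphi \land x_j$. By condition (b) for $\varphi$, we have $\varphi \models \overline{x_i} \lor \overline{x_j}$, so $\tau_j(x_i) = \bot$. Hence $\tau_j$ restricted to the remaining variables satisfies $\varphi|_{x_i=\bot} \land x_j$.
\item \emph{Condition (b).} Suppose some assignment $\sigma$ satisfied $\varphi' \land x_j \land x_k$ for distinct $j,k \neq i$. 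Extending $\sigma$ with $x_i = \bot$ gives a model of $\varphi \land x_j \land x_k$, which contradicts $\varphi \models \overline{x_j}\lor\overline{x_k}$.
\end{itemize}

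Two small points need care. First, a clause of $\varphi$ containing the literal $x_i$ positively loses that literal in $\varphi'$ but is not removed. Such a clause may become empty; this cannot happen, because $\varphi'$ is satisfiable by the restriction of $\tau_j$ from the first bullet. Second, the count only needs an upper bound on $|\varphi'|$: shrinking clauses or merging duplicates can only decrease the size, so the inequality $|\varphi| \ge |\varphi'| + 3$ survives.

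I expect no real obstacle. The only substantive input is the lower bound $|Q_{\varphi,i}| \ge 3$, which \Cref{lemma:no-single-clause} supplies. This is exactly where primality and minimality of $\varphi$ are used, and where $n \ge 3$ is needed. The restriction argument itself is purely semantic and needs no propagation assumptions, so the proof is short.
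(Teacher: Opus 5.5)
Your proof is correct and matches the paper's approach: the paper defers to Proposition~4.5 of Ku{\v c}era, Savick{\'{y}}, and Vorel with $\vdash_1$ replaced by $\models$. That argument uses Lemma~\ref{lemma:no-single-clause} to upgrade $|Q_{\varphi,i}| \neq 2$ to $|Q_{\varphi,i}| \ge 3$ and then restricts $x_i$ to $\bot$, with the same semantic checks of conditions (a) and (b) for $\varphi|_{x_i=\bot}$ that you give.
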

\begin{proof}
    The proof is obtained by replacing each $\vdash_1$ with $\models$ in that of~\cite[Proposition~4.5]{lower-bound}.
\end{proof}

\begin{lemma} \label{lem-regular-2}
    Let $\varphi(\vec{x},\vec{y})$ be a prime generalized $P$-encoding on $n$ input variables minimizing $|\varphi|$ for some $n \ge 4$, and suppose that $|Q_{\varphi,i}| = 2$ for all $i \in [n]$, but there is some $i \in [n]$ and $C \in Q_{\varphi,i}$ such that $C$ contains an input variable other than $x_i$. Then, there is a generalized $P$-encoding $\varphi'$ on $n-1$ input variables such that $|\varphi| \ge |\varphi'| + 3$.
\end{lemma}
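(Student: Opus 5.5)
The plan is to take $x_i$, its two clauses $Q_{\varphi,i} = \{C_1, C_2\}$, and suppose without loss of generality that $C_1$ contains an input literal over some $x_j$ with $j \neq i$. We will then produce a generalized $P$-encoding on $n-1$ variables by eliminating one input variable, and argue that at least three clauses disappear in the process, so that $|\varphi| \ge |\varphi'| + 3$.

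The first step pins down which input literal can occur in $C_1$. If $x_j \in C_1$ positively, then $\tau_j$ satisfies $C_1$. If we also knew $\tau_j \models C_2$, the assignment $\tau_j$ with $x_i$ flipped to $\top$ would satisfy $\varphi$. Indeed, flipping $x_i$ to true only endangers the clauses of $Q_{\varphi,i}$, since clauses containing $x_i$ positively stay satisfied. This contradicts $\varphi \models \overline{x_i} \lor \overline{x_j}$. So in that case $\tau_j \not\models C_2$ for every choice of $\tau_j$, meaning $\varphi \land x_j \models \overline{C_2}$. If instead $\overline{x_j} \in C_1$, then for any $k \notin \{i,j\}$ (available since $n \ge 3$), $\tau_k$ satisfies $C_1$. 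Hence $\varphi \land x_k \models \overline{C_2 \setminus \{\overline{x_i}\}}$ must hold for all such $k$, which gives strong structural constraints.

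In each case I would use primality to simplify. When $x_j$ or $\overline{x_j}$ can be deleted from $C_1$ while preserving both conditions (a) and (b), that contradicts primality. So the remaining configurations force $C_2$ to consist essentially of auxiliary literals implied false by $x_j$ (respectively by all $x_k$), and $C_1 \setminus \{\overline{x_i}, x_j\}$ to be nonempty.

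The main step is the reduction, mirroring \cite[Proposition~4.6]{lower-bound}. The plan is to set $x_i := \bot$ (or substitute $x_i$ by a suitable auxiliary literal, as in the proof of \Cref{lemma:no-single-clause}). This removes $C_1$ and $C_2$, and it also removes at least one further clause. For the third clause, the natural candidate is a clause of $Q_{\varphi,j}$, or a clause containing $x_i$ positively, that becomes redundant after the restriction. Since $|Q_{\varphi,j}|=2$ and one clause of $Q_{\varphi,j}$ might coincide with structure forced by $C_1$, I expect to show that some clause of $\varphi|_{x_i=\bot}$ is subsumed or implied by the others and can be dropped. The alternative is to use $n \ge 4$ to find two further indices $k, k'$ and derive that a clause linking them becomes removable. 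Throughout, conditions (a) and (b) of the generalized $P$-encoding are verified semantically using the $\tau_k$ assignments. The argument replaces each unit-propagation step of Ku\v{c}era, Savick\'y, and Vorel by a satisfying-assignment argument of the flip-$x_i$ type above.

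The main obstacle is exactly this third clause. Their argument locates it via unit propagation chains, and without propagation completeness we must instead find a clause whose removal preserves $\models$-consequences. I would first try $x_j$-based reasoning: in the positive case, $\varphi \land x_j \models \overline{C_2}$. Resolving $C_1$ and $C_2$ on the auxiliary literals then suggests that one of the two clauses in $Q_{\varphi,j}$ becomes semantically redundant once $x_i$ is eliminated, giving the count $2+1=3$.
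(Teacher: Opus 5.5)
Your proposal has a genuine gap exactly where the lemma has content: you never construct $\varphi'$, and the mechanism you suggest for the third clause cannot work.

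Restricting $x_i:=\bot$ deletes precisely the clauses containing $\overline{x_i}$, namely $C_1$ and $C_2$, and no others. So the claim that it ``also removes at least one further clause'' is unsupported.

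Your fallback is that a clause of $Q_{\varphi,j}$ becomes semantically redundant. This fails in the relevant case $\overline{x_j}\in C_1$. There $C_1\in Q_{\varphi,i}\cap Q_{\varphi,j}$ is already one of the two deleted clauses. In $\psi:=\varphi|_{x_i=\bot}$, the other clause $E$ of $Q_{\varphi,j}$ is then the \emph{only} clause containing $\overline{x_j}$. Dropping $E$ would let you flip $x_j$ to $\top$ in any model of $\psi\land x_m$ (for $m\notin\{i,j\}$) without falsifying any clause, which violates condition (b). So $E$ is never redundant, and no subsumption argument will produce the third clause.

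Separately, in your positive case the claim $\varphi\land x_j\models\overline{C_2}$ is false, since every model of $\varphi\land x_j$ satisfies $\overline{x_i}\in C_2$. What you need is $C_2\setminus\{\overline{x_i}\}$.

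The missing idea is that a single negative occurrence is exactly what \Cref{lemma:no-single-clause} forbids in a minimum prime encoding. The third clause is paid for by changing which variable serves as the $j$th input, not by redundancy. Concretely:

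\textbf{Positive case.} It is impossible outright. Deleting $x_j$ from $C_1$ keeps every $\tau_k$ a model: $\tau_k\models\overline{x_i}$ for $k\neq i$, and $\tau_i$ falsifies $x_j$. It also preserves (b), since the formula only gets stronger. This contradicts primality.

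\textbf{Negative case.} $\psi$ is a generalized $P$-encoding on the $n-1\ge 3$ inputs other than $x_i$. Indeed, each $\tau_k$ with $k\neq i$ sets $x_i$ false, and any model of $\psi\land x_k\land x_m$ extended by $x_i=\bot$ is a model of $\varphi$. Moreover $|\psi|\le|\varphi|-2$ and $|Q_{\psi,j}|\le 1$. Suppose $\psi$ had minimum size among generalized $P$-encodings on $n-1$ inputs. Deleting literals until it is prime would preserve its size and leave at most one clause containing $\overline{x_j}$, contradicting \Cref{lemma:no-single-clause}. Hence some generalized $P$-encoding on $n-1$ inputs has at most $|\psi|-1\le|\varphi|-3$ clauses. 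In effect, $E$ is deleted and one of its auxiliary literals takes over the role of $x_j$, as in the proof of that lemma.

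This is also precisely where the hypothesis $n\ge 4$ enters: the lemma is applied with $n-1\ge 3$ inputs.
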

\begin{proof}
        The proof is obtained by replacing each $\vdash_1$ with $\models$ in that of~\cite[Proposition~4.7]{lower-bound}.
\end{proof}

\begin{proof}[Proof of \Cref{lem-regular}]
    Let $\varphi(\vec{x},\vec{y})$ be a prime generalized $P$-encoding on $n$ input variables minimizing $|\varphi|$ for some $n \ge 4$. By \Cref{lem-regular-1,lem-regular-2}, we may assume that for all $i \in [n]$, we have $|Q_{\varphi,i}| = 2$ and $x_i$ is the only input variable in the clauses in $Q_{\varphi,i}$. If every clause in $\bigcup_{i \in [n]} Q_{\varphi,i}$ is binary, then we are done. Otherwise, it suffices to construct a generalized $P$-encoding $\varphi'$ on $n$ input variables such that:
    \begin{enumerate}[label=(\alph*)]
        \item $\varphi'$ has one fewer non-binary clause in $\bigcup_{i \in [n]} Q_{\varphi',i}$ compared to $\varphi$,
        \item $|\varphi'| = |\varphi|$, and
        \item for all $i \in [n]$, we have $|Q_{\varphi',i}| = 2$ and $x_i$ is the only input variable in the clauses in $Q_{\varphi',i}$.
    \end{enumerate}
    Indeed, repeating the argument with respect to $\varphi'$ must eventually lead to a generalized $P$-encoding in regular form. So let us see how to construct such a $\varphi'$.
    
    Suppose $C \in Q_{\varphi,i}$ is not binary. By~\cite[Lemma~4.3]{lower-bound}, we have $C = (\overline{x_i} \lor \ell_1 \lor \dots \lor \ell_k)$ for some literals $\ell_1, \dots, \ell_k \in \mathrm{lit}(\vec{y})$. Therefore, $\tau_i \models \ell_j$ for some $j \in [k]$. Now, let
    \[
        \varphi' = (\varphi \setminus \{C\}) \cup \{(\overline{x_i} \lor \ell_j)\}.
    \]
    It remains to show that $\varphi'$ is a generalized $P$-encoding. By our choice of $\ell_j$, we have $\tau_i \models \varphi' \land x_i$, and for any $j \neq i$, the formula $\varphi' \land x_{j}$ is satisfiable, since any satisfying assignment for $\varphi \land x_{i'}$ is also one for $\varphi' \land x_{i'}$. Since $\varphi' \models \varphi$, we also have $\varphi' \models \overline{x_{i}} \lor \overline{x_{j}}$ for all $i,j \in [n]$ with $i \neq j$. Thus, $\varphi'$ is a generalized $P$-encoding satisfying the above conditions.
\end{proof}

\subsection{\texorpdfstring{Bounding the size of generalized $P$-encodings in regular form}{Bounding the size of generalized P-encodings in regular form}}

\begin{lemma} \label{lem-weaker-lb}
    If $\varphi(\vec{x},\vec{y})$ is a generalized $P$-encoding on $n$ input variables for some $n \ge 3$, then $|\varphi| \ge \min(2n,3n-6)$.
\end{lemma}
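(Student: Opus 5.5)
We argue by induction on $n$, with the base case $n = 3$ handled directly. We may assume $\varphi$ is a prime generalized $P$-encoding minimizing $|\varphi|$ among those on $n$ variables. Removing literals preserves the clause count, so a minimum-size encoding can be made prime by repeatedly deleting literals while the generalized $P$-encoding property is preserved. The target $\min(2n,3n-6)$ equals $3n-6$ for $n \le 6$ and $2n$ for $n \ge 6$. The inductive step increments the bound by at least $2$ per variable and at most $3$, matching the slope structure. Concretely, if $f(n) := \min(2n,3n-6)$, then $f(n-1)+3 \ge f(n)$, and also $f(n-1)+2 \ge f(n)$ whenever $n \ge 7$.

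The first key step is a counting argument. By \Cref{lemma:no-single-clause}, every $i$ has $|Q_{\varphi,i}| \ge 2$. If no clause contains two distinct negated input literals, the sets $Q_{\varphi,i}$ are pairwise disjoint, so $|\varphi| \ge 2n$ immediately. It therefore suffices to handle the case where some clause contains $\overline{x_i}$ and $\overline{x_j}$ with $i \ne j$, or more generally where some $Q_{\varphi,i}$ has size $\neq 2$ or contains a foreign input literal.

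For this case I would invoke \Cref{lem-regular-1} and \Cref{lem-regular-2} (the latter needs $n \ge 4$). They produce a generalized $P$-encoding $\varphi'$ on $n-1$ variables with $|\varphi| \ge |\varphi'|+3$. The inductive hypothesis then gives $|\varphi| \ge f(n-1)+3 \ge f(n)$.

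The remaining case is $|Q_{\varphi,i}|=2$ for all $i$ with no foreign input variables in these clauses. Then the $Q_{\varphi,i}$ are disjoint: a clause in both $Q_{\varphi,i}$ and $Q_{\varphi,j}$ would contain $\overline{x_j}$, a foreign input literal. Hence $|\varphi| \ge 2n \ge f(n)$.

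The base case $n=3$ asks for $|\varphi| \ge 3$. Since $\varphi \models \overline{x_1}\lor\overline{x_2}$ while $\varphi \land x_1$ is satisfiable, some clause contains $\overline{x_1}$. If fewer than three clauses existed, then by \Cref{lemma:no-single-clause} (applied to a prime minimal encoding), each $Q_{\varphi,i}$ has at least two clauses within a pool of at most two. So every clause would contain all of $\overline{x_1},\overline{x_2},\overline{x_3}$, and $\varphi \land x_1 \land x_2$ would be satisfiable by extending $\tau_1$ with $x_2,x_3$ set appropriately. I would check that last point carefully: setting $x_2=\top$ only falsifies the literal $\overline{x_2}$, and $\overline{x_3}$ can remain true. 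The case $n = 3$ also falls below the $n \ge 4$ threshold of \Cref{lem-regular-2}, so it must be done by hand as above.

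The main obstacle is ensuring that minimality and primality are preserved where the cited lemmas require them, and that the induction is phrased over minimum sizes. Let $s(n)$ be the minimum size of a generalized $P$-encoding on $n$ variables; we prove $s(n) \ge f(n)$ by induction. The cited lemmas apply because the minimizer can be taken prime, and they yield $s(n) \ge s(n-1)+3$ in the non-regular case.
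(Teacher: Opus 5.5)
Your proof is correct and follows the paper's argument. Both proceed by induction on $n$ over prime minimum-size encodings: a non-regular encoding reduces to $n-1$ variables at a cost of at least $3$ clauses, and otherwise the sets $Q_{\varphi,i}$ give $2n$ distinct clauses. The differences are minor improvements in rigor: you invoke \Cref{lem-regular-1,lem-regular-2} directly instead of \Cref{lem-regular}, correctly noting that disjointness of the $Q_{\varphi,i}$ needs only the no-foreign-input condition and not binarity; you phrase the induction over the minimum size $s(n)$, so the minimality hypothesis of the cited lemmas is explicitly met; and you write out the $n=3$ base case that the paper only asserts.
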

\begin{proof}
    We may assume that $\varphi$ is prime, since otherwise we can simply remove redundant literals until we have a prime encoding. The proof is by induction on $n$. For the base case $n = 3$, we can verify directly that $|\varphi| \ge 3$, so assume that $n \ge 4$.  If $\varphi$ is in regular form, then $|\varphi| \geq \left|\bigcup_{i=1}^n Q_{i, \varphi}\right| = 2n$, and we are done.
   Otherwise, by~\Cref{lem-regular}, there is a generalized $P$-encoding $\varphi'$ on $n-1$ input variables such that $|\varphi| \ge |\varphi'| + 3$, and by the inductive hypothesis $|\varphi'| + 3 \geq \min(2n-2, 3n-9) + 3 \geq \min(2n, 3n-6)$, completing the proof.
\end{proof}

For a $P$-encoding $\varphi$, and input variable index $i$,  let $L_{\varphi, i} := \{ e : (\overline{x_i} \lor e) \in \varphi \}$.

\begin{lemma} \label{lem-distinct-edges}
    If $\varphi(\vec{x},\vec{y})$ is a generalized $P$-encoding on $n$ input variables in regular form and $i,j \in [n]$ with $i \neq j$, then $L_{\varphi,i} \neq L_{\varphi,j}$.
\end{lemma}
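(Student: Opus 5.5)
We argue by contradiction. Suppose $L_{\varphi,i} = L_{\varphi,j} = \{e_1, e_2\}$ for some $i \neq j$. By regular form, each of $x_i$ and $x_j$ occurs negatively in exactly two clauses, namely $(\overline{x_i} \lor e_1), (\overline{x_i} \lor e_2)$ and $(\overline{x_j} \lor e_1), (\overline{x_j} \lor e_2)$. Here $e_1, e_2 \in \mathrm{lit}(\vec{y})$, since $x_i$ is the only input variable in the clauses of $Q_{\varphi,i}$. (If $e_1 = e_2$, then $L_{\varphi,i}$ is a singleton and $|Q_{\varphi,i}|$ would be $1$. This contradicts condition (a), because clauses are sets and $\varphi$ is a set of clauses.) The plan is to produce a satisfying assignment of $\varphi \land x_i \land x_j$. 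This contradicts condition (b) of a generalized $P$-encoding, namely $\varphi \models \overline{x_i} \lor \overline{x_j}$.

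We take $\tau_i$, a satisfying assignment of $\varphi \land x_i$, which exists by condition (a). Since $\tau_i$ satisfies $(\overline{x_i} \lor e_1)$ and $(\overline{x_i} \lor e_2)$ with $x_i$ true, it sets $e_1$ and $e_2$ to true. Define $\tau'$ to agree with $\tau_i$ except that $x_j$ is set to $\top$. We check that $\tau'$ satisfies $\varphi$ clause by clause.

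\begin{itemize}
\item Any clause not containing $x_j$ as a variable is unaffected by the flip, so it remains satisfied.
\item Any clause containing $x_j$ positively becomes satisfied by the flip.
\item The only clauses containing $\overline{x_j}$ are those in $Q_{\varphi,j}$, namely $(\overline{x_j} \lor e_1)$ and $(\overline{x_j} \lor e_2)$. These are satisfied because $e_1, e_2$ are auxiliary literals that remain true under $\tau'$.
\end{itemize}

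Hence $\tau' \models \varphi \land x_i \land x_j$, which is the desired contradiction.

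The main point of care is confirming that the clauses mentioning $\overline{x_j}$ are exactly those in $Q_{\varphi,j}$, which holds by definition. It is also important that $e_1, e_2$ are not input literals, so flipping $x_j$ cannot falsify them. Both facts follow directly from regular form (b) and (c), so the argument needs no propagation assumption. I expect no real obstacle beyond this bookkeeping, including the degenerate case ruled out above.
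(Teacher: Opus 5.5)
Your proof is correct and is essentially the argument the paper imports from Ku{\v c}era, Savick{\'{y}}, and Vorel with $\vdash_1$ replaced by $\models$: a model of $\varphi \land x_i$ already satisfies both clauses of $Q_{\varphi,j}$, so setting $x_j$ to $\top$ gives a model of $\varphi \land x_i \land x_j$. One small precision: condition (b) for $i$ alone does not exclude a tautological clause $\overline{x_i} \lor x_i$, so by itself it only shows that $e_1, e_2$ are not literals on $x_j$ (which is all the flip needs), while concluding $e_1, e_2 \in \mathrm{lit}(\vec{y})$ also uses condition (b) for $j$.
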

\begin{proof}
      The proof is obtained by replacing each $\vdash_1$ with $\models$ in that of~\cite[Lemma~4.8]{lower-bound}.
\end{proof}

We use notation $\mathrm{var}(\ell)$ to denote the variable corresponding to literal $\ell$, meaning that $\mathrm{var}(x) = \mathrm{var}(\overline{x})= x$ for each variable $x$.

\begin{lemma} \label{lem-pairwise-distinct}
    Let $\varphi(\vec{x},\vec{y})$ be a generalized $P$-encoding on $n$ input variables in regular form, and suppose that there are distinct $i,j,k \in [n]$ with $L_{\varphi,i} = \{\ell,\ell_1\}$, $L_{\varphi,j} = \{\ell,\ell_2\}$, and $L_{\varphi,k} = \{\ell,\ell_3\}$ for some $\ell,\ell_1,\ell_2,\ell_3 \in \mathrm{lit}(\vec{y})$. Then, the variables $\mathrm{var}(\ell_1)$, $\mathrm{var}(\ell_2)$, and $\mathrm{var}(\ell_3)$ are pairwise distinct.
\end{lemma}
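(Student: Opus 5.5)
We argue by contradiction, producing an assignment that satisfies $\varphi$ but sets two input variables to true. In regular form, the only clauses containing $\overline{x_i}$ are $(\overline{x_i}\lor \ell)$ and $(\overline{x_i}\lor \ell_1)$, and no other clause mentions $x_i$ negatively. Moreover, by condition (b) of regular form and primality-type arguments, $x_i$ should not occur positively in a way that helps (we will only flip input variables from $\bot$ to $\top$, which can only falsify clauses containing $\overline{x_i}$). Thus for any satisfying assignment $\tau$ of $\varphi$ in which $\ell$ and $\ell_1$ are true, setting $x_i:=\top$ keeps $\varphi$ satisfied. The same holds for $j$ with $\ell,\ell_2$ and $k$ with $\ell,\ell_3$.

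First we rule out $\mathrm{var}(\ell_1)=\mathrm{var}(\ell_2)$. If $\ell_1=\ell_2$, then $L_{\varphi,i}=L_{\varphi,j}$, contradicting \Cref{lem-distinct-edges}. So suppose $\ell_2=\overline{\ell_1}$. Consider $\tau_k$, a satisfying assignment of $\varphi\land x_k$. It makes $\ell$ true, since $(\overline{x_k}\lor\ell)\in\varphi$, and it makes exactly one of $\ell_1,\overline{\ell_1}$ true. If $\tau_k\models\ell_1$, we flip $x_i$ to $\top$. If instead $\tau_k\models\ell_2$, we flip $x_j$ to $\top$. In either case we obtain a satisfying assignment of $\varphi$ with $x_k$ and another input variable true, contradicting condition (b) of generalized $P$-encodings.

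By symmetry, the same argument applies to the pairs $(\ell_1,\ell_3)$ and $(\ell_2,\ell_3)$, using the third index ($j$, respectively $i$) as the witness. This is where the hypothesis of three indices sharing $\ell$ is used: we need a witness assignment outside the pair in question that already forces $\ell$.

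The main subtlety is justifying that flipping $x_i$ from $\bot$ to $\top$ preserves satisfaction. For this we use condition (a) of regular form: every clause containing $\overline{x_i}$ lies in $Q_{\varphi,i}$, and both of those clauses are satisfied by $\ell$ and $\ell_1$. Clauses containing $x_i$ positively only become easier to satisfy. We also note that $\tau_k\models\overline{x_i}$ a priori, since $\varphi\models\overline{x_i}\lor\overline{x_k}$. So the flip genuinely yields a new assignment with two true inputs, which gives the contradiction.
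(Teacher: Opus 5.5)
Your proof is correct and follows essentially the same route as the paper, which simply defers to Ku{\v c}era, Savick{\'y}, and Vorel~\cite[Lemma~5.2]{lower-bound} with $\vdash_1$ replaced by $\models$. That argument uses the third index $k$ (all of whose models satisfy $\ell$) to show that excluding both $x_i$ and $x_j$ would force both $\overline{\ell_1}$ and $\overline{\ell_2}=\ell_1$. Your flipping argument is exactly the semantic justification of that step, and it needs only the regular-form description of $Q_{\varphi,i}$ together with $\ell,\ell_1\in\mathrm{lit}(\vec{y})$, so the aside about ``primality-type arguments'' can be dropped.
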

\begin{proof}
      The proof is obtained by replacing each $\vdash_1$ with $\models$ in that of~\cite[Lemma~5.2]{lower-bound}.
\end{proof}

For each $i \in [n]$, we now let $\tau_i \colon \vec{x} \to \{\bot,\top\}$ be the assignment such that $\tau_i(x_i) = \top$ and $\tau_i(x_j) = \bot$ for $j \neq i$. For distinct $i,j \in [n]$, let $\tau_{i,j} \colon \vec{x} \to \{\bot,\top\}$ be the assignment such that $\tau_{i,j}(x_i) = \tau_{i,j}(x_j) = \top$ and $\tau_{i,j}(x_k) = \bot$ for $k \notin \{i,j\}$.

\begin{lemma} \label{lem-two-edge-conflict}
    Let $\varphi(\vec{x},\vec{y})$ be a generalized $P$-encoding on $n$ input variables in regular form, let
    \(
        \psi(\vec{x},\vec{y}) = \varphi \setminus \bigcup_{k \in [n]} Q_{\varphi,k},
    \)
    and let distinct $i,j \in [n]$ with $L_{\varphi,i} = \{\ell_{i,1},\ell_{i,2}\}$ and $L_{\varphi,j} = \{\ell_{j,1},\ell_{j,2}\}$. Then,
    \(
        \psi|_{\tau_i} \land \ell_{i,1} \land \ell_{i,2} \land \ell_{j,1} \land \ell_{j,2} \models \bot.
    \)
\end{lemma}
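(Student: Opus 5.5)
The plan is a direct argument by contradiction: a satisfying assignment of the left-hand side would extend to a model of $\varphi$ in which both $x_i$ and $x_j$ are true. This would violate condition (b) of a generalized $P$-encoding, namely $\varphi \models \overline{x_i} \lor \overline{x_j}$.

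Suppose $\sigma$ is an assignment of $\vec{y}$ satisfying $\psi|_{\tau_i} \land \ell_{i,1} \land \ell_{i,2} \land \ell_{j,1} \land \ell_{j,2}$. We consider the full assignment $\rho := \tau_{i,j} \cup \sigma$ and check that $\rho \models \varphi$. We split the clauses of $\varphi$ into $\psi$ and the sets $Q_{\varphi,k}$ for $k \in [n]$.

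First consider the clauses of $\psi$. By definition, $\psi$ consists of exactly the clauses containing no negative input literal. Hence every input literal in a clause of $\psi$ is positive. Take a clause $D \in \psi$.
\begin{itemize}
\item If $D$ contains $x_i$ or $x_j$, then $\rho$ satisfies $D$, because $\tau_{i,j}$ sets both to $\top$.
\item Otherwise, every input literal of $D$ is some $x_k$ with $k \notin \{i,j\}$, which is false under both $\tau_i$ and $\tau_{i,j}$. So $D$ is not satisfied by $\tau_i$, and its restriction $D|_{\tau_i}$ is exactly its auxiliary part, which lies in $\psi|_{\tau_i}$. That part is satisfied by $\sigma$, hence $\rho \models D$.
\end{itemize}

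Next consider the clauses of $Q_{\varphi,k}$. Regular form means each such clause is binary, and $x_k$ is its only input variable. So $Q_{\varphi,k} = \{(\overline{x_k} \lor e) : e \in L_{\varphi,k}\}$ with $e \in \mathrm{lit}(\vec{y})$.
\begin{itemize}
\item For $k \notin \{i,j\}$, the literal $\overline{x_k}$ is true under $\tau_{i,j}$, so these clauses hold.
\item For $k \in \{i,j\}$, the clauses are $(\overline{x_i} \lor \ell_{i,1})$, $(\overline{x_i} \lor \ell_{i,2})$, $(\overline{x_j} \lor \ell_{j,1})$, $(\overline{x_j} \lor \ell_{j,2})$. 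All four are satisfied, since $\sigma$ makes every $\ell_{\cdot,\cdot}$ true.
\end{itemize}

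Thus $\rho \models \varphi$ while $\rho(x_i) = \rho(x_j) = \top$, contradicting condition (b). Therefore no such $\sigma$ exists, which is the claimed entailment of $\bot$.

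The only delicate step is the monotonicity observation that $\psi$ has no negative input literals. This is what lets us pass from $\psi|_{\tau_i}$ to $\psi|_{\tau_{i,j}}$: turning $x_j$ from false to true can only satisfy more clauses of $\psi$. That observation holds by the very definition of $\psi$, so I expect no real obstacle.
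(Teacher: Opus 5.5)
Your argument is correct and is essentially the paper's proof. The paper observes that $\psi|_{\tau_{i,j}} \land \ell_{i,1} \land \ell_{i,2} \land \ell_{j,1} \land \ell_{j,2}$ is exactly the unsatisfiable formula $\varphi|_{\tau_{i,j}}$ (by regular form), and that $\psi|_{\tau_{i,j}} \subseteq \psi|_{\tau_i}$ because $\psi$ has no negative input literals; your extension of $\sigma$ to a model $\tau_{i,j} \cup \sigma$ of $\varphi$ is the semantic phrasing of these same two observations.
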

\begin{proof}
    Since $\varphi$ is a generalized $P$-encoding, we have $\varphi|_{\tau_{i,j}} \models \bot$. Moreover,
    \[
        \psi|_{\tau_{i,j}} \land \ell_{i,1} \land \ell_{i,2} \land \ell_{j,1} \land \ell_{j,2}
    \]
    is precisely $\varphi|_{\tau_{i,j}}$, because the clauses in $Q_{\varphi,i}$ and $Q_{\varphi,j}$ reduce to the displayed unit clauses, while every clause in $Q_{\varphi,k}$ for $k \notin \{i,j\}$ is satisfied by $\tau_{i,j}$. Thus,
    \[
        \psi|_{\tau_{i,j}} \land \ell_{i,1} \land \ell_{i,2} \land \ell_{j,1} \land \ell_{j,2} \models \bot.
    \]
    Finally, $\psi|_{\tau_{i,j}} \subseteq \psi|_{\tau_i}$, since $\psi$ contains no negative occurrences of input variables and any clause of $\psi$ that survives the restriction by $\tau_{i,j}$ also survives the restriction by $\tau_i$. Hence, the formula in the statement is unsatisfiable as well.
\end{proof}

\begin{lemma} \label{lem-switching-witness}
    Let $F$ be a CNF formula, let $\alpha$ be a satisfying assignment for $F$, and let $\alpha_1,\dots,\alpha_t$ be assignments that falsify $F$. Then, for each $i \in [t]$, we can choose a clause $C_i \in F$ and a literal $m_i$ such that $\alpha_i \models m_i$, $\alpha \models \overline{m_i}$, and $\overline{m_i} \in C_i$. Furthermore, if $C_i = C_j$ and $m_i \neq m_j$, then $\alpha_i \models m_j$ and $\alpha_j \models m_i$.
\end{lemma}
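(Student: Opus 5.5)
The argument is a direct choice of witnesses followed by one observation about falsified clauses. For each $i \in [t]$, since $\alpha_i$ falsifies $F$, I would fix a clause $C_i \in F$ with $\alpha_i \not\models C_i$; in other words, every literal of $C_i$ is false under $\alpha_i$. Because $\alpha$ satisfies $F$, it satisfies $C_i$ in particular, so some literal $\lambda \in C_i$ has $\alpha \models \lambda$. I would set $m_i := \overline{\lambda}$. Then $\overline{m_i} = \lambda \in C_i$ and $\alpha \models \overline{m_i}$. Since $\lambda$ is a literal of the $\alpha_i$-falsified clause $C_i$, we have $\alpha_i \models \overline{\lambda} = m_i$. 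This gives the first part of the statement.

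For the second part, suppose $C_i = C_j$. The clause $C_j$ was chosen to be falsified by $\alpha_j$, so every literal of $C_j = C_i$ is false under $\alpha_j$. One such literal is $\overline{m_i}$, hence $\alpha_j \models m_i$. By the symmetric argument with the roles of $i$ and $j$ exchanged, $\alpha_i \models m_j$. The hypothesis $m_i \neq m_j$ is not actually needed here; when $m_i = m_j$ the conclusion just restates the first part.

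I do not expect any real obstacle. The one point to state carefully is that the witness literal $\overline{m_i}$ is selected inside the specific clause $C_i$ falsified by $\alpha_i$. This is exactly what makes the cross-implication work when two indices choose the same clause: any assignment falsifying that clause makes all of its literals false, including the other index's witness.
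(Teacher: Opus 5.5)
Your proposal is correct and follows essentially the same argument as the paper: pick a clause $C_i$ falsified by $\alpha_i$, take a literal of $C_i$ satisfied by $\alpha$ as $\overline{m_i}$, and get the cross-implications from the fact that $\alpha_j$ falsifies every literal of $C_j = C_i$. Your remark that the hypothesis $m_i \neq m_j$ is not needed is also right.
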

\begin{proof}
    For each $i \in [t]$, choose a clause $C_i \in F$ falsified by $\alpha_i$. Since $\alpha \models F$, some literal $r_i \in C_i$ is satisfied by $\alpha$. Since $\alpha_i$ falsifies $C_i$, we have $\alpha_i \not\models r_i$. Thus, with $m_i := \overline{r_i}$, we obtain $\alpha_i \models m_i$, $\alpha \models \overline{m_i}$, and $\overline{m_i} \in C_i$.

    Now suppose that $C_i = C_j$ and $m_i \neq m_j$. Since $\overline{m_j} \in C_j = C_i$ and $\alpha_i$ falsifies $C_i$, we have $\alpha_i \models m_j$. The proof that $\alpha_j \models m_i$ is symmetric.
\end{proof}

\begin{lemma} \label{lem-regular-lb}
    If $\varphi(\vec{x},\vec{y})$ is a generalized $P$-encoding on $n$ input variables in regular form for some $n \ge 8$, then $|\varphi| \ge 2n + \sqrt{n+1} - 2$.
\end{lemma}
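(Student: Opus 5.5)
Since $\varphi$ is in regular form, the sets $Q_{\varphi,i}$ are pairwise disjoint and each consists of two binary clauses $(\overline{x_i}\lor \ell_{i,1})$, $(\overline{x_i}\lor\ell_{i,2})$ with $\ell_{i,1},\ell_{i,2}\in\mathrm{lit}(\vec y)$. So $|\varphi| = 2n + |\psi|$, where $\psi = \varphi\setminus\bigcup_k Q_{\varphi,k}$ as in \Cref{lem-two-edge-conflict}, and it suffices to show $|\psi|\ge \sqrt{n+1}-2$.

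Following the graph perspective of \Cref{sec-perspective}, I will build a multigraph $G$ whose vertex set is the set $V$ of auxiliary literals occurring in some $L_{\varphi,i}$, with one edge $e_i = L_{\varphi,i}$ per input variable. If some $L_{\varphi,i}$ is a single literal, it is treated as a loop. By \Cref{lem-distinct-edges} the edges are distinct, so $n \le \binom{|V|}{2}+|V|$. This only bounds $|V|$ from below by roughly $\sqrt{2n}$, so I must relate $|V|$, or some other measure, to $|\psi|$.

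The key step uses \Cref{lem-switching-witness}. Fix $\alpha$ to be a satisfying assignment of $\psi|_{\tau_i}$ extended by the units of $e_i$, which exists because $\varphi\land x_i$ is satisfiable. For $j\neq i$, \Cref{lem-two-edge-conflict} says that every assignment $\alpha_j$ satisfying the literals of $e_i\cup e_j$ falsifies $\psi|_{\tau_i}$. Take $\alpha_j$ to agree with $\alpha$ except on the literals of $e_j$ that $\alpha$ falsifies. Then $\alpha_j$ and $\alpha$ differ only on $\mathrm{var}(e_j)$, so the witness literal $m_j$ lies in $e_j$, is false under $\alpha$, and appears negated in a clause $C_j\in\psi$.

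Now bound the number of edges $j$ by the clause count. Each clause $C$ of $\psi$ is assigned the edges $j$ with $C_j = C$. The ``furthermore'' part of \Cref{lem-switching-witness} says that if $C_j = C_{j'}$ with $m_j\neq m_{j'}$, then $\alpha_j\models m_{j'}$, so $m_{j'}\in e_j$ (it cannot be fixed by $\alpha$, which falsifies it). Thus the edges assigned to one clause either share the same witness vertex $m$, or pairwise contain each other's witnesses. The latter forces their witnesses to form a clique in which each edge joins two witnesses, so such a group has at most $3$ edges (a triangle). Edges sharing a witness vertex $m$ form a star at $m$. To cap star sizes I intend to vary the base edge $i$ and use \Cref{lem-pairwise-distinct}, which controls stars with a common literal $\ell$, together with the freedom to choose $i$ inside the star.

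The main obstacle is getting a quadratic relation, that is, showing that a clause can ``serve'' only $O(\sqrt n)$ edges or that $|V|\lesssim |\psi|$. Here is what I would try first. Pick the vertex $m$ of maximum degree $d$ and take $i$ to be an edge not through $m$ (or through $m$, treated separately). Each clause then accounts for at most $d+3$ edges, giving $n-1\le |\psi|(d+3)$. Separately, I would argue that every vertex $v\in V$ must occur negated in $\psi$. Otherwise $v$ could be set true freely, and some pair of edges through $v$ would yield a satisfiable $\varphi|_{\tau_{i,j}}$. This gives $|V|\le$ (total literal occurrences), and combined with a lower bound of $d$ on the relevant clause count, yields $|\psi|\ge d$. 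Balancing $|\psi|\ge d$ against $|\psi|(d+3)\ge n-1$ gives $|\psi|\ge\sqrt{n+1}-2$ after routine algebra, with the small cases handled by $n\ge 8$.
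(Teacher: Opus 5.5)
\textbf{Gap: the bound $|\psi|\ge d$.} Your first inequality, each clause serving a star, is essentially right. The second one, $|\psi|\ge d$, is never established, and the route you sketch for it fails.

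The claim that every vertex of $G$ occurs negated in $\psi$ is false. The star encoding $\bigwedge_{i\in[n]}\bigl((\overline{x_i}\lor v)\land(\overline{x_i}\lor u_i)\bigr)\land\bigwedge_{i<j}(\overline{u_i}\lor\overline{u_j})$ is a generalized $P$-encoding in regular form in which $\overline{v}$ never occurs. Your justification actually points at the \emph{other} endpoints. For $e_i=\{v,a\}$ and $e_j=\{v,b\}$, a model of $\psi|_{\tau_j}\land v\land b$ already has $v$ true, so it is $\overline{a}$ that is forced to occur. Moreover, even a correct statement that many literals occur negated in $\psi$ would only count literal occurrences, not clauses, since one wide clause can contain all of them. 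The phrase ``a lower bound of $d$ on the relevant clause count'' names exactly the statement that is missing.

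\textbf{How to close it.} Run your own witness argument with the base edge \emph{inside} the maximum-degree star; this is the paper's first case.
\begin{itemize}
\item Let $\ell$ have degree $\Delta\ge 3$, with edges $\{\ell,\ell_1\},\dots,\{\ell,\ell_\Delta\}$. By \Cref{lem-pairwise-distinct} the $\ell_i$ lie on pairwise distinct variables.
\item Take $\alpha\models\psi|_{\tau_1}\land\ell\land\ell_1$. By \Cref{lem-two-edge-conflict}, $\alpha\models\overline{\ell_i}$ for $i\ge 2$, and flipping $\ell_i$ alone falsifies $\psi|_{\tau_1}$.
\item So the witness must be $\ell_i$, and the ``furthermore'' part of \Cref{lem-switching-witness} makes $C_2,\dots,C_\Delta$ pairwise distinct. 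Hence $|\psi|\ge \Delta-1$.
\end{itemize}
With this bound your balancing goes through. Your star-counting bound then replaces the paper's second case, which splits on a minimum vertex cover $\tau$ and uses $\Delta\tau\ge n$.

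\textbf{Two repairs to your first bound.}
\begin{itemize}
\item Add the two unit clauses of $e_i$ to the formula, as the paper does with $F$, at a cost of $2$ clauses. Otherwise, if $e_j$ contains the complement of a literal of $e_i$, your $\alpha_j$ no longer satisfies $e_i$ and need not falsify $\psi|_{\tau_i}$.
\item By \Cref{lem-distinct-edges}, two edges with different witnesses can never share a clause, since both would equal $\{m_j,m_{j'}\}$. So each clause serves at most $\Delta$ edges, and the triangle case never arises.
\end{itemize}
This gives $n-1\le(|\psi|+2)\Delta\le(|\psi|+2)(|\psi|+1)$, hence $|\psi|\ge\sqrt{n+1}-2$. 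When $\Delta\le 2$, the first bound alone suffices for $n\ge 8$.
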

\begin{proof}
    Let $\varphi(\vec{x},\vec{y})$ be a generalized $P$-encoding on $n$ input variables in regular form for some $n \ge 8$. Let
    \(
        \psi(\vec{x},\vec{y}) = \varphi \setminus \bigcup_{i \in [n]} Q_{\varphi,i}.
    \)
    Since $|\bigcup_{i \in [n]} Q_{\varphi,i}| = 2n$, our goal is to show that $|\psi| \ge \sqrt{n+1} - 2$.
    
    Let $G$ be the graph defined by $V(G) = \bigcup_{i \in [n]} L_{\varphi,i}$ and $E(G) = \{L_{\varphi,i} \mid i \in [n]\}$. By \Cref{lem-distinct-edges}, $|E(G)| = n$. Hence, if $\Delta$ denotes the maximum degree of $G$ and $\tau$ the minimum size of a vertex cover of $G$, we have  $\Delta \cdot \tau \geq E(G) = n$, and thus either $\Delta > \sqrt{n+1} - 1$ or $\tau \ge \sqrt{n+1} + 1$.

    Suppose first that $\Delta > \sqrt{n+1} - 1$. Let $\ell \in V(G)$ be such that $\deg(\ell) = \Delta$. Without loss of generality, $\{L_{\varphi,i} \mid i \in [\Delta]\}$ are the edges incident to $\ell$. Write $L_{\varphi,i} = \{\ell,\ell_i\}$ for each $i \in [\Delta]$. Since $\Delta \ge 3$, \Cref{lem-pairwise-distinct} implies that $\mathrm{var}(\ell_i) \neq \mathrm{var}(\ell_j)$ for all $i,j \in [\Delta]$ with $i \neq j$; hence, without loss of generality, $\ell_i$ is an auxiliary variable (rather than the negation thereof) for each $i \in [\Delta]$. Since $\varphi|_{\tau_1}$ is satisfiable, so is $\psi|_{\tau_1} \land \ell \land \ell_1$. Let $\alpha : \vec{x} \cup \vec{y} \to \{\bot,\top\}$ be a satisfying assignment for this formula. For each $i \in [2,\Delta]$, \Cref{lem-two-edge-conflict} yields
    \(
        \psi|_{\tau_1} \land \ell \land \ell_1 \land \ell_i \models \bot,
    \)
    and therefore $\alpha \models \overline{\ell_i}$. For each $i \in [2,\Delta]$, let $\alpha_i$ be the assignment obtained from $\alpha$ by setting $\ell_i$ to $\top$. Since $\alpha_i \models \ell \land \ell_1 \land \ell_i$, the displayed entailment implies that $\alpha_i$ falsifies $\psi|_{\tau_1}$. Applying \Cref{lem-switching-witness} to the formula $\psi|_{\tau_1}$, the satisfying assignment $\alpha$, and the assignments $\alpha_2,\dots,\alpha_\Delta$, we obtain clauses $C_i \in \psi|_{\tau_1}$ and literals $m_i$ such that $\alpha_i \models m_i$, $\alpha \models \overline{m_i}$, and $\overline{m_i} \in C_i$. Since $\alpha_i$ differs from $\alpha$ only on the variable $\mathrm{var}(\ell_i)$ and $\alpha \models \overline{\ell_i}$, we must have $m_i = \ell_i$. Hence, if $i,j \in [2,\Delta]$ are distinct and $C_i = C_j$, then \Cref{lem-switching-witness} yields $\alpha_i \models \ell_j$, contradicting $\mathrm{var}(\ell_i) \neq \mathrm{var}(\ell_j)$. Thus, the clauses $C_2,\dots,C_\Delta$ are pairwise distinct, so $|\psi| \ge |\psi|_{\tau_1}| \ge \Delta - 1 \ge \sqrt{n+1} - 2$, as desired.

    Next, suppose that $\tau \ge \sqrt{n+1} + 1$. Write $L_{\varphi,i} = \{\ell_{i,1},\ell_{i,2}\}$ for each $i \in [n]$, and let
    \(
        F = \psi|_{\tau_1} \land \ell_{1,1} \land \ell_{1,2}.
    \)
    Since $\varphi|_{\tau_1}$ is satisfiable, so is $F$. Moreover, since $\varphi|_{\tau_i}$ is satisfiable for each $i \in [2,n]$, the literals $\ell_{i,1}$ and $\ell_{i,2}$ are on distinct variables. Let $\alpha : \vec{x} \cup \vec{y} \to \{\bot,\top\}$ be a satisfying assignment for $F$. For each $i \in [2,n]$, \Cref{lem-two-edge-conflict} yields
    \begin{equation}\label{eq:falsified}
        F \land \ell_{i,1} \land \ell_{i,2} \models \bot.
    \end{equation}
    For each $i \in [2,n]$, let $\alpha_i$ be the assignment obtained from $\alpha$ by setting both $\ell_{i,1}$ and $\ell_{i,2}$ to $\top$. Then, by~\eqref{eq:falsified}, $\alpha_i$ falsifies $F$. Applying \Cref{lem-switching-witness} to the formula $F$, the satisfying assignment $\alpha$, and the assignments $\alpha_2,\dots,\alpha_n$, we obtain clauses $C_i \in F$ and literals $m_i$ such that $\alpha_i \models m_i$, $\alpha \models \overline{m_i}$, and $\overline{m_i} \in C_i$. Since $\alpha_i$ differs from $\alpha$ only on the variables $\mathrm{var}(\ell_{i,1})$ and $\mathrm{var}(\ell_{i,2})$, the literal $m_i$ must belong to $\{\ell_{i,1},\ell_{i,2}\}$. After possibly swapping the names of $\ell_{i,1}$ and $\ell_{i,2}$, we may assume that $m_i = \ell_{i,1}$.
    
    Now, $\{\ell_{i,1} \mid i \in [n]\}$ is a vertex cover of $G$, so $|\{\ell_{i,1} \mid i \in [2,n]\}| \ge \tau - 1$. We claim that, for all $i,j \in [2,n]$, if $\ell_{i,1} \neq \ell_{j,1}$, then $C_i \neq C_j$. Indeed, suppose that $\ell_{i,1} \neq \ell_{j,1}$ and $C_i = C_j$. Then, \Cref{lem-switching-witness} yields $\alpha_i \models \ell_{j,1}$ and $\alpha_j \models \ell_{i,1}$. Since $\alpha_i$ differs from $\alpha$ only on the variables $\mathrm{var}(\ell_{i,1})$ and $\mathrm{var}(\ell_{i,2})$, while $\alpha \models \overline{\ell_{j,1}}$, we obtain $\ell_{j,1} \in \{\ell_{i,1},\ell_{i,2}\}$. Similarly, $\ell_{i,1} \in \{\ell_{j,1},\ell_{j,2}\}$. Since $\ell_{i,1} \neq \ell_{j,1}$, this forces $\ell_{j,1} = \ell_{i,2}$ and $\ell_{i,1} = \ell_{j,2}$. Thus, $\{\ell_{i,1},\ell_{i,2}\} = \{\ell_{j,1},\ell_{j,2}\}$, contradicting \Cref{lem-distinct-edges}. Therefore, $|F| \ge \tau - 1$, so $|\psi| \ge |\psi|_{\tau_1}| \ge |F| - 2 \ge \tau - 3 \ge \sqrt{n+1} - 2$, as desired.
\end{proof}

\subsection{Finishing the proof}

\begin{lemma} \label{lem-lower-bound-p}
    If $\varphi(\vec{x},\vec{y})$ is a generalized $P$-encoding on $n$ input variables for some $n \ge 8$, then $|\varphi| \ge 2n + \sqrt{n+1} - 2$.
\end{lemma}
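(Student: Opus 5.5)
We argue by strong induction on $n$, combining \Cref{lem-regular}, \Cref{lem-regular-lb}, and the weaker bound \Cref{lem-weaker-lb}. First, we may assume $\varphi$ is prime and minimizes $|\varphi|$ among generalized $P$-encodings on $n$ input variables. Removing literals preserves the clause count, and minimizing only makes the bound harder to meet. By \Cref{lem-regular}, since $n \ge 8 \ge 4$, one of two cases holds.

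\textbf{Case 1.} There is a regular-form generalized $P$-encoding $\varphi'$ on $n$ variables with $|\varphi'| = |\varphi|$. Then \Cref{lem-regular-lb} gives $|\varphi| \ge 2n + \sqrt{n+1} - 2$ directly.

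\textbf{Case 2.} There is a generalized $P$-encoding $\varphi'$ on $n-1$ variables with $|\varphi| \ge |\varphi'| + 3$. If $n-1 \ge 8$, the induction hypothesis gives
\[
|\varphi| \ge 2(n-1) + \sqrt{n} - 2 + 3 = 2n + \sqrt{n} - 1.
\]
This is at least $2n + \sqrt{n+1} - 2$, because $\sqrt{n+1} - \sqrt{n} \le 1$.

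The only remaining point is the base case $n = 8$ in Case 2, where $\varphi'$ has $7$ input variables and the inductive hypothesis is unavailable. Here we apply \Cref{lem-weaker-lb} instead: $|\varphi'| \ge \min(14, 15) = 14$. Hence $|\varphi| \ge 17$. The target is $16 + 3 - 2 = 17$, so the bound holds exactly. This base-case check is the only delicate arithmetic and is the reason for the threshold $n \ge 8$.

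A more uniform alternative is to apply \Cref{lem-weaker-lb} whenever $n-1 < 8$. For general $n$, the bound $\min(2n-2,3n-9)+3$ equals $2n+1$ once $n \ge 7$, which is at least $2n+\sqrt{n+1}-2$ for $n \le 8$. Either way the induction closes, and the main obstacle is only verifying that the small cases fit.
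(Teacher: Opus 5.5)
Your proof is correct and follows the paper's argument essentially verbatim. You reduce to a prime, minimum-size encoding and split via \Cref{lem-regular}. The regular-form case is handled by \Cref{lem-regular-lb}; the other case gives a loss of three clauses when passing to $n-1$ variables, closed by \Cref{lem-weaker-lb} at $n=8$ (giving $14+3=17$) and by induction using $\sqrt{n+1}-\sqrt{n}\le 1$ for $n\ge 9$.
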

\begin{proof}
    Let $\mathcal{P}(n)$ be the minimum size of a generalized $P$-encoding on $n$ input variables. We may assume that $\varphi$ is prime and of size $\mathcal{P}(n)$. If there is a generalized $P$-encoding $\varphi'$ on $n$ input variables in regular form and such that $|\varphi'| = |\varphi|$, then we are done by \Cref{lem-regular-lb}. Otherwise, by \Cref{lem-regular}, there is a generalized $P$-encoding $\varphi'$ on $n-1$ input variables such that $|\varphi| \ge |\varphi'| + 3$. Hence, $\mathcal{P}(n) \ge \mathcal{P}(n-1)+3$ in this case. If $n=8$, then by \Cref{lem-weaker-lb}, we have $\mathcal{P}(8) \ge \mathcal{P}(7)+3 \ge 17 = 2\cdot 8 + \sqrt{8+1} - 2$. If $n \ge 9$, then we have $\mathcal{P}(n) \ge \mathcal{P}(n-1)+3 \ge 2(n-1) + \sqrt{n} + 1 \ge 2n + \sqrt{n+1} - 2$ by induction.
\end{proof}

\thmlowerbound*
\begin{proof}
    Every encoding of the $\amo(x_1, \ldots, x_n)$ constraint is a generalized $P$-encoding, so the result follows from \Cref{lem-lower-bound-p}.
\end{proof}

\section{\texorpdfstring{Proofs for \Cref{sec:gdpe}}{Proofs for disjunctive switching}}\label{appendix:dgpe}
The following is the key fact used to prove the correctness of the generalized product encoding, and we also use it in the proof of \Cref{thm:amk-disjunctive-product}. Given a tuple $\vec{i} \in [p]^k$ and $d \in [k]$, let $\vec{i}/d$ be $\vec{i}$ with its $d$th coordinate omitted.
\begin{restatable}{lemma}{lemproject} \label{lem-project}
    Given $k$ distinct points $\vec{i_1}, \dots, \vec{i_k} \in \mathbb{N}^k$, there is some $d \in [k]$ such that $\vec{i_1}/d, \dots, \vec{i_k}/d$ are distinct.\footnote{For points in $\{0,1\}^k$, this fact is known as Bondy's theorem~\cite{bondy}.}
\end{restatable}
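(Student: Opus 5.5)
We prove \Cref{lem-project} by a counting/linear-algebra style argument in the spirit of Bondy's theorem, organized via a graph on the $k$ points. Suppose for contradiction that for every $d \in [k]$ the projections $\vec{i_1}/d,\dots,\vec{i_k}/d$ are not distinct. Then for each $d$ we can pick a pair $\{a_d,b_d\}$ of indices with $\vec{i_{a_d}}/d = \vec{i_{b_d}}/d$. Since the points themselves are distinct, $\vec{i_{a_d}}$ and $\vec{i_{b_d}}$ must differ, and they agree on every coordinate except possibly $d$, so they differ \emph{exactly} in coordinate $d$. Form a graph $H$ on vertex set $[k]$ whose edges are $e_d := \{a_d,b_d\}$, each labelled by its coordinate $d$. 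Distinct labels give distinct edges: an edge $\{a,b\}$ labelled both $d$ and $d'$ would mean the two points differ exactly in $d$ and exactly in $d'$, impossible for $d \neq d'$. So $H$ has $k$ vertices and $k$ distinct edges.

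A graph with $k$ vertices and $k$ edges contains a cycle, say $v_0, v_1, \dots, v_s = v_0$ with edges labelled $d_1,\dots,d_s$ (pairwise distinct, since the edges are distinct and each label is used once). Traversing the cycle, moving along edge labelled $d_t$ changes only coordinate $d_t$ of the point, and changes it to a different value. Let $d_1$ be the label of the first edge; since labels along the cycle are distinct, coordinate $d_1$ is changed exactly once around the cycle and never again. Hence returning to $v_0$ the $d_1$-coordinate differs from its starting value, contradicting that we return to the same point $\vec{i_{v_0}}$. This contradiction proves the lemma.

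The only step needing care is the claim that edges with different labels are distinct (so that we genuinely get $k$ edges and hence a cycle rather than a multigraph artifact), and the observation that a 2-cycle cannot occur, which is the same fact. Everything else is the standard ``$k$ edges on $k$ vertices force a cycle'' argument, so I expect no real obstacle; I would just be careful to note that the points lie in $\mathbb{N}^k$ with arbitrary coordinate values, which is harmless since the argument only uses ``changed vs.\ unchanged'' per coordinate.
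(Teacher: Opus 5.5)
Your proposal is correct and follows the paper's argument: build a graph on the $k$ points with one edge per coordinate $d$ joining two points whose $d$-projections agree, extract a cycle from $k$ edges on $k$ vertices, and derive a contradiction by traversing it. Your check that distinct labels give distinct edges (so the cycle carries pairwise distinct labels), together with the observation that coordinate $d_1$ changes exactly once around the cycle, spells out what the paper's proof summarizes as ``absurd''.
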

\begin{proof}
    Suppose for a contradiction that we have $k$ distinct points $\vec{i_1}, \dots, \vec{i_k} \in \mathbb{N}^k$ and yet for each $d \in [k]$, we have $\vec{i_m}/d = \vec{i_n}/d$ for some distinct $m,n \in [k]$. We now create a graph whose vertices are the points $\vec{i_1}, \dots, \vec{i_k}$, and the edges are as follows. For each $d \in [k]$, choose distinct $m,n \in [k]$ such that $\vec{i_m}/d = \vec{i_n}/d$ and create an edge between $\vec{i_m}$ and $\vec{i_n}$. Note that this edge can be interpreted as saying that we can travel from $\vec{i_m}$ to $\vec{i_n}$ just by moving along one coordinate, which is determined by the edge.

    Since our graph has $k$ vertices and $k$ edges, it contains a cycle $C := (v_1, \ldots, v_c, v_1)$. Following the cycle $C$ geometrically means that we start from $v_1$, then move some nonzero amount in a sequence of orthogonal dimensions, and at the end, we are back to $v_1$; this is absurd.
\end{proof}

\amkdisjunctiveproduct*
\begin{proof}
    If $n \le (k+1)^k$, then use the parallel counter encoding from \cite{sinz}, which uses $O(n)$ clauses and auxiliary variables. Otherwise, rename the input variables $x_1,\dots,x_n$ to be of the form $x_{\vec{i}}$ with $\vec{i} \in [p]^{k+1}$, where $p = \ceil{n^{1/(k+1)}}$. Let $I \subseteq [p]^{k+1}$ be the set of tuples to which some variable is assigned. For each $d \in [k+1]$ and $\vec{i} \in [p]^k$, introduce an auxiliary variable $A_{d,\vec{i}}$. For each $d \in [2,k+1]$, introduce an auxiliary variable $w_d$. Then, the disjunctive generalized product encoding is given by the following constraints:
    \begin{framed}
    \begin{align*}
   &\bigwedge_{\vec{i} \in I} (\overline{x_{\vec{i}}} \lor A_{1,\vec{i}/1})  \tag{1} \\
    &\bigwedge_{\vec{i} \in I} \left(\overline{x_{\vec{i}}} \lor \bigvee_{d \in [2,k+1]} A_{d,\vec{i}/d}\right)  \tag{2}  \\
    &\bigwedge_{d \in [2,k+1]} \amk(\{A_{d,\vec{i}} \mid \vec{i} \in [p]^k\}) \tag{3} \\
    &\bigwedge_{d \in [2,k+1]} \bigwedge_{\vec{i} \in [p]^{k-1}} \left(\overline{w_d} \lor \amo\left(\left\{A_{1,\vec{i'}} \mid \vec{i'}/(d-1) = \vec{i}\right\}\right)\right)  \tag{4} \\
    &\amo(\{w_d \mid d \in [2,k+1]\})  \tag{5} \\
    &\bigwedge_{d \in [2,k+1]} \bigwedge_{\vec{i} \in [p]^k} \left(\overline{A_{d,\vec{i}}} \lor w_d\right). \tag{6}
    \end{align*}
    \end{framed}
    
    For the $\amk$ constraints within this encoding, we use the parallel counter encoding from \cite{sinz} (rather than recursion); for the $\amo$ constraints, we use any encoding with $O(n)$ clauses and auxiliary variables.

    First, we argue that the encoding is correct. Suppose that at most $k$ input variables are true. Let $A_{1,\vec{i}}$ be true if and only if $x_{\vec{i'}}$ is true for some $\vec{i'}$ with $\vec{i'}/1 = \vec{i}$. Then clauses (1) are satisfied. Let $I_1$ be the set of $\vec{i} \in [p]^k$ such that $A_{1,\vec{i}}$ is true. Clearly, $|I_1| \le k$. By \Cref{lem-project}, there is some $d \in [2,k+1]$ such that the $\vec{i}/(d-1)$ are distinct for $\vec{i} \in I_1$. Choose such a $d$ arbitrarily and let $w_d$ be true and the remaining $w_{d'}$ be false. Then, let $A_{d,\vec{i}}$ be true if and only if $x_{\vec{i'}}$ is true for some $\vec{i'}$ with $\vec{i'}/d = \vec{i}$, and let $A_{d',\vec{i}}$ be false for all $d' \in [2,k+1] \setminus \{d\}$. Then, clauses (2), (3), (5), and (6) are satisfied. Also, clauses (4) are satisfied by our choice of $d$. Thus, the formula is satisfiable.

    Conversely, suppose that at least $k+1$ input variables are true. By clauses (1), we must have $A_{1,\vec{i}/1}$ true for each $\vec{i} \in I$ such that $x_{\vec{i}}$ is true. By clauses (2) and (6), we must have $w_d$ true for some $d \in [2,k+1]$, so let $d$ be such that $w_d$ is true. By clauses (5), $w_{d'}$ is false for all $d' \in [2,k+1] \setminus \{d\}$. By clauses (2) and (6), we must have $A_{d,\vec{i}/d}$ true for each $\vec{i} \in I$ such that $x_{\vec{i}}$ is true. Then, by clauses (4), for each $\vec{i} \in [p]^{k-1}$, we have at most one $A_{1,\vec{i'}}$ true such that $\vec{i'}/(d-1) = \vec{i}$. Hence, for each $\vec{i} \in [p]^k$, we have at most one $x_{\vec{i'}}$ true such that $\vec{i'}/d = \vec{i}$. Thus, there are at least $k+1$ variables among $\{A_{d,\vec{i}} \mid \vec{i} \in [p]^k\}$ true, contradicting clauses (3). We conclude that the encoding is correct.

    Next, we count the number of clauses and auxiliary variables. The number of clauses in (1) and (2) is $n$ each; the number of clauses in (3), (4), and (6) is $O(k \cdot p^{k}) = O(kn^{k/(k+1)})$; the number of clauses in (5) is $O(k)$. In total, the number of clauses is $2n + O(kn^{k/(k+1)})$, as desired. The number of auxiliary variables of the form $A_{d,\vec{i}}$ is $(k+1) \cdot p^k = O(kn^{k/(k+1)})$; the number of auxiliary variables of the form $w_d$ is $k$; the number of auxiliary variables used by the $\amk$ and $\amo$ constraints in (3) and (4) is $O(kn^{k/(k+1)})$; the number of auxiliary variables used by the $\amo$ constraint in (5) is $O(k)$. In total, the number of auxiliary variables is $O(kn^{k/(k+1)})$, as desired.
\end{proof}

\section{\texorpdfstring{Proofs for \Cref{sec-grid-compression}}{Proofs for grid compression}}

Throughout this section, given a set family $\mathcal{H} := \{H_1, \ldots, H_m\}$ we will use notation $\mathcal{H}^{-1}(p) := \{j \in [m] \mid p \in H_j\}$ for the indices of set containing $p$.

\subsection{Grid compression encoding}
\lemhall*
\begin{proof}
    We show the existence of $\mathcal{H}$ by the probabilistic method. Suppose that each $H_i$ is chosen uniformly and i.i.d. from $\binom{[\ell]}{3}$. We will show that $\mathcal{H} = \{H_i \mid i \in [m]\}$ with nonzero probability admits a traversal. To do so, we use Hall's marriage theorem on each possible subset of size at most $k$. By Hall's marriage theorem~\cite{hall}, a subset $\mathcal{F} \subseteq \mathcal{H}$ admits a traversal if, for every subset $\mathcal{G} \subseteq \mathcal{F}$, 
    \[
    \left|\mathcal{G}\right| \leq \left|\bigcup \mathcal{G}\right|.
    \]
    Observe that $\{\mathcal{G}: \mathcal{G} \subseteq \mathcal{F} \subseteq \mathcal{H}, |\mathcal{F}| \leq k\} = \{ \mathcal{F} \subseteq \mathcal{H}: |\mathcal{F}| \leq k\}$, so it is sufficient to upper bound the probability that, for all $\mathcal{F} \subseteq \mathcal{H}$ with $|\mathcal{F}| \leq k$, 
    \[
    \left|\mathcal{F}\right| > \left|\bigcup \mathcal{F}\right|.
    \]

    Let $\mathcal{F} \subseteq \mathcal{H}$ be of size $\gamma$. Then, $\bigcup \mathcal{F}$ can be viewed as the result of sampling at least $3 \gamma$ elements with replacement, so we have
    \begin{align*}
        \prob{\left|\bigcup \mathcal{F}\right| < \gamma} &\le \sum_{\zeta \in \binom{[\ell]}{\gamma}} \prob{\bigcup \mathcal{F} \subseteq \zeta} \\
        &\le \binom{\ell}{\gamma} \lrp{\frac{\gamma}{\ell}}^{3\gamma}  \le \lrp{\frac{e \cdot \ell}{\gamma}}^\gamma \lrp{\frac{\gamma}{\ell}}^{3\gamma} \le e^\gamma \lrp{\frac{\gamma}{\ell}} ^ {2\gamma}.
    \end{align*}

    Having proved a bound for the probability that an individual subset $\mathcal{F}$ fails to satisfy the Hall condition, we can union bound the probability that we fail for some subset and thus get a lower bound on the probability that $\mathcal{H}$ exists with our desired properties:
    \begin{align*}
        \prob{\exists \mathcal{F} \in \binom{\mathcal{H}}{\leq k} : \size{\bigcup \mathcal{F}} < \size{\mathcal{F}}} &\leq \sum_{\gamma \in [k]} \sum_{\mathcal{F} \in \binom{\mathcal{H}}{\gamma}} \prob{\size{\bigcup \mathcal{F}} < \size{\mathcal{F}}} \\
        &\le \sum_{\gamma \in [k]} \binom{m}{\gamma} \cdot e^\gamma \lrp{\frac{\gamma}{\ell}} ^ {2\gamma} \\
        &\le \sum_{\gamma \in [k]} \lrp{\frac{e \cdot m}{\gamma}}^\gamma \cdot e^{\gamma} \cdot \lrp{\frac{\gamma}{\ell}} ^ {2\gamma} \\
        &\le \sum_{\gamma \in [k]} \lrp{\frac{e^2 m\gamma}{\ell^2}}^ {\gamma}.
    \end{align*}
    We can choose $m = \floor{\sqrt[3]{kn^2}}$ and $\ell = 4\ceil{\sqrt[3]{k^2n}}$, and since $\gamma \leq k$, we have
    \[
        \lrp{\frac{e^2 m\gamma}{\ell^2}} \leq \left( \frac{e^2 k^{1/3} n ^{2/3} \cdot k}{ 16 k^{4/3} n^{2/3}} \right) < \frac{1}{2},
    \]
    Therefore, 
    \[
     \prob{\exists \mathcal{F} \in \binom{\mathcal{H}}{\leq k} : \size{\bigcup \mathcal{F}} < \size{\mathcal{F}}} < \sum_{\gamma \in [k]} \left(\frac{1}{2}\right)^\gamma < \sum_{\gamma=1}^\infty \left(\frac{1}{2}\right)^\gamma  = 1,
    \]
    and thus the probability that $\mathcal{H}$ satisfies the desired properties is strictly positive.
\end{proof}

To prove \Cref{thm:conjunctive-compression}, we first prove that the grid compression encoding is unsatisfiable when more than $k$ variables are true:
\begin{lemma}
    The grid compression encoding for $\amk(x_1,\dots,x_n)$ is unsatisfiable when more than $k$ input variables are true. \label{lem:conjunctivegridcompression1}
\end{lemma}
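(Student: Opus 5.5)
We argue by contradiction: fix an assignment to the input variables with at least $k+1$ true variables, and suppose some extension $\tau$ to the auxiliary variables satisfies every clause of the grid compression encoding, namely \eqref{eq:atmostkinL}, \eqref{eq:varimpliesitscolinM}, \eqref{eq:colcopyconjunctive}, \eqref{eq:setpossiblecopycolsconjunctive} and \eqref{eq:columnmatching}. The plan is to build an injection from the true cells of $M$ into true cells of $L$. Then at least $k+1$ of the $L_{i,p}$ are true, which contradicts \eqref{eq:atmostkinL}. Note that this direction uses none of the properties of $\mathcal{H}$ from \Cref{lem-hall}; those are needed only for satisfiability when at most $k$ inputs are true.

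First I define the map. Take any true cell $M_{i,j}$. Clause \eqref{eq:varimpliesitscolinM} forces $c_j$ to be true. Then \eqref{eq:setpossiblecopycolsconjunctive} gives some $p \in H_j$ with $\cp_{j,p}$ true; call it $\pi(j)$, choosing one if there are several. Since $p=\pi(j)\in H_j$, clause \eqref{eq:colcopyconjunctive} with this $p$ reduces under $\tau$ to $L_{i,\pi(j)}$, so that variable is true. Map the cell $(i,j)$ to $(i,\pi(j))$.

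Next I show the map is injective. Two true cells in different rows go to different rows of $L$. Two true cells $(i,j)$ and $(i,j')$ in the same row with $j\ne j'$ could only collide if $\pi(j)=\pi(j')=p$. But then $\cp_{j,p}$ and $\cp_{j',p}$ would both be true, and both $j$ and $j'$ lie in $\mathcal{H}^{-1}(p)$. That violates \eqref{eq:columnmatching}, which we use semantically: any correct encoding of $\amo$ is unsatisfiable when two of its inputs are true. So at least $k+1$ distinct $L$-variables are true, and \eqref{eq:atmostkinL} is unsatisfiable, again as the $\amk$ encoding is correct.

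The only delicate point is this last semantic step. The $\amo$ and $\amk$ subformulas carry their own auxiliary variables, so I must use that their restrictions under $\tau$ are unsatisfiable, not merely that some clause is directly falsified. This is immediate from the definition of encoding, since $\tau$ extends to those auxiliaries too. I expect no real obstacle beyond stating this carefully.
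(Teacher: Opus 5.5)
Your proof is correct and follows the paper's argument essentially step for step. You force $c_j$, pick a copy target in $H_j$, deduce that the corresponding $L$-variable is true, and use the $\amo$ constraints on the copy variables to make $(i,j)\mapsto(i,\pi(j))$ injective, which contradicts the $\amk$ constraint on $L$. Choosing $\pi$ once per column rather than per true cell, as the paper does, and noting that the $\amo$/$\amk$ sub-encodings are used semantically, are both sound small refinements.
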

\begin{proof}
    Suppose that at least $k+1$ input variables are set to true, and let $M_{i_1,j_1}, \dots, M_{i_{k+1},j_{k+1}}$ be distinct true input variables.
    By \eqref{eq:varimpliesitscolinM}, we have that $c_{j_\gamma}$ is true for every $\gamma \in [k+1]$. By \eqref{eq:setpossiblecopycolsconjunctive}, for each true $c_{j_\gamma}$, we have $\cp_{j_\gamma, p}$ for some $p \in H_{j_\gamma}$, and we denote by $p(\gamma)$ an arbitrary such $p$ chosen for $\gamma$. Then,
    $L_{i_\gamma,p(\gamma)}$ is true by \eqref{eq:colcopyconjunctive}.
    By \eqref{eq:columnmatching}, the map $j_\gamma \mapsto p(\gamma)$ is injective, and therefore the map $(i_\gamma,j_\gamma) \mapsto (i_\gamma,p(\gamma))$ is injective too. Thus, at least $k+1$ distinct variables $L_{i_\gamma,p(\gamma)}$ are set to true, contradicting \eqref{eq:atmostkinL}.
\end{proof}

\noindent
We can now prove \Cref{thm:conjunctive-compression}.

\thmgridcompression*
\begin{proof}
    First, we show that the grid compression encoding correctly encodes $\amk$. In light of \Cref{lem:conjunctivegridcompression1}, it suffices to show that the constraints are satisfiable when at most $k$ input variables are true.

    Suppose $M_{i_1,j_1}, \dots, M_{i_\gamma,j_\gamma}$ are the only true input variables for some $\gamma \leq k$. We wish to find an assignment that satisfies the constraints. We start by setting $c_{j}$ to true if and only if $j \in \{j_1, \dots, j_\gamma\}$; this satisfies \eqref{eq:varimpliesitscolinM}.

    By \Cref{lem:conjunctivegridcompression2}, we have that $\mathcal{F}:=\{H_{j_1}, \ldots, H_{j_\gamma}\}$ admits a transversal. Therefore, there are distinct $(p_1, \ldots, p_\gamma) \in H_{j_1} \times \ldots \times H_{j_{\gamma}}$. We set $\cp_{j_\eta, p_\eta}$ to be true for all $\eta \in [\gamma]$ and to be false for all other $\cp$ variables. Observe that we satisfy \eqref{eq:setpossiblecopycolsconjunctive}: for each $j\in[m]$, if $j = j_\eta$ for some $\eta \in [\gamma]$, then $\cp_{j_\eta, p_\eta}$ is true, and otherwise $c_j$ is false. Additionally, as $p_1, \ldots, p_\gamma$ are all distinct, we have that \eqref{eq:columnmatching} is satisfied: for a given $p \in [\ell]$, if $p = p_\eta$ for some $\eta \in [\gamma]$ then $\cp_{j_\eta, p_\eta}$ is the only true $\cp_{\_, p_\eta}$ variable; otherwise, all $\cp_{\_, p}$ variables are false.
    
    Finally, set $L_{i_\eta, p_\eta}$ to true, for each $\eta \in [\gamma]$, and set the other $L$ variables to false. As there are $\gamma$ pairs $(i_1,p_1), \ldots, (i_\gamma,p_\gamma)$ and $\gamma \leq k$, we automatically satisfy \eqref{eq:atmostkinL}. Lastly, we also satisfy \eqref{eq:colcopyconjunctive}: if $M_{i, j}$ and $\cp_{j, p}$, then $(i, j, p) = (i_\eta, j_\eta, p_\eta)$ for some $\eta \in [\gamma]$, and thus $L_{i, p}$ is set to true. Therefore, the encoding is correct.

    We next count the number of clauses in constraints \eqref{eq:atmostkinL}--\eqref{eq:columnmatching}:
    \begin{itemize}
        \item Constraint \eqref{eq:atmostkinL} consists of $O(n \ell / m) = O(\sqrt[3]{kn^2})$ clauses.
        \item Constraint \eqref{eq:varimpliesitscolinM} consists of $n$ clauses.
        \item Constraint \eqref{eq:colcopyconjunctive} consists of $3n$ clauses.
        \item Constraint \eqref{eq:setpossiblecopycolsconjunctive} consists of $m = O(\sqrt[3]{kn^2})$ clauses.
        \item Constraint \eqref{eq:columnmatching}, using a linear-size encoding for $\amo$ (e.g.,~\Cref{thm:am1}), consists of
        \[
            \sum_{p \in [\ell]} O(|\mathcal{H}^{-1}(p)|) = O\left(\sum_{j \in [m]} |H_j|\right) = O(m) = O(\sqrt[3]{kn^2})
        \]
        clauses.
   \end{itemize}
    Thus, the total number of clauses is $4n + O(\sqrt[3]{kn^2})$, as desired.

    Finally, we count the number of auxiliary variables:
    \begin{itemize}
        \item We have $O(n \ell / m) = O(\sqrt[3]{kn^2})$ auxiliary variables of the form $L_{i,p}$.
        \item We have $m = O(\sqrt[3]{kn^2})$ auxiliary variables of the form $c_j$.
        \item We have $3m = O(\sqrt[3]{kn^2})$ auxiliary variables of the form $\cp_{j,p}$.
    \end{itemize}
    Thus, the total number of auxiliary variables is $O(\sqrt[3]{kn^2})$, as desired.
\end{proof}

\subsection{Disjunctive grid compression encoding}

\lemreedsolomon*
\begin{proof}
    Choose $q = \Theta(k \log_k n)$ to be a prime power, e.g., $q = 2^r$ for a suitable $r$. We describe how to construct a family of sets $\mathcal{H}$ such that $H_i \in \binom{\mathbb{F}_{q}^2}{q}$ for each $H_i \in \mathcal{H}$; then we can rename the elements of $\mathbb{F}_{q}^2$ to be elements of $[\ell]$, where $\ell = q^2 = \Theta\lrp{k^2\log^2_kn}$. Given a polynomial $f \in \mathbb{F}_{q}[x]$, let $S_f = \{(x, f(x)) : x \in \mathbb{F}_{q}\} \in \binom{\mathbb{F}_{q}^2}{q}$. Then, let
    \[
        \mathcal{H} = \left\{S_f : f \in \mathbb{F}_{q}[x],\; \deg(f) < \ceil{\frac{q}{k-1}} \right\}.
    \]
    Note that a polynomial of degree strictly less than $d$ is determined by the choice of $d$ coefficients, so $|\mathcal{H}| = q^{\ceil{q/(k-1)}} = \Theta(k \log_k n)^{\Theta(\log_k n)} \ge n = \omega(\sqrt{nk\log_kn})$ for suitable choices of constants.
    
    We claim that $\mathcal{H}$ is $(k-1)$-cover-free. Indeed, if $S_{f_k} \subseteq S_{f_1} \cup S_{f_2} \cup \dots \cup S_{f_{k-1}}$ for distinct $S_{f_1},\dots,S_{f_k} \in \mathcal{H}$, then by the pigeonhole principle, $|S_{f_k} \cap S_{f_j}| \ge \ceil{q/(k-1)}$ for some $j \in [k-1]$. This implies that $f_k - f_j$ has at least $\ceil{q/(k-1)}$ roots in $\mathbb{F}_{q}$, a contradiction.
\end{proof}

To prove \Cref{thm:disjunctive-compression}, we first prove that the disjunctive grid compression encoding is unsatisfiable when more than $k$ variables are true:
\begin{lemma} \label{lem-dis-grid-unsat}
    The disjunctive grid compression encoding for $\amk(x_1,\dots,x_n)$ is unsatisfiable when more than $k$ input variables are true. \label{lem:disjointgridcompression1}
\end{lemma}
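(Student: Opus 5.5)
The plan is to argue directly by contradiction, in the same way as \Cref{lem:conjunctivegridcompression1}. We construct an injection from the true input variables into the true $L$ variables, so that \eqref{eq:atmostkinL} is violated. The $\ov_p$ variables will play the role that the $\cp$ variables and \eqref{eq:columnmatching} played in the conjunctive version. So suppose that at least $k+1$ input variables are true. Let $M_{i_1,j_1},\dots,M_{i_{k+1},j_{k+1}}$ be distinct true input variables, and fix an arbitrary assignment to the auxiliary variables that satisfies all the constraints; we aim to derive a contradiction. Write $J := \{j_1,\dots,j_{k+1}\}$ for the set of occupied column indices. By \eqref{eq:varimpliesitscolinM}, $c_j$ is true for every $j \in J$.

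Next, for each $\gamma \in [k+1]$, clause \eqref{eq:colcopydisjunctive} applied to the true $M_{i_\gamma,j_\gamma}$ gives some $p(\gamma) \in H_{j_\gamma}$ with $L_{i_\gamma,p(\gamma)}$ true; if there are several, we pick one arbitrarily. By \eqref{eq:removeoverloadedcols}, $\ov_{p(\gamma)}$ must then be false. By \eqref{eq:getoverloadedcols}, at most one variable among $\{c_j \mid j \in \mathcal{H}^{-1}(p(\gamma))\}$ is true. Since every $c_j$ with $j \in J$ is true, the key consequence is:
\begin{quote}
$j_\gamma$ is the \emph{unique} index $j \in J$ with $p(\gamma) \in H_j$.
\end{quote}

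It remains to show that $\gamma \mapsto (i_\gamma, p(\gamma))$ is injective. Suppose $(i_\gamma,p(\gamma)) = (i_\delta,p(\delta))$. Then $p(\gamma) \in H_{j_\gamma} \cap H_{j_\delta}$, where both $j_\gamma$ and $j_\delta$ lie in $J$. By the uniqueness above, $j_\gamma = j_\delta$. Together with $i_\gamma = i_\delta$, this gives that $M_{i_\gamma,j_\gamma}$ and $M_{i_\delta,j_\delta}$ are the same variable, so $\gamma = \delta$. Note that two true variables in the same column may be routed to different columns $p$ of $L$; this is harmless, because their row indices already distinguish them. Hence at least $k+1$ distinct variables $L_{i_\gamma,p(\gamma)}$ are true, contradicting \eqref{eq:atmostkinL}.

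I do not expect a real obstacle. The only delicate point is to use the overload constraint with respect to \emph{all} occupied columns, not just pairs of true variables. Notably, the $(k-1)$-cover-free property of \Cref{lem:disjointgridcompression3} is not needed for this direction. It will be used only for the converse (satisfiability when at most $k$ inputs are true), where each occupied $H_j$ must contain a point not covered by the other occupied sets.
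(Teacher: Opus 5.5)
Your proof is correct and matches the paper's argument step for step. The paper uses \eqref{eq:varimpliesitscolinM} to get each $c_{j_\gamma}$, \eqref{eq:colcopydisjunctive} to pick $p(\gamma)$, and \eqref{eq:removeoverloadedcols} with \eqref{eq:getoverloadedcols} to show $p(\gamma)$ determines the occupied column; it then asserts injectivity of $(i_\gamma,j_\gamma)\mapsto(i_\gamma,p(\gamma))$ in one line, which you spell out explicitly, contradicting \eqref{eq:atmostkinL}.
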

\begin{proof}
    Suppose that at least $k+1$ input variables are set to true, and let $M_{i_1,j_1}, \dots, M_{i_{k+1},j_{k+1}}$ be distinct true input variables. By \eqref{eq:varimpliesitscolinM}, we have that $c_{j_\gamma}$ is true for every $\gamma \in [k+1]$. By \eqref{eq:colcopydisjunctive}, we have $L_{i_\gamma,p}$ for some $p \in H_{j_\gamma}$ for every $\gamma \in [k+1]$, and we denote by $p(\gamma)$ an arbitrary such $p$ chosen for $\gamma$. By \eqref{eq:removeoverloadedcols}, we have $\overline{\ov_{p(\gamma)}}$ for each $\gamma \in [k+1]$. Then, in turn, \eqref{eq:getoverloadedcols} implies that, for each $\gamma \in [k+1]$, we have at most one $c_j$ true such that $p(\gamma) \in H_j$. Thus, the map $(i_\gamma,j_\gamma) \mapsto (i_\gamma,p(\gamma))$ is injective. Thus, at least $k+1$ distinct variables $L_{i_\gamma,p(\gamma)}$ are set to true, contradicting \eqref{eq:atmostkinL}.
\end{proof}

Next, we prove that the disjunctive grid compression encoding is satisfiable when at most $k$ variables are true:

\begin{lemma} \label{lem-dis-grid-sat}
    The disjunctive grid compression encoding for $\amk(x_1,\dots,x_n)$ is satisfiable if $k$ or fewer of the input variables are true.
\label{lem:disjointgridcompression2}
\end{lemma}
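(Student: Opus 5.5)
We build an explicit satisfying assignment, mirroring the satisfiability half of the proof of \Cref{thm:conjunctive-compression}, with the $(k-1)$-cover-free property of \Cref{lem:disjointgridcompression3} taking the place of the transversal property of \Cref{lem:conjunctivegridcompression2}. Let $M_{i_1,j_1},\dots,M_{i_\gamma,j_\gamma}$ be the true input variables, with $\gamma \le k$, and let $J := \{j_1,\dots,j_\gamma\}$ be the set of occupied columns, so $|J| \le k$.

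\textbf{Column indicators and private columns.} Set $c_j$ to true iff $j \in J$; this satisfies \eqref{eq:varimpliesitscolinM}. For each $j \in J$, we need a column $p_j \in H_j$ that lies in no other $H_{j'}$ with $j' \in J\setminus\{j\}$. This follows from the cover-free property: the other sets number at most $k-1$, so $H_j \not\subseteq \bigcup_{j' \in J\setminus\{j\}} H_{j'}$.

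If $|J|<k$, we pad the family up to $k$ sets using extra members of $\mathcal{H}$. This is possible because $|\mathcal{H}|\ge m \ge k$. Since the members are distinct, we may also take the $H_j$ to be distinct, which is implicit in the statement of \Cref{lem:disjointgridcompression3}. Because of the exclusivity, the chosen $p_j$ are automatically pairwise distinct.

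\textbf{Overload and copy variables.} Set $\ov_p$ to true iff $|\{j \in J : p \in H_j\}| \ge 2$. Then \eqref{eq:getoverloadedcols} holds: whenever $\ov_p$ is false, at most one $c_j$ with $p \in H_j$ is true. Set $L_{i_\eta,p_{j_\eta}}$ to true for each $\eta \in [\gamma]$, and set every other $L$ variable to false.

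We then check the remaining clauses.
\begin{itemize}
\item For \eqref{eq:colcopydisjunctive}: each true $M_{i,j}$ has $L_{i,p_j}$ true with $p_j \in H_j$. False $M$ variables satisfy their clauses trivially.
\item For \eqref{eq:removeoverloadedcols}: the only true $L$ variables sit in columns $p_j$ with $j \in J$. By the private-column choice, such a $p_j$ lies in exactly one $H_{j'}$ with $j' \in J$, namely $H_j$, so $\ov_{p_j}$ is false.
\item For \eqref{eq:atmostkinL}: the true $L$ variables are indexed by the pairs $(i_\eta, p_{j_\eta})$, so there are at most $\gamma \le k$ of them. This holds because the map $(i,j)\mapsto(i,p_j)$ is injective on true inputs, as the $p_j$ are distinct.
\end{itemize}

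\textbf{Remaining auxiliary variables.} The auxiliary variables inside the encodings of the $\amk$ and $\amo$ subconstraints can then be extended to satisfying values, since each encoded constraint holds by the checks above.

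The only step requiring care is the existence of the private columns, and it reduces directly to cover-freeness. The rest is routine bookkeeping.
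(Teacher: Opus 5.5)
Your proposal is correct and takes essentially the same approach as the paper: set $c_j$ for the occupied columns, use cover-freeness to pick a private element $p_j \in H_j$ for each occupied column, and turn on exactly one $L$ variable per true input. The differences are minor and both versions check out. You set $\ov_p$ true exactly when $p$ lies in at least two occupied $H_j$, whereas the paper makes $\ov_p$ false only at the chosen private columns. You also spell out the padding/monotonicity step that the paper leaves implicit when it applies cover-freeness to fewer than $k$ sets.
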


\begin{proof}
    Suppose $M_{i_1,j_1}, \dots, M_{i_\gamma,j_\gamma}$ are the only true input variables for some $\gamma \leq k$. We wish to find an assignment that satisfies the constraints. We start by setting $c_{j}$ to true if and only if $j \in \{j_1, \dots, j_\gamma\}$; this satisfies \eqref{eq:varimpliesitscolinM}.

    By \Cref{lem:disjointgridcompression3}, no set from $\{H_{j_1},\dots,H_{j_\gamma}\}$ is covered by the remaining sets, of which there are at most $k-1$. Thus, for any $H_{j_\eta} \in \{H_{j_1},\dots,H_{j_\gamma}\}$ there exists some element $p \in H_{j_\eta}$ such that
    \begin{equation} \label{eq:cover-free}
        p \notin \bigcup_{\substack{\eta' \in [\gamma] \\ j_{\eta'} \neq j_\eta}} H_{j_{\eta'}},
    \end{equation}
    and we denote by $p(j_\eta)$ an arbitrary such $p$ chosen for $\gamma$. Therefore, $j_\eta \mapsto p(j_\eta)$ is an injection from $\{j_1,\dots,j_\gamma\}$ to $[\ell]$ such that $p(j_\eta) \in H_{j_\eta}$ for each $\eta \in [\gamma]$. We set $L_{i_\eta,p(j_\eta)}$ to true for all $\eta \in [\gamma]$ and all other $L$ variables to false. Observe that we satisfy \eqref{eq:atmostkinL} and \eqref{eq:colcopydisjunctive}. Also, set $\ov_{p(j_\eta)}$ to false for each $\eta \in [\gamma]$, and set all other $\ov$ variables to true. This satisfies \eqref{eq:removeoverloadedcols} by construction.
    For~\eqref{eq:getoverloadedcols}, it suffices to consider the cases in which $\ov_p$ is set to false, which by construction means $p = p(j_\eta)$ for some $\eta \in [\gamma]$. Then, \eqref{eq:cover-free} implies that $\{j_1,\dots,j_\gamma\} \cap \mathcal{H}^{-1}(p(j_\eta)) = \{j_\eta\}$, so $c_{j_\eta}$ is the only true variable in $\{c_j \mid j \in \mathcal{H}^{-1}(p)\}$. Thus, \eqref{eq:getoverloadedcols} is satisfied.
\end{proof}

\noindent
We can now prove~\Cref{thm:disjunctive-compression}.

\thmdisjunctivegridcompression*
\begin{proof}
    The correctness of the encoding was proven in \Cref{lem-dis-grid-unsat,lem-dis-grid-sat}. It remains to count the number of clauses and auxiliary variables. First, we count the number of clauses:
    \begin{itemize}
        \item Constraint \eqref{eq:atmostkinL} consists of $O(n \ell / m) = O\lrp{\sqrt{nk^3\log^3_k n}}$ clauses.
        \item Constraint \eqref{eq:varimpliesitscolinM} consists of $n$ clauses.
        \item Constraint \eqref{eq:colcopydisjunctive} consists of $n$ clauses.
        \item Constraint \eqref{eq:getoverloadedcols} consists of $O\lrp{\sum_{j \in [m]} |H_j|} = O(m k \log_k n) = O\lrp{\sqrt{nk^3\log^3_k n}}$ clauses.
        \item Constraint \eqref{eq:removeoverloadedcols} consists of $O(n \ell / m) = O\lrp{\sqrt{nk^3\log^3_k n}}$ clauses.
    \end{itemize}
    Thus, the total number of clauses is $2n + O\lrp{\sqrt{nk^3\log^3_k n}}$, as desired

    Finally, we count the number of auxiliary variables:
    \begin{itemize}
        \item We have $O(n \ell / m) = O\lrp{\sqrt{nk^3\log^3_k n}}$ auxiliary variables of the form $L_{i,p}$.
        \item We have $m = O(\sqrt{nk\log_kn})$ auxiliary variables of the form $c_j$.
        \item We have $\ell = O(k^2\log^2_kn)$ auxiliary variables of the form $\ov_p$.
    \end{itemize}
    Thus, the total number of auxiliary variables is $O\lrp{\sqrt{nk^3\log^3_k n}}$, as desired.
\end{proof}

\section{\texorpdfstring{Proof of \Cref{thm:am1-clique}}{Proof of the clique encoding theorem}}
\thmamoclique*
\begin{proof}
    Let $K_p$ be the complete graph on $p := \ceil{\sqrt{2n}}+1$ vertices. Then, $E(K_p) = \binom{p}{2} \ge n$. Assign the variables $x_1,\dots,x_n$ to distinct edges of $G$, renaming the variables so that $x_{\{i,j\}}$ is the variable assigned to the edge $\{i,j\}$. Let $E$ be the set of edges of $G$ to which some variable is assigned. Discard any vertices of $G$ not incident to an edge from $E$. Introduce auxiliary variables $v_i$ for each $i \in V(G)$. Our encoding is then as follows:
    \[
        \textsf{CE}(\{x_{\{i,j\}} \mid \{i,j\} \in E\}) := \left(\bigwedge_{\{i,j\} \in E} (\overline{x_{\{i, j\}}} \lor v_i) \land (\overline{x_{\{i, j\}}} \lor v_j) \right) \land \amt(v_1,\dots,v_p),
    \]
    where we use the disjunctive grid compression encoding for $\amt(v_1,\dots,v_p)$, which uses $2p + \widetilde{O}(\sqrt{p})$ clauses and $\widetilde{O}(\sqrt{p})$ auxiliary variables. The total number of clauses is $2n + 2p + \widetilde{O}(\sqrt{p}) = 2n + 2 \sqrt{2n} + \widetilde{O}(\sqrt[4]{n})$, and the total number of auxiliary variables is $p + \widetilde{O}(\sqrt{p}) = \sqrt{2n} + \widetilde{O}(\sqrt[4]{n})$.

    It remains to argue that the encoding is correct. Suppose that at most one input variable is true. If zero input variables are true, then $\textsf{CE}(\{x_{\{i,j\}} \mid \{i,j\} \in E\})$ is satisfiable by setting all of the $v$ auxiliary variables to false. Otherwise, exactly one input variable $x_{\{i,j\}}$ is true. Then, $\textsf{CE}(\{x_{\{i,j\}} \mid \{i,j\} \in E\})$ is satisfiable by setting $v_i$ and $v_j$ to true and all of the other $v$ auxiliary variables to false.

    On the other hand, suppose that at least two input variables $x_{\{i,j\}}$ and $x_{\{i',j'\}}$ are true. Then, $v_i$, $v_j$, $v_{i'}$, and $v_{j'}$ must be true. Since $|\{i,j,i',j'\}| \ge 3$, this falsifies the $\amt(v_1,\dots,v_p)$ constraint.
\end{proof}


\section{Disjunctive Switching}\label{app:disjunctive-switching}
Consider the following excerpt of pseudocode, representing an algorithm we wish to encode in CNF:
\begin{pythonPrime}
# Assume functions f1: [n] -> Y1, f2: [n] -> Y2, f3: [n] -> Y3
# with Y1, Y2, and Y3 pairwise disjoint sets.
for i in [n]:
  if x[i]:
    switch state:
        case c1:
            y[f1(i)] = True
        case c2:
            y[f2(i)] = True
        case c3:
            y[f3(i)] = True
\end{pythonPrime}
Using variables $x_i$ and $y_k$, as well as $c_1, c_2, c_3$ to represent the cases, the direct encoding would be to simply enforce:
\begin{equation}\label{eq:pre-disj}
     \bigwedge_{i=1}^n \bigwedge_{j=1}^3 \lrp{(x_i \land c_j) \rightarrow y_{f_j(i)}}.
\end{equation}
Suppose that, in this context, we may assume that exactly one $c_j$ variable is true (i.e., the conditions are exhaustive and mutually exclusive) and that $y_{f_j(i)}$ is set to true only if $c_j$ is true for each $i \in [n]$ and $j \in [3]$ (e.g., \pythin{y[f2(i)]} can only be assigned  \pythin{True} under the  \pythin{c2} branch; no other part of the code can make it true). Then, the idea of disjunctive switching is that we may replace the $3n$ clauses from~\eqref{eq:pre-disj} with the $n + |Y_1| + |Y_2| + |Y_3|$ clauses
\begin{equation}\label{eq:disj-switching-2}
        \bigwedge_{i \in [n]} \lrp{x_i \rightarrow \bigvee_{j \in [3]} y_{f_j(i)}} \qquad \text{and} \qquad \bigwedge_{j \in [3]} \bigwedge_{k \in Y_j} \lrp{y_k \rightarrow c_j}.
\end{equation}
At a high level, the set of clauses on the left enforces that each active $x_i$ will activate at least one of the three possible $y$ variables, whereas the set of clauses on the right enforces that the activated $y$ variables are consistent with the unique condition $c_j$ that holds. If $|Y_1| + |Y_2| + |Y_3| < 2n$, then this transformation shrinks the encoding. Moreover, we have chosen 3 conditions to simplify our example, but the technique becomes more powerful when the number of conditions $C$ is large; in our applications, we will have $\sum_{j=1}^C |Y_j| = o(n)$, and thus achieve a $C \cdot n \to n + o(n)$ clause reduction.

Making an encoding amenable to disjunctive switching may require massaging it in a subtle way, so we regard disjunctive switching as a paradigm for constructing encodings rather than a formal transformation.

\section{Empirical Evaluation}\label{sec:empirical}

This section presents a \emph{preliminary} evaluation of our encodings in practice. We begin by stressing that an exhaustive evaluation of cardinality constraints encodings, including diverse families of instances, different solvers, and solver parameters, is a challenging task on its own, with full experimental papers dedicated to it~\cite{empirical-study, wynn2018comparisonencodingscardinalityconstraints, bittnerSATEncodingsAtMostk2019}. Our goal, therefore, is more modest: to obtain empirical evidence of whether our encodings for the $\amk$ constraint can be practical, at least in some cases, despite their lack of propagation completeness. 
We omit presenting experiments regarding $\amo$ encodings, since our improvements take place in the second-order term, and for most practical instances the number of clauses is almost the same as from the product encoding, with almost the same performance. We focus on the case $k=2$, since our goal is simply to perform an initial exploration 
of the landscape.

\subparagraph*{Instance Descriptions.} We experimented with the following families of instances:
\begin{itemize}
    \item (\textbf{Family L}) Given $n \geq 10k$ and $k \geq 2$, there are variables $x_1, \dots, x_n$ and we impose a single $\amk(x_1, \dots, x_n)$ constraint. To make the instance unsatisfiable, $k+1$ disjoint subsets $S_1, \dots, S_{k+1} \subseteq \{1, \dots, n\}$, with $|S_i| = 10$ for every $i \in [k+1]$, are chosen randomly, and we add one clause $\bigvee_{j \in S_i} x_j$ for each $S_i$. Concretely, we let $U_0 := \{1, \dots, n\}$, and then for $i \in [k+1]$, we sample a uniformly random size-10 subset $S_i$ of $U_{i-1}$, and then set $U_i := U_{i-1} \setminus S_i$. We also consider satisfiable instances by choosing $k$ disjoint subsets instead of $k+1$. Instances from this family are solvable for very large values of $n$ in both the SAT and UNSAT case, hence the name of the family.
    \item (\textbf{Family M}) This family of instances is reminiscent of pigeonhole principle formulas, yet designed for having more indirection. There are $M$ machines, $T$ jobs, and a ``capacity''  parameter $c$. Consider variables $x_{m, t}$, for $1\leq m \leq M$ and $1 \leq t \leq T$ corresponding to whether job $t$ runs on machine $m$. Then, for each job $t$, we add clause $\bigvee_{m=1}^M x_{m, t}$ to enforce that it runs on some machine. Next, for each machine $m$, a variable $a_m$ represents that the machine is \emph{active}, and only active machines can run jobs, as enforced by clauses $\overline{x_{m, t}} \lor a_m$ for every $1 \leq t \leq T$. Finally, we enforce that at most $k$ machines are active, and that each machine receives at most $c$ jobs, which makes the instance unsatisfiable if we choose $T = k \cdot c + 1$ and satisfiable if $T = k \cdot c$.
    \item (\textbf{Family D}) Given $n$, we build a ``layered'' acyclic graph as follows: layer $1$ contains a single vertex $s$, layers $2$ through $k$ contain $n$ vertices each, and layer $k+1$ contains a single vertex $t$. Each vertex in layer $i$ has a directed edge toward each vertex in layer $i+1$, for $1 \leq i \leq k$. To each vertex $v$ we associate a variable $x_v$ representing that the vertex is ``active'', and we enforce that for each active vertex $v$ other than $t$, if $v$ is active then at least one of its out-neighbors is active. To make the instance unsatisfiable, we enforce that $s$ is active, and at most $k$ vertices are active.
\end{itemize}

\subparagraph*{Hardware and Software.} We ran all experiments on a MacBook Pro personal computer, with 16 GB of RAM, an Apple M5 chip, and running macOS Tahoe 26.2; all experiments were single-threaded. We used the PySAT library~\cite{imms-sat18} in its version \texttt{python-sat 1.8.dev27}, together with its included solver \textsf{CaDiCaL} version 1.9.5. For validation purposes, since the PySAT version is by now rather old, we also solved instances with~\textsf{kissat} version 4.0.4~\cite{BiereFallerFazekasFleuryFroleyksPollitt-SAT-Competition-2024-solvers}, and found the runtime trends to be similar to those of~\textsf{CaDiCaL}. Hence, our code uses PySAT's \textsf{CaDiCaL} to make our experiments directly runnable.

\subparagraph*{Encodings and Reproducibility.} We experimented with the different $\amk$ encodings available in PySAT (see \url{https://pysathq.github.io/docs/html/api/card.html#pysat.card.CardEnc.atmost}). Out of these, the ones that performed best are the sequential counter (\texttt{seqcounter}) encoding, the cardinality network encoding (\texttt{cardnetwrk}), and the modulo totalizer encoding (\texttt{mtotalizer}). Since PySAT's implementation of the sequential counter is slow to generate the encoding, we re-implemented this encoding generator from scratch.
To avoid cluttering, we sometimes present only the best-performing encoding from the PySAT suite, as in~\Cref{sec:discussion}. Outside of the PySAT suite of encodings, we include experiments with:
\begin{itemize}
    \item The generalized product encoding (GP) of Frisch and Giannaros, as described in~\Cref{sec:gdpe}.
    \item The disjunctive generalized product encoding (DGP).
    \item The disjunctive grid compression encoding (DGC). 
\end{itemize} 
With respect to the DGC encoding, for the particular case of $k = 2$ we include an optimization in the construction of the set collection $\mathcal{H}$; we simply take distinct sets of size $2$, which is enough since a $(k-1)$-cover-free family in this case is a $1$-cover-free family, which is to say, a family of sets such that none contains another.\footnote{A family of sets with this property is called a \emph{Sperner family}~\cite{sperner}.}
Moreover, the particular choice of the parameters $m$ and $\ell$ is done by a grid-search procedure aiming to minimize the number of clauses. 

While the GP encoding is implemented recursively, for the DGP and DGC encodings we use the sequential counter encoding for the recursive $\amk$ constraints. We experimented with deeper recursions without seeing any significant advantage in performance.

\textbf{Our code:} \url{https://github.com/bsubercaseaux/cardinality-encodings/}

\subsection{Results}

\Cref{tab:atmost2-clause-counts,tab:atmost2-aux-counts} report the encoding-size data $k=2$ over large $n$ values, comparing DGC, DGP, GP, and the sequential counter baseline.   \Cref{tab:family-L-unsat,tab:family-L-sat,tab:family-M-sat,tab:family-M-unsat,tab:family-D} present runtime data for the different families, both in SAT and UNSAT regimes when applicable. As can be observed in~\Cref{tab:family-M-sat,tab:family-M-unsat}, the encodings lacking propagation completeness perform significantly worse for Family M, most especially in the UNSAT case; this family is a variant of the pigeonhole principle, and its UNSAT case seems to require an exhaustive exploration of the search space. Our experiments suggest that in these cases, propagation completeness might indeed be necessary for efficiency. On the other hand, per~\Cref{tab:family-L-unsat,tab:family-L-sat} our encodings perform very well on Family L; a relatively easy family where $n$ gets to be very large. Finally, as shown in~\Cref{tab:family-D}, Family D represents a more mixed case, in which our encodings perform slightly worse than the best encoding from PySAT yet slightly better than other PySAT encodings.

\begin{table}[t]
\centering
\caption{Clause counts for $\amt$ encodings}
\label{tab:atmost2-clause-counts}
\small
\setlength{\tabcolsep}{5pt}
\begin{tabular}{r|rrrr}
\toprule
$n$ & Seq. Counter & GP & DGP & DGC \\
\midrule
200,000 & 999,993 & 654,117 & 462,163 & 448,996 \\
400,000 & 1,999,993 & 1,273,908 & 897,953 & 877,578 \\
600,000 & 2,999,993 & 1,892,916 & 1,329,347 & 1,301,906 \\
800,000 & 3,999,993 & 2,506,839 & 1,754,915 & 1,723,022 \\
1,000,000 & 4,999,993 & 3,120,159 & 2,179,177 & 2,143,170 \\
1,200,000 & 5,999,993 & 3,737,979 & 2,605,203 & 2,561,360 \\
1,400,000 & 6,999,993 & 4,349,103 & 3,024,873 & 2,979,142 \\
1,600,000 & 7,999,993 & 4,959,408 & 3,445,443 & 3,395,660 \\
1,800,000 & 8,999,993 & 5,571,486 & 3,866,913 & 3,811,884 \\
2,000,000 & 9,999,993 & 6,181,791 & 4,284,737 & 4,227,056 \\
2,200,000 & 10,999,993 & 6,793,356 & 4,707,827 & 4,641,736 \\
2,400,000 & 11,999,993 & 7,401,942 & 5,122,113 & 5,056,372 \\
2,600,000 & 12,999,993 & 8,011,734 & 5,541,665 & 5,470,200 \\
2,800,000 & 13,999,993 & 8,622,291 & 5,956,707 & 5,883,808 \\
3,000,000 & 14,999,993 & 9,232,587 & 6,377,267 & 6,297,534 \\
\bottomrule
\end{tabular}
\end{table}

\begin{table}[t]
\centering
\caption{Auxiliary-variable counts for $\amt$ encodings}
\label{tab:atmost2-aux-counts}
\small
\setlength{\tabcolsep}{5pt}
\begin{tabular}{r|rrrr}
\toprule
$n$ & Seq. Counter & GP & DGP & DGC \\
\midrule
200,000 & 399,998 & 20,955 & 31,205 & 24,656 \\
400,000 & 799,998 & 27,552 & 49,130 & 38,981 \\
600,000 & 1,199,998 & 33,888 & 64,849 & 51,173 \\
800,000 & 1,599,998 & 38,529 & 77,649 & 61,751 \\
1,000,000 & 1,999,998 & 42,969 & 89,794 & 71,837 \\
1,200,000 & 2,399,998 & 48,909 & 102,821 & 80,954 \\
1,400,000 & 2,799,998 & 52,617 & 112,666 & 89,867 \\
1,600,000 & 3,199,998 & 56,052 & 122,961 & 98,132 \\
1,800,000 & 3,599,998 & 60,078 & 133,706 & 106,250 \\
2,000,000 & 3,999,998 & 63,513 & 142,626 & 113,840 \\
2,200,000 & 4,399,998 & 67,368 & 154,181 & 121,202 \\
2,400,000 & 4,799,998 & 70,230 & 161,330 & 128,534 \\
2,600,000 & 5,199,998 & 73,494 & 171,114 & 135,446 \\
2,800,000 & 5,599,998 & 77,013 & 178,641 & 142,262 \\
3,000,000 & 5,999,998 & 80,445 & 188,929 & 149,129 \\
\bottomrule
\end{tabular}
\end{table}

\begin{table}[htbp]
\centering
\caption{Runtime comparison for Family L (UNSAT). Times are in milliseconds.}
\begin{tabular}{r | r r r r}
\hline
    $n$ & Seq. Counter & GP & DGP & DGC \\
\hline
    100,000 & 21 & 160 & 21 & 12 \\
    200,000 & 40 & 396 & 49 & 28 \\
    300,000 & 63 & 746 & 77 & 53 \\
    400,000 & 81 & 1,090 & 105 & 55 \\
    500,000 & 101 & 1,622 & 135 & 71 \\
    600,000 & 139 & 1,929 & 190 & 97 \\
    700,000 & 146 & 2,504 & 245 & 102 \\
    800,000 & 164 & 2,550 & 287 & 121 \\
    900,000 & 200 & 2,691 & 307 & 115 \\
    1,000,000 & 218 & 3,171 & 252 & 134 \\
    1,100,000 & 232 & 3,939 & 326 & 170 \\
    1,200,000 & 262 & 4,185 & 358 & 169 \\
    1,300,000 & 277 & 3,816 & 438 & 170 \\
    1,400,000 & 328 & 5,658 & 618 & 227 \\
    1,500,000 & 355 & 5,163 & 483 & 252 \\
    1,600,000 & 371 & 5,880 & 598 & 239 \\
    1,700,000 & 410 & 6,570 & 641 & 257 \\
    1,800,000 & 445 & 6,209 & 566 & 285 \\
    1,900,000 & 474 & 6,756 & 582 & 240 \\
    2,000,000 & 501 & 6,503 & 571 & 254 \\
    2,100,000 & 539 & 8,227 & 623 & 323 \\
    2,200,000 & 573 & 7,015 & 737 & 337 \\
    2,300,000 & 610 & 9,410 & 769 & 334 \\
    2,400,000 & 625 & 8,658 & 798 & 346 \\
    2,500,000 & 677 & 9,856 & 826 & 378 \\
    2,600,000 & 803 & 9,960 & 856 & 387 \\
    2,700,000 & 778 & 10,839 & 876 & 419 \\
    2,800,000 & 856 & 12,628 & 976 & 503 \\
    2,900,000 & 850 & 12,601 & 945 & 401 \\
    3,000,000 & 848 & 10,013 & 1,051 & 434 \\
\hline
\end{tabular}
\label{tab:family-L-unsat}
\end{table}

\begin{table}[htbp]
\centering
\caption{Runtime comparison for Family L (SAT). Times are in milliseconds.}
\begin{tabular}{r | r r r r}
\hline
    $n$ & Seq. Counter & GP & DGP & DGC \\
\hline
    100,000 & 7 & 4 & 3 & 3 \\
    200,000 & 17 & 9 & 7 & 7 \\
    300,000 & 24 & 15 & 11 & 10 \\
    400,000 & 32 & 22 & 15 & 13 \\
    500,000 & 43 & 29 & 20 & 17 \\
    600,000 & 50 & 34 & 24 & 20 \\
    700,000 & 62 & 40 & 53 & 22 \\
    800,000 & 76 & 46 & 31 & 25 \\
    900,000 & 82 & 51 & 34 & 28 \\
    1,000,000 & 88 & 59 & 71 & 32 \\
    1,100,000 & 119 & 66 & 47 & 43 \\
    1,200,000 & 124 & 75 & 47 & 46 \\
    1,300,000 & 121 & 90 & 50 & 47 \\
    1,400,000 & 170 & 86 & 52 & 52 \\
    1,500,000 & 193 & 94 & 59 & 57 \\
    1,600,000 & 214 & 99 & 61 & 55 \\
    1,700,000 & 228 & 136 & 67 & 61 \\
    1,800,000 & 227 & 114 & 84 & 65 \\
    1,900,000 & 244 & 119 & 75 & 159 \\
    2,000,000 & 284 & 162 & 79 & 92 \\
    2,100,000 & 268 & 136 & 88 & 74 \\
    2,200,000 & 272 & 144 & 91 & 78 \\
    2,300,000 & 305 & 154 & 96 & 83 \\
    2,400,000 & 331 & 156 & 114 & 85 \\
    2,500,000 & 346 & 183 & 112 & 107 \\
    2,600,000 & 356 & 176 & 126 & 100 \\
    2,700,000 & 386 & 179 & 112 & 94 \\
    2,800,000 & 444 & 204 & 122 & 99 \\
    2,900,000 & 464 & 201 & 136 & 98 \\
    3,000,000 & 457 & 214 & 131 & 106 \\
\hline
\end{tabular}
\label{tab:family-L-sat}
\end{table}

\begin{table}[htbp]
\centering
\caption{Runtime comparison for Family M (UNSAT). Times are in milliseconds.}
\begin{tabular}{r | r r r r r r}
\hline
    $n$ & Seq. Counter & GP & DGP & DGC & cardnetwrk & mtotalizer \\
\hline
    20 & 12 & 19 & 17 & 17 & 19 & 12 \\
    30 & 32 & 46 & 52 & 44 & 49 & 33 \\
    40 & 94 & 86 & 103 & 96 & 85 & 76 \\
    50 & 134 & 174 & 217 & 149 & 137 & 128 \\
    60 & 240 & 243 & 417 & 398 & 248 & 231 \\
    70 & 308 & 423 & 730 & 522 & 408 & 347 \\
    80 & 510 & 731 & 1,207 & 1,137 & 516 & 500 \\
    90 & 775 & 886 & 3,841 & 1,213 & 764 & 714 \\
    100 & 1,073 & 1,201 & 4,242 & 10,474 & 1,095 & 944 \\
\hline
\end{tabular}
\label{tab:family-M-unsat}
\end{table}

\begin{table}[htbp]
\centering
\caption{Runtime comparison for Family M (SAT). Times are in milliseconds.}
\begin{tabular}{r | r r r r r r}
\hline
    $n$ & Seq. Counter & GP & DGP & DGC & cardnetwrk & mtotalizer \\
\hline
    20 & 0 & 0 & 0 & 0 & 0 & 0 \\
    1,020 & 0 & 0 & 1 & 3 & 0 & 0 \\
    2,020 & 0 & 0 & 2 & 10 & 1 & 0 \\
    3,020 & 0 & 0 & 4 & 25 & 1 & 34 \\
    4,020 & 1 & 0 & 4 & 37 & 2 & 60 \\
    5,020 & 1 & 1 & 6 & 56 & 3 & 85 \\
    6,020 & 1 & 1 & 8 & 82 & 3 & 2 \\
    7,020 & 2 & 1 & 8 & 114 & 4 & 160 \\
    8,020 & 3 & 1 & 12 & 136 & 5 & 3 \\
    9,020 & 2 & 2 & 12 & 3,240 & 5 & 4 \\
    10,020 & 3 & 2 & 12 & 3,976 & 6 & 455 \\
    11,020 & 4 & 2 & 17 & 4,559 & 9 & 6 \\
    12,020 & 5 & 3 & 21 & 5,489 & 9 & 6 \\
    13,020 & 5 & 3 & 19 & 6,363 & 10 & 8 \\
    14,020 & 5 & 3 & 19 & 7,175 & 10 & 624 \\
\hline
\end{tabular}
\label{tab:family-M-sat}
\end{table}

\begin{table}[htbp]
\centering
\caption{Runtime comparison for Family D (UNSAT). Times are in milliseconds.}
\begin{tabular}{r | r r r r r}
\hline
    $n$ & Seq. Counter & GP & DGP & DGC & cardnetwrk \\
\hline
    500 & 14 & 76 & 88 & 51 & 113 \\
    1,000 & 83 & 337 & 409 & 211 & 132 \\
    1,500 & 348 & 718 & 792 & 1,258 & 1,221 \\
    2,000 & 1,098 & 1,551 & 2,729 & 2,847 & 845 \\
    2,500 & 2,254 & 3,469 & 5,340 & 2,157 & 6,497 \\
    3,000 & 3,992 & 7,102 & 12,451 & 5,093 & 7,950 \\
\hline
\end{tabular}
\label{tab:family-D}
\end{table}

\end{document}